\newtheorem{theorem}{Theorem}
\newtheorem{thm}[theorem]{Theorem}
\newtheorem{lemma}[theorem]{Lemma}
\newtheorem{cor}[theorem]{Corollary}
\theoremstyle{definition}
\newtheorem{example} {Example}
\newenvironment{changemargin}[2]{%
\begin{list}{}{%
\setlength{\topsep}{0pt}%
\setlength{\leftmargin}{#1}%
\setlength{\rightmargin}{#2}%
\setlength{\listparindent}{\parindent}%
\setlength{\itemindent}{\parindent}%
\setlength{\parsep}{\parskip}%
}%
\item[]}{\end{list}}
\def\M{{\mathcal M}}
\def\P{{\mathcal P}}
\def\Q{{\mathcal Q}}
\def\R{{\mathcal R}}
\def\S{{\mathcal S}}
\def\ext{\text{2-}ext}
\newcommand{\ie} [1] {\textit{i.e.,} #1}
\begin{document}

\title{Constructing Permutation Arrays using Partition and Extension}
\author{Sergey Bereg\thanks{
Department of Computer Science,
University of Texas at Dallas,
Box 830688,
Richardson, TX 75083
USA. {\quad }
Sergey Bereg was supported in part by NSF award CCF-1718994.}
\and Luis Gerardo Mojica$^*$
\and Linda Morales$^*$
\and Hal Sudborough$^*$
}

\date{}

\maketitle

\begin{abstract}
We give new lower bounds for $M(n,d)$, for various positive integers $n$ and $d$ with $n>d$, where $M(n,d)$ is the largest number of permutations on $n$ symbols with pairwise Hamming distance at least $d$. 
Large sets of permutations on $n$ symbols with pairwise Hamming distance $d$ are needed for constructing error correcting permutation codes, which have been proposed for power-line communications. 
Our technique, {\em partition and extension}, is universally applicable to constructing such sets for all $n$ and all $d$, $d<n$.
We describe three new techniques, {\em sequential partition and extension},  {\em parallel partition and extension}, and a {\em modified Kronecker product operation}, which extend the applicability of partition and extension in different ways. 
We describe how partition and extension gives improved lower bounds for $M(n,n-1)$ using mutually orthogonal Latin squares (MOLS).
We present efficient algorithms for computing new partitions: an iterative greedy algorithm and an algorithm based on integer linear programming. 
These algorithms yield partitions of positions (or symbols) used as input to our partition and extension techniques. 
We report many new lower bounds for $M(n,d)$ found using these techniques for $n$ up to $600$. 
\end{abstract}

\section{Introduction}
\label{s:intro}

The use of permutation codes for error correction of communications transmitted over power-lines has been suggested \cite{huczynska2006powerline,pavlidou2003power}. Due to the extreme noise in such channels, codewords are sent by frequency modulation rather than by amplitude modulation. Let's say we use frequencies $f_{0}, f_{1}, f_{2}, \dots, f_{n-1}$, which we view by the index set $Z_n =\{0,1,2,\cdots,n-1\}$. A permutation on $Z_n$, corresponding to a codeword, specifies in which order frequencies are to be sent. 

The Hamming distance between two permutations, $\sigma$ and $\tau$ on $Z_n$, denoted by $hd(\sigma,\tau)$, is the number of positions $x$ in $Z_n$ such that $\sigma(x)\neq \tau(x)$.  For example, the permutations on $Z_5$, $\sigma = 0~4~1~3~2$ and $\tau= 2~4~3~1~2$ have $hd(\sigma,\tau)=3$, as they differ in positions $0, 2,$ and $3$. 
A set $A$ of permutations on $Z_n$ (called a \textit{permutation array} or \textit{PA} for short) has Hamming distance $d$, denoted by $hd(A)\geq d$, if, for all $\sigma,\tau \in A, ~hd(\sigma,\tau) \geq d$. 
The maximum size of a PA $A$ on $Z_n$ with $hd(A) \geq  d$ is denoted by $M(n,d)$. 
Two PAs $A$ and $B$ have Hamming distance $d$, denoted by $hd(A,B)\geq d$, if, for all $\sigma \in A \text{ and } \tau \in B, ~hd(\sigma,\tau) \geq d$.

There are known combinatorial upper and lower bounds on $M(n,d)$, specifically the Gilbert-Varshamov \textit{(GV)} bounds, together with some recent improvements to the \textit{GV} bounds \cite{deza1978bounds,gao2013improvement,wang17}. 
Generally, these bounds are theoretical and are often improved by empirical techniques. 
Some exact values are known: \textit{(1)} for all $n$, $M(n,n)=n$, and, \textit{(2)} for $q$, a power of a prime, $M(q,q-1)=q(q-1)$ and $M(q+1,q-1)=(q+1)q(q-1)$. 
These exact values come from sharply $k-$transitive groups, for $k=2$ and $k=3$, namely the affine general linear group, denoted by \textit{AGL}, and the projective general linear group, denoted by \textit{PGL} \cite{deza1978bounds,conway1985atlas}. 
The Mathieu sharply $4-$transitive and $5-$transitive groups, give exact values for $M(11,8)=7920$ and $M(12,8)=95040$ \cite{cameron1999permutation,conway1985atlas,dixon1996permutation}.
It is not feasible to do an exhaustive search for good permutation arrays when $n$ becomes large. 
There are $n!$ permutations on $Z_n$, so the search space becomes computationally impractical. 
Some researchers have attempted to mitigate the problem by considering automorphisms groups and replacing permutations by sets of permutations. For example, in \cite{jani15}, Janiszczak, et. al. considered sets of permutations invariant under isometries to improve several lower bounds for $M(n,d)$, for various choices of $n$ and $d$, $n\leq22$.  Chu, Colbourn and Dukes \cite{chu2004constructions} and Smith and Montemanni \cite{smith2012new} also provide lower bounds obtained by the use of automorphism groups, and are also generally limited to small values of $n$.

There is also a connection between mutually orthogonal Latin squares (MOLS) and permutation arrays \cite{colbourn2004permutation}. Specifically, if there are $k$ mutually orthogonal Latin squares of side $n$, then $M(n,n-1)\geq kn$. Let $N(n)$ denote the number of mutually orthogonal Latin squares of side $n$. Finding better lower bounds for $N(n)$ is an on-going combinatorial problem of considerable interest world-wide \cite{von1983bemerkung,colbourn2006handbook}. 

Recently, we described a new technique, called {\em partition and extension} \cite{bmms17,bms17} and we illustrated how to use this technique to improve several lower bounds for $M(n,n-1)$ over those given by MOLS. Partition and extension operates on permutation arrays that can be decomposed into subsets with certain properties. (A description follows in Section \ref{s:prev}.) 
In its simplest form, partition and extension converts a PA $A$ on $n$ symbols with $hd(A)=d-1$, into a PA $A'$ on $n+1$ symbols with $hd(A')=d$. 
That is, when a PA $A$ exhibiting $M(n,d-1)$ meets the necessary conditions for simple partition and extension, the technique obtains a lower bound for $M(n+1,d)$. 

The purpose of this paper is to illustrate many new ways to use the partition and extension technique, and ways to generate appropriate partitions. 
We describe a method called {\em sequential partition and extension}, an improvement which uses iteration to extend permutation arrays by two or more symbols. 
When certain conditions are met, sequential partition and extension obtains new PAs on $n+2$ symbols with Hamming distance $d$ from PAs on $n$ symbols with Hamming distance $d-1$.
Another new technique, which we call {\em parallel partition and extension}, introduces several new symbols simultaneously. 
In some cases, parallel partition and extension on PAs on $n$ symbols with Hamming distance $d-r$ gives new lower bounds for $M(n+r,d)$.
We illustrate how to use partition and extension on blocks defined by cosets of the cyclic subgroup of the group $AGL(1,q)$, and on PAs created by a modified Kronecker product operation. 
We give new results derived from
partition and extension on blocks defined by mutually orthogonal Latin squares (MOLS).
We describe experimental algorithms and heuristics for creating partitions, including a greedy algorithm and an optimization approach based on Integer Linear Programming.
These new techniques improve on previously reported results  \cite{bms17}.

\section{Previous Results on Partition and Extension}
\label{s:prev}

We briefly describe the technique called {\em partition and extension}, which transforms a PA on $Z_n$ with Hamming distance $d-1$ into a PA on $Z_{n+1}$ with Hamming distance $d$.
A detailed description and several examples appear in \cite{bms17}. 
Throughout this paper we will use the phrase {\em simple partition and extension} to refer to this version of partition and extension.

Let $s$ be a positive integer. Let $M_1, M_2, \dots, M_s$ be an ordered list of $s$ pairwise disjoint 
permutation arrays on $Z_n$. Let ${\mathcal P}=(P_1, P_2, \dots, P_s )$ and 
${\mathcal Q}=(Q_1, Q_2, \dots, Q_s )$ 
be two ordered lists of subsets of $Z_n$ such that the sets in $\P$ and $\Q$ are partitions of $Z_n$.   
For each set $M_i$, $P_i$ is the set of locations and $Q_i$ is the set of symbols to be replaced by the new symbol $n$. 
When a permutation $\sigma$ in $M_i$ has a symbol $q$ in $Q_i$ appearing in a position $p$ in $P_i$,  $\sigma$ is extended (i.e., converted to a permutation $\sigma'$ on $n+1$ symbols) by moving $q$ to the end of the permutation and placing the symbol $n$ in position $p$. 
That is, the {\em extension of $\sigma$ by position $k$}, denoted by $ext_k(\sigma)=\sigma'$, 
is a permutation on $Z_{n+1}$ defined by: 
$\sigma'(k)=n,\sigma'(n)=\sigma(k)$, and for all $j$ $(0\le j<n,j\ne k)$, $\sigma'(j)=\sigma(j)$. 
We refer to this new permutation as $ext(\sigma)$ and $\sigma'$ interchangeably. 

For each $i$, let $covered(M_i)$ be the subset of $M_i$, defined by $covered(M_i)=\{\sigma \in M_i  ~|~ \exists p\in P_i, ~ \sigma(p)\in Q_i \}$.  
We say that a permutation $\sigma$ is {\em covered} if $\sigma \in covered(M_i)$ for some $i$.  
In order for a permutation $\sigma'$ to be included in the extended set of permutations on $Z_{n+1}$, $\sigma$ must be covered. 
That is, $\sigma$ must have one of the named symbols in one of the named positions. 
In general, when $\sigma\in covered(M_i)$, there may be more than one position $p\in P_i$ 
such that $\sigma(p)\in Q_i$.
If so, arbitrarily designate one of these positions
to cover $\sigma$.   

For our construction, we include an additional PA $M_{s+1}$, 
for which there is no corresponding set of positions or symbols. 
None of the permutations in $M_{s+1}$ are in any of the PAs $M_{i}$.  
The partition and extension operation adds the new symbol $n$ to the end of each permutation in $M_{s+1}$.
Every permutation in $M_{s+1}$ is used in the construction 
of our new PA. 
Thus, we create the list
${\mathcal M}=(M_1, M_2, \dots, M_{s+1})$, which includes this extra set. 

A triple $\Pi=({\mathcal M,P,Q})$ is a {\em distance-$d$ partition system} for $Z_n$ if it 
satisfies the following properties:
\begin{enumerate}[(I)]
\item \label{dist-d prop1} $\forall M_i\in{\mathcal M},~ hd(M_i)\ge d$, and
\item \label{dist-d prop2 } $\forall i,j ~(1\le i<j\le s+1),~ hd(M_i,M_j)\ge d-1$.
\end{enumerate}

Simple partition and extension uses sets $P_i$ and $Q_i$ in the two partitions $\P$
and $\Q$ to modify the covered permutations in $M_i$, for $1\le i\le s$, 
for the purpose of creating a new PA on $Z_{n+1}$ with Hamming distance $d$.  
Let $\Pi=(\M,\P,\Q)$ be a distance-$d$ partition system, where 
${\mathcal M}=(M_1, M_2, \dots, M_{s+1})$, for some $s$.  
We now show how the simple partition and extension operation creates a new permutation array 
$ext(\Pi)$ on $Z_{n+1}$.
For all $i$ $(1\le i\le s)$, let $ext(M_i)$ be the set of permutations defined by 
\[
ext(M_i)=\{ext(\sigma)~|~\sigma \in covered (M_i)\}.
\] 
For $M_{s+1}$, let $ext(M_{s+1})$ be the set of permutations on $Z_{n+1}$ defined by 
adding the symbol $n$ to the end of every permutation of $M_{s+1}$. 

Let $ext(\Pi)$ be the set of permutations on $Z_{n+1}$ defined by 
\[
ext(\Pi)=\bigcup_{i=1}^{s+1} ext(M_i).
\]
Note that 
\begin{equation}
\label{eq:sizeof_extPi}
|ext(\Pi)|=\sum_{i=1}^{s+1} |ext(M_i)|.
\end{equation}

\begin{thm} [{\cite{bms17}}]
\label{th:simple_pe}
Let $d$ be a positive integer. 
Let  $\Pi=(\M,\P,\Q)$ be a distance-$d$ partition system for $Z_n$, 
with $\M=(M_1,M_2,\dots,M_{s+1})$ for some positive integer $s$.
Let $ext(\Pi)$ be the PA on $Z_{n+1}$ created by simple partition and extension.
Then, $hd(ext(\Pi))\ge d$. 
\end{thm}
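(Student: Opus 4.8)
The plan is to verify the bound directly: take two distinct permutations $\sigma',\tau'\in ext(\Pi)$, write $\sigma'=ext(\sigma)$ and $\tau'=ext(\tau)$ for underlying permutations $\sigma,\tau$ on $Z_n$, and bound $hd(\sigma',\tau')$ from below by a case analysis on the blocks from which $\sigma'$ and $\tau'$ originate. Since $M_1,\dots,M_{s+1}$ are pairwise disjoint and each covered permutation is assigned a single covering position, every element of $ext(\Pi)$ has a well-defined source block and, when that block is $M_i$ with $i\le s$, a single position $k$ at which the symbol $\sigma(k)\in Q_i$ is moved to index $n$ while $n$ is inserted at position $k$. The heart of each case is to track only the few indices where $\sigma'$ and $\tau'$ can behave differently from $\sigma$ and $\tau$, namely the two covering positions and the new index $n$; at every other index the extended permutations compare exactly as the originals did.

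First I would dispose of the within-block cases, where the baseline distance is already at least $d$ by property~(I). If $\sigma,\tau\in M_i$ with $i\le s$ share a covering position $k$, then $\sigma'(k)=\tau'(k)=n$, the comparison formerly made at position $k$ is relocated to index $n$ (where $\sigma'(n)=\sigma(k)$ and $\tau'(n)=\tau(k)$), and no other index changes, so $hd(\sigma',\tau')=hd(\sigma,\tau)\ge d$. If instead their covering positions $k\ne\ell$ differ, a short count over the three indices $k,\ell,n$ shows that the two new forced disagreements, at $k$ and at $\ell$ where one permutation now carries $n$, at least compensate for the at most two original disagreements lost there, giving $hd(\sigma',\tau')\ge hd(\sigma,\tau)\ge d$. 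The case $\sigma,\tau\in M_{s+1}$ is immediate, since both merely append $n$ at index $n$, where they agree, leaving the distance unchanged.

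The crux, and the main obstacle, is the cross-block case $\sigma\in M_i$, $\tau\in M_j$ with $i\ne j$, where property~(II) only guarantees $hd(\sigma,\tau)\ge d-1$, so the extension must be shown to strictly increase the distance by at least one. This is exactly where both partition hypotheses are used. Because $\mathcal{P}$ is a partition, the covering positions satisfy $k\in P_i$ and $\ell\in P_j$ with $k\ne\ell$; hence $\sigma'(k)=n\ne\tau'(k)$ and $\tau'(\ell)=n\ne\sigma'(\ell)$, two forced disagreements. Because $\mathcal{Q}$ is a partition, the relocated symbols satisfy $\sigma(k)\in Q_i$ and $\tau(\ell)\in Q_j$ with $Q_i\cap Q_j=\emptyset$, so $\sigma'(n)=\sigma(k)\ne\tau(\ell)=\tau'(n)$, a third forced disagreement, this time at index $n$. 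Thus the indices $k,\ell,n$ contribute three disagreements to $hd(\sigma',\tau')$ while removing at most two from the original count at $k,\ell$, yielding $hd(\sigma',\tau')\ge hd(\sigma,\tau)+1\ge d$. The remaining cross case $\sigma\in M_i$, $\tau\in M_{s+1}$ is similar but simpler: the forced disagreement at the single covering position $k$ together with the disagreement at index $n$, where $\sigma'(n)=\sigma(k)\ne n=\tau'(n)$, supplies the needed gain of one over the baseline $d-1$.

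Combining all cases gives $hd(\sigma',\tau')\ge d$ for every pair of distinct extended permutations, which is the claim. The only delicate point is the bookkeeping at the three special indices; the substantive idea is that the disjointness of $\mathcal{P}$ and of $\mathcal{Q}$ is precisely what upgrades the weaker between-block hypothesis of $d-1$ to the required $d$.
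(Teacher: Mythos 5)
Your proof is correct, and it follows the same agreement-counting case analysis that the paper attributes to \cite{bms17} (the paper itself only sketches the same-block case, which matches your first case exactly, and the cross-block bookkeeping you give mirrors the argument used for Theorem \ref{th:iterative_pe_d2}). In particular you correctly identify the one essential point --- that disjointness of the $P_i$'s forces the two covering positions to differ and disjointness of the $Q_i$'s forces a disagreement at index $n$, upgrading $d-1$ to $d$ across blocks --- so nothing is missing.
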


The example in Table \ref{tb:toy_pe_ex} illustrates the application of Theorem \ref{th:simple_pe} to $\Pi=(\M,\P,\Q)$, where $\M=(M_1,M_2,M_3),~ \P=\{ \{0,2\},\{1,3 \} \}$ and $\Q=\{ \{0,1\},\{2,3 \} \}$. 
The column on the left shows the PAs $M_1$, $M_2$ and $M_2$.
$M_1$ is the cyclic subgroup of $AGL(1,4)$, and $M_2$ and $M_3$ are two of its cosets. 
The blue symbols are the symbols of $Q_i$ that occupy positions in $P_i$, for $i \in {1,2}$.
The column on the right shows the new PAs obtained by simple partition and extension on $\Pi$.
To create $ext(M_1)$ and $ext(M_2)$, the blue symbols are moved to the end of the permutations and a new symbol, 4, in red, occupies the positions vacated by the blue symbols.
To create $ext(M_3)$, the symbol 4 is simply appended to the end of each permutation.
Note that $hd(M_1) \geq 4$, $hd(M_2) \geq 4$ and $hd(M_1,M_2) \geq 3$,
so $\Pi$ is a distance-4 partition system.
By Theorem \ref{th:simple_pe}, $hd(ext(\Pi)) \geq 4$. 

\begin{table} 
\scriptsize
\centering
\setlength{\textfloatsep}{0.1cm}
\begin{tabular}{cc}
     \textsc{Initial Permutations in $\Pi$} & \textsc{Modified Permutations in $ext(\Pi)$} \\
      {$\!\begin{aligned}
        \arraycolsep=2.5pt
M_1=        
        \left[\begin{array}
          {cccc}
\color{blue}\mathbf{0} & 1 & 2 & 3 \\
\color{blue}\mathbf{1} & 0 & 3 & 2 \\
2 & 3 & \color{blue}\mathbf{0} & 1  \\
3 & 2 & \color{blue}\mathbf{1} & 0  \\
    \end{array}\right]\\
        
        \arraycolsep=2.5pt
M_2=        
        \left[\begin{array}
          {cccc}
0 & \color{blue}\mathbf{2} & 3 & 1 \\
1 & \color{blue}\mathbf{3} & 2 & 0 \\
2 & 0 & 1 & \color{blue}\mathbf{3} \\
3 & 1 & 0 & \color{blue}\mathbf{2} \\
    \end{array}\right]\\
    
        \arraycolsep=2.5pt
M_3=        
        \left[\begin{array}
          {cccc}
0 & 3 & 1 & 2 \\
1 & 2 & 0 & 3 \\
2 & 1 & 3 & 0 \\
3 & 0 & 2 & 1 \\
    \end{array}\right]\\
        
      \end{aligned}$}
      
        & {$
        \!
        \begin{aligned}
        \arraycolsep=2.5pt
ext(M_1)=        
        \left[\begin{array}
          {ccccc}
\color{red}\mathbf{4} & 
1 & 2 & 3 &
\color{blue}\mathbf{0}\\

\color{red}\mathbf{4} & 
0 & 3 & 2 &
\color{blue}\mathbf{1}\\

2 & 3 & 
\color{red}\mathbf{4} & 1 & 
\color{blue}\mathbf{0}  
\\
3 & 2 & 
\color{red}\mathbf{4} & 0 &  
\color{blue}\mathbf{1} \\
    \end{array}\right]\\

        \arraycolsep=2.5pt
ext(M_2)=        
        \left[\begin{array}
          {ccccc}
0 & 
\color{red}\mathbf{4} & 
3 & 1 &
\color{blue}\mathbf{2}\\

1 & 
\color{red}\mathbf{4} & 
2 & 0 &
\color{blue}\mathbf{3}\\

2 & 0 & 1 &
\color{red}\mathbf{4} &  
\color{blue}\mathbf{3}  \\

3 & 1 & 0 &
\color{red}\mathbf{4} &  
\color{blue}\mathbf{2}  \\

    \end{array}\right]\\
        
        \arraycolsep=2.5pt
ext(M_3)=        
        \left[\begin{array}
          {ccccc}
0 & 3 & 1 & 2 & \color{red}\mathbf{4} \\
1 & 2 & 0 & 3 & \color{red}\mathbf{4} \\
2 & 1 & 3 & 0 & \color{red}\mathbf{4} \\
3 & 0 & 2 & 1 & \color{red}\mathbf{4} \\
    \end{array}\right]\\
        
  \end{aligned}
  $
  }
\end{tabular}
\caption{An example of simple partition and extension on the distance-4 partition system $\Pi=(\M,\P,\Q)$, where $\M=(M_1,M_2,M_3),~ \P=\{ \{0,2\},~\{1,3 \} \}$ and $\Q=\{ \{0,1\},~\{2,3 \} \}$.
The column on the left shows the ordered list of PAs $\M$ consisting three PAs, $M_1$, $M_2$ and $M_3$  on $Z_4$ with $hd(M_i) \geq 4$, for $i \in \{1,2,3\}$, and $hd(\M) \geq 3$. The column on the right shows the new PAs, $ext(M_1)$, $ext(M_2)$ and $ext(M_3)$, obtained by simple partition and extension.  
By Theorem \ref{th:simple_pe}, $hd(ext(\Pi)) \geq 4$.}
\label{tb:toy_pe_ex}
\end{table}

\section{Sequential Partition and Extension}
\label{s:seq_pe}

Let $\mathfrak M=\{M_1, M_2, \dots M_t\}$, for some $t$, be a collection of PAs on $Z_n$ that satisfy Properties \ref{dist-d prop1} and \ref{dist-d prop2 } for a distance-$d$ partition system. 
The basic idea of sequential partition and extension is that we first create several disjoint PA's by simple partition and extension, each consisting of permutations on $n+1$ symbols with internal Hamming distance $d$. Then, we use partition and extension again on these PA's to get a larger PA on $n+2$ symbols and Hamming distance $d$.
Such an iterative application of partition and extension can produce interesting new results. 

Let $(\M_1,\M_2,\dots,\M_m)$ be an ordered set of subsets of $\mathfrak M$ such that
each $\M_i$ contains some number of PAs, such as $M_k,\dots,M_l$, from $\mathfrak M$, and for all $i,j, ~ (1\leq i<j \leq m)$,  ${\mathcal M}_i$ and ${\mathcal M}_j$ are pairwise disjoint. 
Let \{$\Pi_1,\Pi_2,\dots,\Pi_m$\}, be a collection of  distance-$d$ partition systems on $Z_n$, where for all $i, ~ (1\leq i \leq m), ~\Pi_i = (\mathcal{M}_i, \mathcal{P}_i,\mathcal{Q}_i)$, and  
${\mathcal M}_i \subseteq \mathfrak M$. 
We say that  \{$\Pi_1,\Pi_2,\dots,\Pi_m$\} is \textit{pairwise disjoint} if for all $i,j, ~ (1\leq i<j \leq m)$,  ${\mathcal M}_i$ and ${\mathcal M}_j$ are pairwise disjoint. 

For each iteration $i$, we employ a different distance-$d$ partition system, $\Pi_i=({\mathcal M}_i,{\mathcal P}_i,{\mathcal Q}_i)$, that uses a previously unused set of PAs, 
${\mathcal M}_i \subseteq \mathfrak M$, 
to create a new PA, ext($\Pi_i$), on $Z_{n+1}$, with Hamming distance $d$. 
Hence, by repeated simple partition and extension, we create a collection of new PAs, ${ext(\Pi_1), ext(\Pi_2), ... , ext(\Pi_m)}$, for some $m>1$. 
As long as the distance-$d$ partition systems ${\Pi_1,\Pi_2,\dots,\Pi_m}$ are pairwise disjoint, the sets $\{ext(\Pi_1), ext(\Pi_2), ... , ext(\Pi_m)\}$ are pairwise disjoint as well. 

In the following, we assume that the distance-$d$ partition systems under consideration are pairwise disjoint.
The partitions ${\mathcal P}_i$ and ${\mathcal Q}_i$ 
need not be distinct from partitions  
${\mathcal P}_j$ and ${\mathcal Q}_j$. 

Consider the case of applying simple partition and extension twice in succession using two distance-$d$ partition systems, $\Pi_1=({\mathcal M}_1,{\mathcal P}_1,{\mathcal Q}_1)$ and  $\Pi_2=({\mathcal M}_2,{\mathcal P}_2,{\mathcal Q}_2)$.
We present Theorem \ref{th:iterative_pe_d2} and Corollary \ref{cor:iterative_pe_size}, which give results on the Hamming distance and the size of the resulting PA. Corollary \ref{cor:iterative_pe_general} extends these results by induction. These results will be useful later for describing a new method for creating PAs which we call \textit{sequential partition and extension}.

\begin{theorem}
\label{th:iterative_pe_d2}
Let $\Pi_1 = ({\mathcal M}_1, \mathcal{P}_1,\mathcal{Q}_1)$ and $\Pi_2 = ({\mathcal M}_2, \mathcal{P}_2,\mathcal{Q}_2)$ be pairwise disjoint distance-$d$ partition systems for $Z_n$, with 
$hd({\mathcal M}_1, {\mathcal M}_2) \geq d-1$. Then $hd(ext(\Pi_1)) \geq d, hd(ext(\Pi_2)) \geq d$, and $hd(ext(\Pi_1), ext(\Pi_2)) \geq d-1$. 
\end{theorem}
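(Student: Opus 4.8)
The first two assertions require no new argument. Since $\Pi_1$ and $\Pi_2$ are each, by hypothesis, distance-$d$ partition systems for $Z_n$, I would simply invoke Theorem~\ref{th:simple_pe} separately on each to conclude $hd(ext(\Pi_1)) \ge d$ and $hd(ext(\Pi_2)) \ge d$. All of the real content therefore sits in the cross-distance bound $hd(ext(\Pi_1), ext(\Pi_2)) \ge d-1$, and that is where I would focus.

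To establish the cross bound, I would fix arbitrary $\sigma' \in ext(\Pi_1)$ and $\tau' \in ext(\Pi_2)$ and prove $hd(\sigma', \tau') \ge d-1$. Each is the extension of an underlying permutation of $Z_n$: write $\sigma' = ext(\sigma)$ with $\sigma$ in some PA of ${\mathcal M}_1$, and $\tau' = ext(\tau)$ with $\tau$ in some PA of ${\mathcal M}_2$. The hypothesis $hd({\mathcal M}_1, {\mathcal M}_2) \ge d-1$ supplies the one global fact on which the whole proof rests, namely $hd(\sigma, \tau) \ge d-1$. The remaining task is entirely local: determine exactly how the extension operation changes the number of positions at which two permutations disagree.

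The plan is a four-way case analysis on how $\sigma$ and $\tau$ are extended. Each is extended either by being \emph{covered}, so that a symbol is moved out of some position $k$ by $ext_k$ (placing symbol $n$ in position $k$ and the old symbol in position $n$), or by lying in the extra PA $M_{s+1}$, in which case $n$ is merely appended. For every old position $j < n$ that is not an extension position of either permutation, $\sigma'(j) = \sigma(j)$ and $\tau'(j) = \tau(j)$, so the agreement or disagreement there is inherited unchanged from the $(\sigma,\tau)$ comparison; only the one or two vacated positions and the new position $n$ behave differently. A position vacated by exactly one of the two permutations pits the symbol $n$ against a symbol of $Z_n$ and is thus a forced disagreement, whereas at position $n$ the two relocated (or appended) symbols may or may not coincide. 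Tallying these contributions case by case, I would show that in every case $hd(\sigma',\tau') \ge hd(\sigma,\tau) \ge d-1$; indeed the distance never decreases under extension, and the bound $d-1$ rather than $d$ is forced only by the cases in which both permutations are appended or both are covered at the same position.

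The one delicate case, and the place I expect the actual bookkeeping to be needed, is when both $\sigma$ and $\tau$ are covered but at \emph{distinct} positions $a = k_\sigma \ne k_\tau = b$. Here positions $a$ and $b$ both become forced disagreements, gaining two, but the original comparisons at $a$ and $b$ are effectively relocated to position $n$, where $\sigma'(n) = \sigma(a)$ is now compared against $\tau'(n) = \tau(b)$, and these two may agree even when both $a$ and $b$ were disagreements in the original pair. The net change in distance works out to $2 + [\sigma(a) \ne \tau(b)] - [\sigma(a) \ne \tau(a)] - [\sigma(b) \ne \tau(b)]$, whose three indicator terms each lie in $\{0,1\}$, so the expression is at least $2 - 1 - 1 = 0$. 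Hence $hd(\sigma',\tau') \ge hd(\sigma,\tau) \ge d-1$ in this case as well, which completes the bound.
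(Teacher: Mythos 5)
Your proposal is correct and follows essentially the same route as the paper's proof: invoke Theorem~\ref{th:simple_pe} for the two intra-array bounds, then bound the cross-distance by a local case analysis on the extension positions of $\sigma$ and $\tau$, showing that extension creates at most one new agreement (equivalently, that the Hamming distance does not decrease), which together with the length increase from $n$ to $n+1$ yields $hd(\sigma',\tau')\ge d-1$. The only difference is cosmetic: you track disagreements directly and explicitly separate out the case where a permutation comes from the appended set $M_{s+1}$, whereas the paper counts agreements and folds everything into the two cases $k=j$ and $k\ne j$.
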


\begin{proof}
By Theorem \ref{th:simple_pe}, $hd(ext(\Pi_1)) \geq d$, $hd(ext(\Pi_2)) \geq d$. We show that $hd(ext(\Pi_1), ext(\Pi_2)) \geq d-1$. Pick two arbitrary permutations $\sigma'\in ext(\Pi_1)$ and  $\tau' \in ext(\Pi_2)$, where for some $k$ and $j$, $\sigma' = ext_k(\sigma)$ for some $\sigma \in \Pi_1$, and $\tau' = ext_j(\tau)$ for some $\tau \in \Pi_2$. We consider two cases to determine the number of new agreements between $\sigma'$ and $\tau'$ created by the extension operation:
\begin{enumerate} [{Case} 1{:}]
\item $k = j$
\\The extension operation creates a new agreement in position $k=j$ because $\sigma'(k) = \tau'(k) = n$. Note that since $\sigma'(n) = \sigma(k)$ and $\tau'(n)=\tau (k)$, the relationship between $\sigma'(n)$ and $\tau'(n)$ is the same as the relationship between $\sigma(k)$ and $\tau(k)$. Hence, there is at most one new agreement between $\sigma'$ and $\tau'$. 

\item $k \neq j$
\\In this case, $\sigma'(k) = n$ and $\tau'(j)=n$, so the new symbol $n$ is in different positions in $\sigma'$ and $\tau'$. That is, inserting the symbol $n$ does not, in itself, increase the number of agreements. Now consider the symbols $\sigma(k)$ and $\tau(j)$.
If  $\sigma(k) = \tau(j)$, then $\sigma'(n) = \tau'(n)$. 
In this situation, extension creates a new agreement in position $n$. 
On the other hand, if $\sigma(k) \neq \tau(j)$, then $\sigma'(n) \neq \tau'(n)$, so no new agreement is created by extension. 
In either situation, extension creates at most one new agreement between $\sigma'$ and $\tau'$.
\end{enumerate}

By assumption, $hd({\mathcal M}_1, {\mathcal M}_2) \geq d-1$, hence $hd(\sigma, \tau) \geq d-1$ as well. 
That is the number of disagreements between $\sigma$ and $\tau$ is at least $d-1$, or equivalently, the number of agreements between $\sigma$ and $\tau$ is at most $n-(d-1)$. 
So, the number of agreements between $\sigma'$ and $\tau'$ is at most $1+ n-(d-1)$. 
Since $\sigma'=ext_k(\sigma)$ and $\tau' = ext_m(\tau)$, both $\sigma'$ and $\tau'$ are permutations on $n+1$ (not $n$) symbols. 
Hence, $hd(\sigma',\tau')\geq (n+1)-(1+n-(d-1))\geq d-1$, so $hd(ext(\Pi_1), ext(\Pi_2)) \geq d-1$.\\
\end{proof}

\begin{cor} \label{cor:iterative_pe_size}
Let $\Pi_1 = ({\mathcal M}_1, \mathcal{P}_1,\mathcal{Q}_1)$ and $\Pi_2 = ({\mathcal M}_2, \mathcal{P}_2,\mathcal{Q}_2)$
be pairwise disjoint distance-$d$ partition systems for $Z_n$,
with $hd({\mathcal M}_1,{\mathcal M}_2)\geq d-1$. 
Let ${\mathcal A} = ext(\Pi_1) \cup ext(\Pi_2)$. Then ${\mathcal A}$ is a PA on $Z_{n+1}$ such that $|{\mathcal A}|=|ext(\Pi_1)| + |ext(\Pi_2)|$ and $hd({\mathcal A})\ge d-1$.

\end{cor}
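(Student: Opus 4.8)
The plan is to deduce everything directly from Theorem~\ref{th:iterative_pe_d2}, which already supplies the three distance facts we need, so that the corollary becomes essentially a bookkeeping statement about the union of the two extended arrays. The first step is simply to invoke that theorem to obtain $hd(ext(\Pi_1)) \ge d$, $hd(ext(\Pi_2)) \ge d$, and $hd(ext(\Pi_1), ext(\Pi_2)) \ge d-1$. Since each $ext(\Pi_i)$ is by construction a set of permutations on $Z_{n+1}$, their union $\mathcal{A}$ is again a set of permutations on $Z_{n+1}$, hence a PA on $Z_{n+1}$; this establishes the first assertion with no further work.

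Next I would settle the size claim, which amounts to showing that $ext(\Pi_1)$ and $ext(\Pi_2)$ are disjoint, so that $|\mathcal{A}| = |ext(\Pi_1)| + |ext(\Pi_2)|$. The quickest route is the cross-distance bound itself: because $hd(ext(\Pi_1), ext(\Pi_2)) \ge d-1 \ge 1$, any $\sigma' \in ext(\Pi_1)$ and any $\tau' \in ext(\Pi_2)$ differ in at least one position and hence cannot be the same permutation. This forces $ext(\Pi_1) \cap ext(\Pi_2) = \emptyset$, and additivity of the size follows. If one prefers to avoid the harmless assumption $d \ge 2$, the same disjointness can instead be read off from the pairwise disjointness of $\mathcal{M}_1$ and $\mathcal{M}_2$ together with the injectivity of the extension map: from $ext_k(\sigma)$ one recovers $\sigma$ by locating the symbol $n$, so distinct source permutations yield distinct extensions.

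Finally, for the Hamming distance of $\mathcal{A}$, I would take two distinct permutations of $\mathcal{A}$ and split into three cases according to which extended array each belongs to. If both lie in the same $ext(\Pi_i)$, Theorem~\ref{th:iterative_pe_d2} gives distance at least $d \ge d-1$; if they lie in different arrays, the cross-distance bound gives at least $d-1$. Taking the minimum over the cases yields $hd(\mathcal{A}) \ge d-1$, completing the argument.

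There is no genuine obstacle here: all the analytic content—in particular the delicate accounting showing that the extension operation introduces at most one new agreement between a permutation of $ext(\Pi_1)$ and one of $ext(\Pi_2)$—was already carried out in the proof of Theorem~\ref{th:iterative_pe_d2}. The only point that deserves a moment's care is the disjointness underlying the \emph{exact} size formula, since a distance bound of $d-1$ certifies disjointness only when $d \ge 2$; I would flag explicitly that this is the regime of interest and note that the structural injectivity argument above covers the degenerate case if it is ever needed.
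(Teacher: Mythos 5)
Your proposal is correct and follows essentially the same route as the paper: invoke Theorem~\ref{th:iterative_pe_d2} for the three distance facts, observe that the union is a PA on $Z_{n+1}$, and use disjointness for the exact size. You are in fact somewhat more careful than the paper's three-line proof, in particular in justifying why $ext(\Pi_1)$ and $ext(\Pi_2)$ are disjoint (the paper leans implicitly on the disjointness of $\mathcal{M}_1$ and $\mathcal{M}_2$ together with the injectivity of the extension map, which is exactly your fallback argument), so no gap remains.
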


\begin{proof}
Since both $ext(\Pi_1)$ and $ext(\Pi_2)$ are created by simple partition and extension of PAs on $Z_n$, ${\mathcal A}$ is a PA on $Z_{n+1}$. 
Given that ${\mathcal M}_1$ is disjoint from ${\mathcal M}_2$, Equation \ref{eq:sizeof_extPi} tells us that $|{\mathcal A}|=|ext(\Pi_1)| + |ext(\Pi_2)| $.
Lastly, by Theorem \ref{th:iterative_pe_d2}, $hd({\mathcal A})\ge d-1$.
\end{proof}

\bigskip
Simple partition and extension can be used in a similar way on several more distance-$d$ partition systems on $Z_n$ to create large PAs on $Z_{n+1}$. This is formalized by Corollary \ref{cor:iterative_pe_general}. 

\begin{cor} \label{cor:iterative_pe_general}
Let $\Pi_1 = (\mathcal{M}_1, \mathcal{P}_1,\mathcal{Q}_1)$, $\Pi_2 = (\mathcal{M}_2, \mathcal{P}_2,\mathcal{Q}_2)$, $\dots$ , $\Pi_m = (\mathcal{M}_m, \mathcal{P}_m,\mathcal{Q}_m)$ be a collection of pairwise disjoint distance-$d$ partition systems, for some $m>1$, where $hd(\mathcal{M}_i, \mathcal{M}_j) \geq d-1$, for all $i,j~ (1 \le i<j \le m)$. Let ${\mathcal A} = ext(\Pi_1) \cup ext(\Pi_2) \cup \dots \cup ext(\Pi_m)$. 
Then 
\begin{enumerate} [\indent(1)]
\item $\forall i,j~ (1 \le i<j \le m),~ hd(ext(\Pi_i), ext(\Pi_j)) \geq d-1$,
\item ${\mathcal A}$ is a PA on $Z_{n+1}$,  
\item $|{\mathcal A}|=\sum_{i=1}^m |ext(\Pi_i)|$, and 
\item $hd({\mathcal A}) \geq d-1$.
\end{enumerate}
\end{cor}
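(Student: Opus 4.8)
The plan is to reduce everything to the pairwise case already established in Theorem \ref{th:iterative_pe_d2}, exploiting the fact that the collection is pairwise disjoint and that every pair $\mathcal{M}_i, \mathcal{M}_j$ satisfies the distance hypothesis $hd(\mathcal{M}_i, \mathcal{M}_j) \geq d-1$. For part (1), I would fix arbitrary indices $i < j$ and apply Theorem \ref{th:iterative_pe_d2} to the two systems $\Pi_i$ and $\Pi_j$ in isolation. Their pairwise disjointness together with the assumed $hd(\mathcal{M}_i,\mathcal{M}_j)\ge d-1$ are precisely the hypotheses of that theorem, so it yields $hd(ext(\Pi_i), ext(\Pi_j)) \geq d-1$ at once. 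Since this holds for every pair, part (1) requires no further work.

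Part (2) is immediate: each $ext(\Pi_i)$ is a PA on $Z_{n+1}$ by Theorem \ref{th:simple_pe}, and a finite union of sets of permutations on $Z_{n+1}$ is again a set of permutations on $Z_{n+1}$, i.e.\ a PA on $Z_{n+1}$. For part (3), I would invoke the observation recorded in the text preceding the statement, namely that pairwise disjointness of the systems $\Pi_1, \dots, \Pi_m$ forces the extended sets $ext(\Pi_1), \dots, ext(\Pi_m)$ to be pairwise disjoint as well; combining this with Equation \ref{eq:sizeof_extPi} applied to each $\Pi_i$ shows that the cardinality of the disjoint union $\mathcal{A}$ is the sum $\sum_{i=1}^m |ext(\Pi_i)|$.

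The only part calling for a genuine (if short) argument is part (4). Here I would take two arbitrary distinct permutations $\sigma', \tau' \in \mathcal{A}$ and split into two cases according to whether they come from the same extended set or from two different ones. If $\sigma', \tau' \in ext(\Pi_i)$ for a common $i$, then Theorem \ref{th:simple_pe} gives $hd(\sigma',\tau') \geq d \geq d-1$; if instead $\sigma' \in ext(\Pi_i)$ and $\tau' \in ext(\Pi_j)$ with $i \neq j$, then part (1) gives $hd(\sigma',\tau') \geq d-1$. Taking the minimum over all such pairs yields $hd(\mathcal{A}) \geq d-1$.

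I do not anticipate a serious obstacle, since the substantive work—bounding the number of new agreements created by the extension operation—is already carried out in the proof of Theorem \ref{th:iterative_pe_d2}. The single point to state carefully is that the global hypotheses of the corollary (pairwise disjointness of all $m$ systems, and the distance condition on every pair) restrict correctly to each individual pair, so that Theorem \ref{th:iterative_pe_d2} is legitimately applicable to $(\Pi_i, \Pi_j)$ on its own; once that is noted, the corollary is simply the assembly of the pairwise statements together with the cardinality count. Although the surrounding text frames this as an induction on $m$, the direct pairwise argument above makes no essential use of induction, as each claim is verified simultaneously for all pairs; I would present it that way for transparency, reserving an inductive formulation only if one prefers to build $\mathcal{A}$ one system at a time.
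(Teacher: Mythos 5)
Your proposal is correct and follows essentially the same route as the paper, whose entire proof is a one-line appeal to Theorem \ref{th:iterative_pe_d2} and Corollary \ref{cor:iterative_pe_size} by induction on $m$; your direct aggregation over all pairs is just an unrolled version of that induction, with the same key ingredients. Your explicit case analysis for part (4) and the remark that induction is not genuinely needed are fine and, if anything, more transparent than the paper's phrasing.
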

\begin{proof}
The results follow from Theorem \ref{th:iterative_pe_d2} and Corollary \ref{cor:iterative_pe_size} by induction on $m$.
\end{proof}

\bigskip
A new technique, which we call 
\textit{sequential partition and extension}, can be used to improve bounds for $M(n+2,d)$. It has two steps. 
First, simple partition and extension is used to create the extended PAs ${ext(\Pi_1), ext(\Pi_2), ... , ext(\Pi_m)}$, for some $m>1$.  Let ${\mathbb M}=\{ {\mathbb M}_1,{\mathbb M}_2, \dots, {\mathbb M}_m \}$, where for all $i$, 
${\mathbb M}_i = ext(\Pi_i)$. Note that ${\mathbb M}$ is a collection of PAs on $Z_{n+1}$.
Let $\mathbb{P}$ and $\mathbb{Q}$ be partitions of $Z_{n+1}$ such that $\Psi = (\mathbb {M},\mathbb {P}, \mathbb {Q})$ is a distance-$d$ partition system on $Z_{n+1}$. 
Next, simple partition and extension is again used to create a new PA, $ext(\Psi)$, on $Z_{n+2}$. 

We show that $ext(\Psi)$ is a PA on $n+2$ symbols with Hamming distance $d$.

\begin{theorem}
\label{th:sequential_pe}
Sequential partition and extension on a collection 
\{$\Pi_1,\Pi_2,\dots,\Pi_m$\}, of pairwise disjoint  distance-$d$ partition systems on $Z_n$, 
results in a new PA on $Z_{n+2}$ with Hamming distance $d$.
\end{theorem}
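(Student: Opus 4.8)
The plan is to recognize that this statement is the composition of two applications of simple partition and extension, and that the only real content is verifying that the intermediate collection $\mathbb{M}=(\mathbb{M}_1,\dots,\mathbb{M}_m)$, with $\mathbb{M}_i=ext(\Pi_i)$, is itself a legitimate distance-$d$ partition system on $Z_{n+1}$. Once $\mathbb{M}$ is shown to satisfy Properties (I) and (II), the partitions $\mathbb{P}$ and $\mathbb{Q}$ of $Z_{n+1}$ make $\Psi=(\mathbb{M},\mathbb{P},\mathbb{Q})$ a distance-$d$ partition system, and a single further appeal to Theorem~\ref{th:simple_pe} delivers the conclusion. So I would organize the proof as: verify (I) for $\mathbb{M}$, verify (II) for $\mathbb{M}$, then extend once more.

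For Property (I), I would apply Theorem~\ref{th:simple_pe} to each $\Pi_i$ separately. Since each $\Pi_i$ is a distance-$d$ partition system on $Z_n$, that theorem gives that $\mathbb{M}_i=ext(\Pi_i)$ is a PA on $Z_{n+1}$ with $hd(\mathbb{M}_i)\ge d$, which is exactly Property (I) for $\mathbb{M}$. For Property (II), I would invoke Corollary~\ref{cor:iterative_pe_general}. Its hypothesis $hd(\mathcal{M}_i,\mathcal{M}_j)\ge d-1$ is met here because all the constituent PAs are drawn from the single global collection $\mathfrak M$, which by assumption satisfies the distance-$d$ partition system properties; hence any $M_a\in\mathcal{M}_i$ and $M_b\in\mathcal{M}_j$ already have $hd(M_a,M_b)\ge d-1$. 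Part~(1) of the corollary then yields $hd(\mathbb{M}_i,\mathbb{M}_j)=hd(ext(\Pi_i),ext(\Pi_j))\ge d-1$ for all $i<j$, establishing Property (II).

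With both properties in hand, $\Psi=(\mathbb{M},\mathbb{P},\mathbb{Q})$ is a distance-$d$ partition system on $Z_{n+1}$, so I would apply Theorem~\ref{th:simple_pe} one final time to $\Psi$; this gives that $ext(\Psi)$ is a PA on $Z_{n+2}$ with $hd(ext(\Psi))\ge d$, as required. I do not expect a substantive obstacle, since the argument is essentially a bookkeeping chain through the two earlier results. The one point meriting care is that the second extension stage operates on PAs whose \emph{internal} distance is $d$ but whose \emph{pairwise} distance has dropped to exactly $d-1$ after the first stage; this is precisely the input format Theorem~\ref{th:simple_pe} requires, so the distance is preserved rather than degraded in passing from $Z_{n+1}$ to $Z_{n+2}$.
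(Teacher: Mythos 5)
Your proposal is correct and follows essentially the same route as the paper's own proof: verify that each $ext(\Pi_i)$ has internal distance at least $d$ via Theorem~\ref{th:simple_pe}, obtain pairwise distance at least $d-1$ via Corollary~\ref{cor:iterative_pe_general}, conclude that $\Psi=(\mathbb{M},\mathbb{P},\mathbb{Q})$ is a distance-$d$ partition system on $Z_{n+1}$, and apply Theorem~\ref{th:simple_pe} once more. Your added remark justifying the hypothesis $hd(\mathcal{M}_i,\mathcal{M}_j)\ge d-1$ from the global collection $\mathfrak M$ is a small elaboration the paper leaves implicit, but it does not change the argument.
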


\begin{proof}
Let ${ext(\Pi_1), ext(\Pi_2), ... , ext(\Pi_m)}$ be the PAs on $Z_{n+1}$ created the first phase of sequential partition and extension. 
By Theorem \ref{th:simple_pe}, $hd(ext(\Pi_i)) \ge d$.
By Corollary \ref{cor:iterative_pe_general}, $ \forall i,j~(1 \le i<j \le m),~ 
{hd(ext(\Pi_i), ext(\Pi_j)) \ge d-1}$.

Let $\mathbb {M}=(ext(\Pi_1),~ext(\Pi_2),~\dots,~ext(\Pi_m))$, and 
let $\mathbb {P}$ and $\mathbb {Q}$ be suitable partitions of $Z_{n+1}$, 
such that $\Psi = (\mathbb {M},\mathbb {P}, \mathbb {Q})$ forms a distance-$d$ partition system on $Z_{n+1}$. 
Let $ext(\Psi)$ be the PA created by simple partition and extension on $\Psi = (\mathbb {M},\mathbb {P}, \mathbb {Q})$.
Since, $\Psi$ is a distance-$d$ partition system on $Z_{n+1}$, $ext(\Psi)$ is a PA on $Z_{n+2}$. 
By Theorem \ref{th:simple_pe}, $hd(ext(\Psi)) \geq d$. 
\end{proof}

We now illustrate sequential partition and extension by means of an example.  

\begin{changemargin}{1cm}{1cm}
\begin{example}
\label{example_sequential_pe}Consider the group $AGL(1,37)$ on $37$ symbols with Hamming distance $36$, containing $1332$ permutations. 
This gives $M(37,36) \geq 1332$. Using sequential partition and extension we show that $M(39,37) \geq 1301$. 

$AGL(1,37)$ can be decomposed into $36$ Latin squares, 
where one of the Latin squares is a cyclic subgroup of $AGL(1,37)$ consisting of the identity permutation and all cyclic shifts. 
This is the set of permutations $C_1 =\{ x+b ~| ~ b \in Z_{37} \}.$  
The other $35$ Latin squares can be defined as the left cosets of $C_1$, namely, $C_i = \{ ix+b ~| ~ b \in Z_{37} \}$, for each $i ~ (2 \leq i \leq  36)$.

\begin{sloppy} 
First, we give six distance-$37$ partition systems for $AGL(1,37)$, namely, 
${\Pi_1 = ( {\mathcal M}_1, {\mathcal P}_1, {\mathcal Q}_1)}$, 
${\Pi_2 = ({\mathcal M}_2, {\mathcal P}_2, {\mathcal Q}_2)}$, 
${\Pi_3 = ({\mathcal M}_3, {\mathcal P}_3, {\mathcal Q}_3)}$, 
${\Pi_4 = ({\mathcal M}_4, {\mathcal P}_4, {\mathcal Q}_4)}$, 
${\Pi_5 = ({\mathcal M}_5, {\mathcal P}_5, {\mathcal Q}_5)}$,
${\Pi_6 = ({\mathcal M}_6, {\mathcal P}_6, {\mathcal Q}_6)}$, where 
${{\mathcal M}_1 = \{C_1, C_2, \dots , C_7\}}$, 
${{\mathcal M}_2 = \{C_8, C_9, \dots , C_{14}\}}$, 
${{\mathcal M}_3 = \{C_{15}, C_{16}, \dots , C_{21}\}}$, 
${{\mathcal M}_4 = \{C_{22}, C_{23}, \dots , C_{28}\}}$, 
${{\mathcal M}_5 = \{C_{29}, C_{30}, \dots , C_{35}\}}$, 
${{\mathcal M}_6 = \{C_{36}\}}$ 
with the partitions 
$\mathcal{P}_i$, $\mathcal{Q}_i$  $(1 \leq i \leq 6 )$ described in Table \ref{tb:seq_pe1}. 
Note that in each $\Pi_i$, the last coset is covered by adding the new symbol '37' in the $37^{th}$ position. 

Simple partition and extension 
yields six PAs on $Z_{38}$, where 
for all $i,~ (1 \le i \le 6),~ hd(ext(\Pi_i)) \geq 37 $, 
and  for all $i,j~ (1 \le i<j \le 6),~ hd(ext(\Pi_i), ext(\Pi_j)) \geq 36$. Moreover,
$|ext(\Pi_1)| = 253$, 
$|ext(\Pi_2)| = 253$, 
$|ext(\Pi_3)| = 253$, 
$|ext(\Pi_4)| = 253$, 
$|ext(\Pi_5)| = 252$, and 
$|ext(\Pi_6)|$ = 37.

\end{sloppy}

\begin{sloppy}
Finally, 
we form a distance-$37$ partition system 
${\Psi=(\mathbb{M}, \mathbb{P}, \mathbb{Q})}$, 
where $\mathbb{M}=(ext(\Pi_1), ext(\Pi_2), \dots , ext(\Pi_6))$ 
with suitable partitions $\mathbb{P}$ and $\mathbb{Q}$ as shown in Table \ref{tb:seq_pe2}. 
The result is a PA, $ext(\Psi)$, on $39$ symbols with Hamming distance $37$, which has $1301$ permutations. 
The previous lower bound for $M(39,37)$, given by the five known MOLS on $39$ symbols, was $195$.

\end{sloppy}

\end{example}
\end{changemargin}

\bigskip
Sequential partition and extension also results in the lower bounds 
$M(34,32) \geq 945$ and $M(66,64) \geq 4029$. 
Table \ref{tb:n_2} shows additional improved lower bounds on $M(n,n-2)$ obtained by sequential partition and extension.

\begin{table} [htb]
\centering
\begin{tabular}{|c|c|c|c|c|}
\hline
$\Pi_i$ & Set of Cosets, $\mathcal{M}_i$ & $\mathcal{P}_i$ & $\mathcal{Q}_i$ & $|ext(\Pi_i)|$\\
\hline
\hline
\multirow{6}{*}{$\Pi_1$} & $\{x+b ~|~ b \in Z_{37}\}$ & $\{4, 11, 18, 25, 31, 34\}$ & $\{0, 1, 2, 3, 4, 5, 6\}$ & \multirow{6}{*}{$253$} \\
& $\{2x+b ~|~ b \in Z_{37}\}$ & $\{5, 8, 10, 13, 16, 19, 21\}$  & $\{7, 8, 9, 10, 11, 12\}$ & \\
& $\{3x+b ~|~ b \in Z_{37}\}$ & $\{14, 20, 22, 24, 28, 30\}$  & $\{13, 14, 15, 16, 17, 18\}$ & \\
& $\{4x+b ~|~ b \in Z_{37}\}$ & $\{9, 12, 15, 26, 29, 32\}$  & $\{19, 20, 21, 22, 23, 24\}$ & \\
& $\{5x+b ~|~ b \in Z_{37}\}$ & $\{6, 7, 17, 23, 27, 33\}$  & $\{25, 26, 27, 28, 29, 30\}$ & \\
& $\{6x+b ~|~ b \in Z_{37}\}$ & $\{0, 1, 2, 3, 35, 36\}$  & $\{31, 32, 33, 34, 35, 36\}$ & \\
& $\{7x+b ~|~ b \in Z_{37}\}$ & $\{37\}$  & $\{37\}$ & \\
\hline
\multirow{6}{*}{$\Pi_2$} & $\{8x+b ~|~ b \in Z_{37}\}$ & $\{1, 12, 23, 25, 36\}$ & $\{0, 1, 2, 3, 4, 5, 6\}$ & \multirow{6}{*}{$253$} \\
& $\{9x+b ~|~ b \in Z_{37}\}$ & $\{0, 11, 13, 22, 24, 35\}$ & $\{7, 8, 9, 10, 11, 12\}$ & \\
& $\{10x+b ~|~ b \in Z_{37}\}$ & $\{8, 9, 10, 17, 18, 26, 27\}$ & $\{13, 14, 15, 16, 17, 18\}$ & \\
& $\{11x+b ~|~ b \in Z_{37}\}$ & $\{4, 5, 6, 7, 19, 20, 28\}$ & $\{19, 20, 21, 22, 23, 24\}$ & \\
& $\{12x+b ~|~ b \in Z_{37}\}$ & $\{14, 15, 16, 32, 33, 34\}$ & $\{25, 26, 27, 28, 29, 30\}$ & \\
& $\{13x+b ~|~ b \in Z_{37}\}$ & $\{2, 3, 21, 29, 30, 31\}$ & $\{31, 32, 33, 34, 35, 36\}$ & \\
& $\{14x+b ~|~ b \in Z_{37}\}$ & $\{37\}$  & $\{37\}$ & \\
\hline
\multirow{6}{*}{$\Pi_3$} & $\{15x+b ~|~ b \in Z_{37}\}$ & $\{2, 3, 4, 6, 15, 27\}$ & $\{0, 1, 2, 3, 4, 5, 6\}$ & \multirow{6}{*}{$253$} \\
& $\{16x+b ~|~ b \in Z_{37}\}$ & $\{12, 13, 14, 16, 17, 18, 22\}$ & $\{7, 8, 9, 10, 11, 12\}$ & \\
& $\{17x+b ~|~ b \in Z_{37}\}$ & $\{0, 21, 25, 28, 29, 33\}$ & $\{13, 14, 15, 16, 17, 18\}$ & \\
& $\{18x+b ~|~ b \in Z_{37}\}$ & $\{7, 8, 19, 20, 31, 32\}$ & $\{19, 20, 21, 22, 23, 24\}$ & \\
& $\{19x+b ~|~ b \in Z_{37}\}$ & $\{10, 11, 23, 24, 35, 36\}$ & $\{25, 26, 27, 28, 29, 30\}$ & \\
& $\{20x+b ~|~ b \in Z_{37}\}$ & $\{1, 5, 9, 26, 30, 34\}$ & $\{31, 32, 33, 34, 35, 36\}$ & \\
& $\{21x+b ~|~ b \in Z_{37}\}$ & $\{37\}$  & $\{37\}$ & \\
\hline
\multirow{6}{*}{$\Pi_4$} & $\{22x+b ~|~ b \in Z_{37}\}$ & $\{2, 3, 5, 9, 21, 33\}$ & $\{0, 1, 2, 3, 4, 5, 6\}$ & \multirow{6}{*}{$253$} \\
& $\{23x+b ~|~ b \in Z_{37}\}$ & $\{4, 8, 11, 22, 23, 34\}$ & $\{7, 8, 9, 10, 11, 12\}$ & \\
& $\{24x+b ~|~ b \in Z_{37}\}$ & $\{7, 16, 17, 25, 26, 35\}$ & $\{13, 14, 15, 16, 17, 18\}$ & \\
& $\{25x+b ~|~ b \in Z_{37}\}$ & $\{12, 13, 14, 30, 31, 32\}$ & $\{19, 20, 21, 22, 23, 24\}$ & \\
& $\{26x+b ~|~ b \in Z_{37}\}$ & $\{1, 6, 10, 15, 24, 29\}$ & $\{25, 26, 27, 28, 29, 30\}$ & \\
& $\{27x+b ~|~ b \in Z_{37}\}$ & $\{0, 18, 19, 20, 27, 28, 36\}$ & $\{31, 32, 33, 34, 35, 36\}$ & \\
& $\{28x+b ~|~ b \in Z_{37}\}$ & $\{37\}$  & $\{37\}$ & \\
\hline
\multirow{6}{*}{$\Pi_5$} & $\{29x+b ~|~ b \in Z_{37}\}$ & $\{2, 5, 13, 18, 26, 29\}$ & $\{0, 1, 2, 3, 4, 5, 6\}$ & \multirow{6}{*}{$252$} \\
& $\{30x+b ~|~ b \in Z_{37}\}$ & $\{12, 19, 21, 27, 34, 36\}$ & $\{7, 8, 9, 10, 11, 12\}$ & \\
& $\{31x+b ~|~ b \in Z_{37}\}$ & $\{6, 7, 8, 9, 10, 11\}$ & $\{13, 14, 15, 16, 17, 18\}$ & \\
& $\{32x+b ~|~ b \in Z_{37}\}$ & $\{4, 14, 15, 25, 31, 35\}$ & $\{19, 20, 21, 22, 23, 24\}$ & \\
& $\{33x+b ~|~ b \in Z_{37}\}$ & $\{0, 3, 16, 17, 20, 23, 33\}$ & $\{25, 26, 27, 28, 29, 30\}$ & \\
& $\{34x+b ~|~ b \in Z_{37}\}$ & $\{1, 22, 24, 28, 30, 32\}$ & $\{31, 32, 33, 34, 35, 36\}$ & \\
& $\{35x+b ~|~ b \in Z_{37}\}$ & $\{37\}$  & $\{37\}$ &\\
\hline
$\Pi_6$ & $\{36x+b ~|~ b \in Z_{37}\}$ & $\{37\}$ & $\{37\}$  & 37 \\ \hline
\end{tabular}
\caption{Step 1 of Sequential Partition and Extension on $AGL(1,37)$, which gives $M(38,36) \geq 1301$.}
\label{tb:seq_pe1}
\end{table}

In fact, sequential partition and extension can be applied an arbitrary number of times, provided that suitable distance-$d$ partitions systems can be found at each stage. That is, sequential partition and extension on a sequence of $r$ distance-$d$ partitions systems could result in new lower bounds for $M(n+r,d)$, for arbitrary $r$.

\clearpage

\begin{table} [h]
\centering
\begin{tabular}{|c|c|c|c|}
\hline
${\mathbb M}$ & ${\mathbb P}_i\in {\mathbb P}$ & ${\mathbb Q}_i \in {\mathbb Q}$ & $|ext({\mathbb M}_i)|$\\
\hline
\hline
${\mathbb M}_1$=ext($\Pi_1$)
& $\{4, 11, 18, 25, 31, 34\}$ & $\{0, 1, 2, 3, 4, 5, 6\}$ & $253$ \\
${\mathbb M}_2$=ext($\Pi_2$) & $\{5, 8, 10, 13, 16, 19, 21\}$  & $\{7, 8, 9, 10, 11, 12\}$ & $253$ \\
${\mathbb M}_3$=ext($\Pi_3$) & $\{14, 20, 22, 24, 28, 30\}$  & $\{13, 14, 15, 16, 17, 18\}$ & $253$ \\
${\mathbb M}_4$=ext($\Pi_4$) & $\{9, 12, 15, 26, 29, 32\}$  & $\{19, 20, 21, 22, 23, 24\}$ & $253$ \\
${\mathbb M}_5$=ext($\Pi_5$) & $\{38\}$ & $\{38\}$ & $252$ \\
${\mathbb M}_6$=ext($\Pi_6$) & $\{0,1,2,3,6,7,17,23,27,$  & $\{25,26,27,28,29,30,31$ &   \\
 & $33,35,36,37\}$  & $32,33,34,35,36,37\}$ & $37$ \\
\hline
\hline
Total &  &  & $1301$ \\
\hline
\end{tabular}
\caption 
{Step 2 of Sequential Partition and Extension on $AGL(1,37)$ for $M(39,37) \ge 1301$.} 
\label{tb:seq_pe2}
\end{table}

\begin{table} [h!]
\centering
\vspace*{2mm}
\begin{tabular}{|r r r | r r r | r r r|}
\hline
$n$ & \bf PREV & \bf NEW & $n$ & \bf PREV & \bf NEW & 
$n$ & \bf PREV & \bf NEW \\
\hline\hline

34	 & 	 192 	 & 	 945 	 & 	 159 	 & 	 2,051 	 & 	 16,666 	 & 	 291 	 & 	 5,202 	 & 	 80,385 	 \\
39	 & 	 255 	 & 	 1,301 	 & 	 165 	 & 	 2,185 	 & 	 17,632 	 & 	 295 	 & 	 5,088 	 & 	 54,572 	 \\
45	 & 	 270 	 & 	 1,726 	 & 	 171 	 & 	 2,354 	 & 	 27,330 	 & 	 309 	 & 	 5,539 	 & 	 60,715 	 \\
51	 & 	 392 	 & 	 2,308 	 & 	 175 	 & 	 2,354 	 & 	 19,792 	 & 	 315 	 & 	 5,634 	 & 	 60,952 	 \\
55	 & 	 423 	 & 	 2,461 	 & 	 183 	 & 	 2,533 	 & 	 21,994 	 & 	 319 	 & 	 5,793 	 & 	 67,379 	 \\
63	 & 	 1,514 	 & 	 3,306 	 & 	 195 	 & 	 2,758 	 & 	 25,022 	 & 	 333 	 & 	 6,091 	 & 	 70,696 	 \\
66	 & 	 576 	 & 	 4,029 	 & 	 201 	 & 	 2,867 	 & 	 25,427 	 & 	 339 	 & 	 6,280 	 & 	 69,485 	 \\
69	 & 	 594 	 & 	 3,965 	 & 	 213 	 & 	 3,170 	 & 	 30,288 	 & 	 345 	 & 	 5,205 	 & 	 89,272 	 \\
75	 & 	 667 	 & 	 4,747 	 & 	 225 	 & 	 3,421 	 & 	 32,728 	 & 	 351 	 & 	 6,642 	 & 	 76,195       \\
85	 & 	 812 	 & 	 6,116 	 & 	 231 	 & 	 3,548 	 & 	 33,779 	 & 	 355 	 & 	 6,746 	 & 	 77,215       \\
91	 & 	 902 	 & 	 6,709 	 & 	 235 	 & 	 3,625 	 & 	 35,001 	 & 	 363 	 & 	 7,220 	 & 	 125,709 	 \\
99	 & 	 1,017 	 & 	 8,206 	 & 	 245 	 & 	 3,475 	 & 	 43,717 	 & 	 369 	 & 	 7,108 	 & 	 83,418 	 \\
105	 & 	 1,119 	 & 	 9,239 	 & 	 253 	 & 	 4,075 	 & 	 40,094 	 & 	 375 	 & 	 7,298 	 & 	 87,434 	 \\
111	 & 	 1,187 	 & 	 9,990 	 & 	 259 	 & 	 4,222 	 & 	 43,268 	 & 	 385 	 & 	 7,428 	 & 	 90,213 	 \\
115	 & 	 1,277 	 & 	 11,142  & 	 265 	 & 	 4,342 	 & 	 44,733 	 & 	 391 	 & 	 7,690 	 & 	 90,991 	 \\
123	 & 	 1,452 	 & 	 13,996  & 	 273 	 & 	 4,548 	 & 	 46,268 	 & 	 411 	 & 	 8,240 	 & 	 104,098 	 \\
133	 & 	 1,554 	 & 	 11,604  & 	 279 	 & 	 4,701 	 & 	 49,243 	 & 	 514 	 & 	 11,264  & 	 197,859 	 \\
141	 & 	 1,723 	 & 	 13,522  & 	 285 	 & 	 4,868 	 & 	 51,571 	 & 	 531 	 & 	 12,696  & 	 271,043 	 \\
153	 & 	 1,923 	 & 	 16,118  & 		     &   		 & 		         & 		     & 	    	 & 	         	 \\
\hline
\end{tabular}
\caption{$M(n,n-2)$ lower bounds. \textbf{PREV} denotes the previous bound and \textbf{NEW} denotes the new bound obtained using Sequential Partition and Extension.}
\label{tb:n_2}
\end{table}

\section{Parallel Partition and Extension} \label{s:par_pe}

In Section \ref{s:seq_pe}, we described a new technique, based on simple partition and extension, called sequential partition and extension. We now present another new technique, called \textit{parallel partition and extension} which introduces multiple new symbols simultaneously. 
As previously described, 
simple partition and extension extends a permutation array by replacing \textit{one} existing symbol in a carefully selected position in each permutation with the symbol $n$, and 
appending the displaced symbol to the end of the permutation. 
Sequential partition and extension allows additional symbols to be introduced one at a time by applying simple partition and extension sequentially.
In contrast, \textit{parallel partition and extension} on a PA $A$ on $Z_n$ creates a PA $A'$ on $Z_{n+r}$ by introducing, to each permutation in $A$, $r$ new symbols \textit{simultaneously}. 
Table \ref{tb:new_parallel_pe} shows new bounds obtained using Theorems \ref{thm:rudi} and \ref{thm:parallel_pe_2} for parallel partition and extension. These theorems are proved in Sections \ref{rudi} and \ref{general_parallel} below.

\subsection{Rudimentary Parallel Partition and Extension} \label{rudi}

In its rudimentary form, parallel partition and extension operates on $2r$ \textit{blocks} (\ie{sets}) of permutations, for some integer $r$. 
Specifically, suppose a PA $A$, on $Z_n$, is partitioned into $k=2r$ blocks of permutations $B_0, B_1, \dots, B_{k-1}$, 
where, for all $i,~(0 \leq i < k)$, $hd(B_i)\geq d$, for some $d$, 
and for all $i,j~(0 \leq i \neq j < k),~ hd(B_i, B_j)\geq d-r$.  
In particular, $hd(A)\geq d-r$. 
We create a new PA $A'$ on $Z_{n+r}$, such that $hd(A')\geq d$, by inserting  a sequence of new symbols from the set  
$\{n, n+1, \dots, n+r-1\}$ into the permutations in each block.  Each block uses a different sequence.

Define SHIFT$(\gamma,0)$ to be the sequence $(n,n+1, n+2, \dots, n+r-1)$, and for each integer $t$,
denote by SHIFT$(\gamma,t)$ the left cyclic shift of the sequence by $t$ (mod $r$) positions. 
For example, SHIFT$(\gamma,1)$ is the sequence $(n+1, n+2, \dots, n+r-1,n)$, and SHIFT$(\gamma,2)$ is the sequence $(n+2, \dots , n+r-1, n, n+1)$, and so on. 

The creation of the new PA $A'$ takes place in two steps. The first step modifies the blocks $B_0, B_1, \dots, B_{r-1}$.
For all $l, ~ (0\leq l<r)$, a new block $B'_l$ of permutations on $Z_{n+r}$ is created from the block $B_l$ as follows:  the first $r$ symbols in each permutation of $B_l$, are replaced by SHIFT$(\gamma,l)$, and the $r$ replaced symbols are put in their original order at the end of the permutation in positions $n, n+1, \dots , n+r-1$. 

In the second step, a new block of permutations $B'_m$  is created from each block $B_m$, for all $m, ~ (r\leq m <2r)$, by appending the sequence, SHIFT$(\gamma,m)$ to each permutation in positions $n, n+1, \dots , n+r-1$.  
The blocks $B'_l,~ (0\leq l<r)$ together with the blocks $B'_m,~ (r\leq m <2r)$ comprise the new PA $A'$ on $Z_{n+r}$.

It is known that the Hamming distance between two permutations does not change when the order of the symbols in both permutations is altered in a fixed manner.  
Consequently, the Hamming distance between permutations in the same block, or between permutations in different blocks is not altered by the movement of the first $r$ symbols in each permutation to positions $n, n+1, \dots , n+r-1$. 
Since the ordering of the new symbols $n, n+1, \dots , n+r-1$ in any block is a cyclic shift of sequence of new symbols in any other block, rudimentary parallel partition and extension does not create any new agreements between permutations in different blocks. 
For the original permutation array $A,~ hd(A)\geq d-r$. 
For the new permutation array $A'$, the permutations in each block have been extended by $r$ symbols in a way that ensures that the inter-block Hamming distance is at least $d$.
That is, for all $i,j~(0 \leq i \neq j < k),~ hd(B'_i, B'_j)\geq d$, and the length of the permutations has increased by $r$. 
Within each new block, the $r$ new symbols are put in a fixed order into fixed positions, creating $r$ new agreements in addition to the $(n-d)$ agreements that existed in the unaltered blocks.
For the new blocks $B'_l$ for all $l ~ (0\leq l<r)$, the displaced symbols are moved to the end of each permutation. 
For the new blocks $B'_m$, for all $m ~ (r\leq m <2r)$, no symbols are displaced because the $r$ new symbols are appended at the end of the permutations.
Thus the intra-block Hamming distance for the new permutations is $(n+r-(r+(n-d)))=d$.
That is, for all $i,~(0 \leq i < k)$, $hd(B'_i)\geq d$. 
Hence, $hd(A')\geq d$.
The size of the PA $A'$ is given by Theorem \ref{thm:rudi}. The proof is described in \cite{nguyen2013transitivity}.

\begin{thm} [\cite{nguyen2013transitivity}] 
\label{thm:rudi}
Let $A$ be a PA on $Z_n$ comprising $2r$ blocks for some $r$. Denote the blocks by $B_0, B_1, \dots, B_{2r-1}$, so that $A=\cup_{i=0}^{2r-1} B_i$.
If each block $B_i$ has Hamming distance at least $d$ and 
the Hamming distance of the entire set $A$ is at least $d-r$, then rudimentary parallel partition and extension on $A$ results in a new PA $A'$ on $Z_{n+r}$ that exhibits $M(n+r,d)\ge \sum_{i=0}^{2r-1} |B_i|$.
\end{thm}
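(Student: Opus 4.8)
The plan is to verify three things about the output $A'$ of rudimentary parallel partition and extension: that it consists of genuine permutations of $Z_{n+r}$, that it has exactly $\sum_{i=0}^{2r-1}|B_i|$ elements, and that $hd(A')\ge d$. The first point is immediate from the construction: for a block $B_l$ with $0\le l<r$ we overwrite positions $0,\dots,r-1$ with the $r$ distinct new symbols SHIFT$(\gamma,l)$ and relocate the $r$ displaced old symbols to positions $n,\dots,n+r-1$, while for a block $B_m$ with $r\le m<2r$ we simply append SHIFT$(\gamma,m)$ in positions $n,\dots,n+r-1$; in either case each of the $n+r$ symbols appears exactly once, so $\sigma'$ is a permutation of $Z_{n+r}$.

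For the cardinality I would observe that the extension map $\sigma\mapsto\sigma'$ is injective on each block, since its effect on a block is a fixed reordering of positions together with the insertion of a fixed block of new symbols, and this is reversible; hence $|B'_i|=|B_i|$. Combined with the inter-block distance bound below, which forces $hd(\sigma',\tau')\ge d>0$ whenever $\sigma'$ and $\tau'$ come from different blocks and therefore $\sigma'\ne\tau'$, the blocks $B'_i$ are pairwise disjoint, so $|A'|=\sum_{i=0}^{2r-1}|B_i|$.

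The heart is the distance bound, which I would split into the intra-block and inter-block cases, counting agreements rather than disagreements. Intra-block: within a single $B'_i$ all permutations carry the same new symbols in the same $r$ positions, contributing exactly $r$ agreements, while the old symbols are merely permuted among positions in a fixed way, so the number of old-symbol agreements equals the original count, which is at most $n-d$. Hence any two permutations of $B'_i$ agree in at most $r+(n-d)$ of their $n+r$ positions, giving $hd(B'_i)\ge d$. Inter-block: for $\sigma'\in B'_i$ and $\tau'\in B'_j$ with $i\ne j$, the key claim is that the extension introduces no new agreements among the inserted positions. If $B_i$ and $B_j$ are both of the first type (or both of the second), their new symbols occupy the same $r$ positions but are distinct nontrivial cyclic shifts of one another, so they disagree in every one of those positions; if one block is of each type, the new symbols lie in disjoint position-sets, where one permutation shows a symbol $\ge n$ against an old symbol $<n$, again forcing disagreement. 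In all cases the only possible agreements are among old symbols and are bounded by the original number, at most $n-(d-r)=n-d+r$. Thus $\sigma'$ and $\tau'$ agree in at most $n-d+r$ of $n+r$ positions, so $hd(\sigma',\tau')\ge d$.

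The step I expect to be most delicate is the inter-block case, specifically keeping the bookkeeping of the two block types straight: one must notice that blocks $B_l$ and $B_{l+r}$ reuse the same cyclic shift SHIFT$(\gamma,l)$, so that disagreement there is guaranteed not by distinct shifts but by the disjoint placement of the new symbols (front versus back) together with the value gap between new symbols ($\ge n$) and old symbols ($<n$). Once it is established that no inter-block pair gains a new agreement, the bound $hd(A')\ge d$ follows from the hypothesis $hd(A)\ge d-r$, completing the exhibition of a PA on $Z_{n+r}$ that witnesses $M(n+r,d)\ge\sum_{i=0}^{2r-1}|B_i|$.
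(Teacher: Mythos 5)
Your proof is correct and follows essentially the same agreement-counting argument that the paper sketches in the text preceding the theorem (the formal proof itself is deferred to the cited thesis): exactly $r$ new agreements are created within a block and none between blocks, yielding $(n+r)-(r+(n-d))=d$ and $(n+r)-(n-(d-r))=d$ respectively. If anything, your inter-block case analysis is more careful than the paper's one-line justification, which appeals only to the shifts being distinct cyclic rotations and glosses over the pairs $B_l$, $B_{l+r}$ that reuse the \emph{same} shift --- precisely the point you flag, where disagreement instead comes from the front-versus-back placement and the value gap between new and old symbols.
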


Table \ref{tb:par_pe_ex} illustrates rudimentary parallel partition and extension for $n=9, d=9$ and $r=3$ using a PA $A$ on $Z_9$. 
We provide $k=2r=6$ blocks such that for each block $B_i,~ (0 \leq i \leq 5),~ hd(B_i)\geq d=9$ and  for all $i,j ~(0\leq i \neq j \leq 5),~ hd(B_i, B_j)\geq d-r=6$. 
These blocks comprise the PA $A$ and are shown in the  column on the left of Table \ref{tb:par_pe_ex}. 
The symbols to be relocated by rudimentary parallel partition and extension are shown in blue.
Note that $hd(A) \geq 6$.
Rudimentary parallel partition and extension on $A$ results in the PA $A'$ on $Z_{12}$ with $hd(A') \geq 6$. The permutations comprising $A'$ are shown in the column on the right of Table \ref{tb:par_pe_ex}, with the displaced symbols shown in blue and the new symbols shown in red. 

More results based on Theorem \ref{thm:rudi} are shown in Table \ref{tb:new_parallel_pe}. 
For example, for $n=42, d=39, r=4$, take $PGL(2,41)$, which contains $40\cdot 41\cdot 42=68880$ permutations on 42 symbols, with hamming distance at least 39. 
We found $2r=8$ cosets of $PGL(2,41)$  with $d=35$. 
Then by Theorem \ref{thm:rudi}, $M(46,39)\ge 8\cdot 68880=551040$ using 8 cosets.

\begin{table} 
\scriptsize
\centering
\setlength{\textfloatsep}{0.1cm}
\begin{tabular}{cc}
     \textsc{Initial Permutations in the PA $A$} & \textsc{Modified Permutations in the PA $A'$} \\
      {$\!\begin{aligned}
        \arraycolsep=2.5pt
        \left[\begin{array}
          {ccccccccc}
\color{blue}\mathbf{0} & \color{blue}\mathbf{1} & 
	\color{blue}\mathbf{2} & 3 & 4 & 5 & 6 & 7 & 8 \\
\color{blue}\mathbf{1} & \color{blue}\mathbf{5} & 
	\color{blue}\mathbf{8} & 4 & 6 & 0 & 3 & 2 & 7 \\
\color{blue}\mathbf{2} & \color{blue}\mathbf{8} & 
	\color{blue}\mathbf{6} & 1 & 5 & 7 & 0 & 4 & 3 \\
\color{blue}\mathbf{3} & \color{blue}\mathbf{4} & 	
	\color{blue}\mathbf{1} & 7 & 2 & 6 & 8 & 0 & 5 \\
\color{blue}\mathbf{4} & \color{blue}\mathbf{6} & 
	\color{blue}\mathbf{5} & 2 & 8 & 3 & 7 & 1 & 0 \\
\color{blue}\mathbf{5} & \color{blue}\mathbf{0} & 	
	\color{blue}\mathbf{7} & 6 & 3 & 1 & 4 & 8 & 2 \\
\color{blue}\mathbf{6} & \color{blue}\mathbf{3} & 
	\color{blue}\mathbf{0} & 8 & 7 & 4 & 2 & 5 & 1 \\
\color{blue}\mathbf{7} & \color{blue}\mathbf{2} & 
	\color{blue}\mathbf{4} & 0 & 1 & 8 & 5 & 3 & 6 \\
\color{blue}\mathbf{8} & \color{blue}\mathbf{7} & 
	\color{blue}\mathbf{3} & 5 & 0 & 2 & 1 & 6 & 4
    
    \end{array}\right]\\
        
        \arraycolsep=2.5pt
        \left[\begin{array}
          {ccccccccc}
          
\color{blue}\mathbf{1} & \color{blue}\mathbf{3} & 		
	\color{blue}\mathbf{6} & 7 & 5 & 8 & 2 & 4 & 0 \\
\color{blue}\mathbf{5} & \color{blue}\mathbf{4} & 
	\color{blue}\mathbf{3} & 2 & 0 & 7 & 8 & 6 & 1 \\
\color{blue}\mathbf{8} & \color{blue}\mathbf{1} & 
	\color{blue}\mathbf{0} & 4 & 7 & 3 & 6 & 5 & 2 \\
\color{blue}\mathbf{4} & \color{blue}\mathbf{7} & 
	\color{blue}\mathbf{8} & 0 & 6 & 5 & 1 & 2 & 3 \\
\color{blue}\mathbf{6} & \color{blue}\mathbf{2} & 
	\color{blue}\mathbf{7} & 1 & 3 & 0 & 5 & 8 & 4 \\
\color{blue}\mathbf{0} & \color{blue}\mathbf{6} & 
	\color{blue}\mathbf{4} & 8 & 1 & 2 & 7 & 3 & 5 \\
\color{blue}\mathbf{3} & \color{blue}\mathbf{8} & 
	\color{blue}\mathbf{2} & 5 & 4 & 1 & 0 & 7 & 6 \\
\color{blue}\mathbf{2} & \color{blue}\mathbf{0} & 
	\color{blue}\mathbf{5} & 3 & 8 & 6 & 4 & 1 & 7 \\
\color{blue}\mathbf{7} & \color{blue}\mathbf{5} & 	
	\color{blue}\mathbf{1} & 6 & 2 & 4 & 3 & 0 & 8         

        \end{array}\right]\\
        
        \arraycolsep=2.5pt
        \left[\begin{array}
          {ccccccccc}
          
\color{blue}\mathbf{3} & \color{blue}\mathbf{5} & 
	\color{blue}\mathbf{7} & 2 & 6 & 0 & 8 & 4 & 1 \\
\color{blue}\mathbf{4} & \color{blue}\mathbf{0} & 
	\color{blue}\mathbf{2} & 8 & 3 & 1 & 7 & 6 & 5 \\
\color{blue}\mathbf{1} & \color{blue}\mathbf{7} & 
	\color{blue}\mathbf{4} & 6 & 0 & 2 & 3 & 5 & 8 \\
\color{blue}\mathbf{7} & \color{blue}\mathbf{6} & 
	\color{blue}\mathbf{0} & 1 & 8 & 3 & 5 & 2 & 4 \\
\color{blue}\mathbf{2} & \color{blue}\mathbf{3} & 
	\color{blue}\mathbf{1} & 5 & 7 & 4 & 0 & 8 & 6 \\
\color{blue}\mathbf{6} & \color{blue}\mathbf{1} & 
	\color{blue}\mathbf{8} & 7 & 4 & 5 & 2 & 3 & 0 \\
\color{blue}\mathbf{8} & \color{blue}\mathbf{4} & 
	\color{blue}\mathbf{5} & 0 & 2 & 6 & 1 & 7 & 3 \\
\color{blue}\mathbf{0} & \color{blue}\mathbf{8} & 
	\color{blue}\mathbf{3} & 4 & 5 & 7 & 6 & 1 & 2 \\
\color{blue}\mathbf{5} & \color{blue}\mathbf{2} & 
	\color{blue}\mathbf{6} & 3 & 1 & 8 & 4 & 0 & 7          
          
        \end{array}\right]\\
        \arraycolsep=2.5pt
        \left[ \begin{array}
          {ccccccccc}
4 & 2 & 7 & 8 & 0 & 1 & 3 & 5 & 6 \\
6 & 8 & 2 & 7 & 1 & 5 & 4 & 0 & 3 \\
5 & 6 & 4 & 3 & 2 & 8 & 1 & 7 & 0 \\
2 & 1 & 0 & 5 & 3 & 4 & 7 & 6 & 8 \\
8 & 5 & 1 & 0 & 4 & 6 & 2 & 3 & 7 \\
3 & 7 & 8 & 2 & 5 & 0 & 6 & 1 & 4 \\
7 & 0 & 5 & 1 & 6 & 3 & 8 & 4 & 2 \\
1 & 4 & 3 & 6 & 7 & 2 & 0 & 8 & 5 \\
0 & 3 & 6 & 4 & 8 & 7 & 5 & 2 & 1

		\end{array}\right]\\
        
        \arraycolsep=2.5pt
        \left[ \begin{array}
          {ccccccccc}

3 & 5 & 7 & 8 & 4 & 6 & 0 & 1 & 2 \\
4 & 0 & 2 & 7 & 6 & 3 & 1 & 5 & 8 \\
1 & 7 & 4 & 3 & 5 & 0 & 2 & 8 & 6 \\
7 & 6 & 0 & 5 & 2 & 8 & 3 & 4 & 1 \\
2 & 3 & 1 & 0 & 8 & 7 & 4 & 6 & 5 \\
6 & 1 & 8 & 2 & 3 & 4 & 5 & 0 & 7 \\
8 & 4 & 5 & 1 & 7 & 2 & 6 & 3 & 0 \\
0 & 8 & 3 & 6 & 1 & 5 & 7 & 2 & 4 \\
5 & 2 & 6 & 4 & 0 & 1 & 8 & 7 & 3

        \end{array}\right]\\
        \arraycolsep=2.5pt
        \left[ \begin{array}
          {ccccccccc}
          
0 & 4 & 2 & 5 & 6 & 1 & 7 & 3 & 8 \\
1 & 6 & 8 & 0 & 3 & 5 & 2 & 4 & 7 \\
2 & 5 & 6 & 7 & 0 & 8 & 4 & 1 & 3 \\
3 & 2 & 1 & 6 & 8 & 4 & 0 & 7 & 5 \\
4 & 8 & 5 & 3 & 7 & 6 & 1 & 2 & 0 \\
5 & 3 & 7 & 1 & 4 & 0 & 8 & 6 & 2 \\
6 & 7 & 0 & 4 & 2 & 3 & 5 & 8 & 1 \\
7 & 1 & 4 & 8 & 5 & 2 & 3 & 0 & 6 \\
8 & 0 & 3 & 2 & 1 & 7 & 6 & 5 & 4
          
        \end{array}\right]
        
      \end{aligned}$}
      
        & {$\!
        \begin{aligned}
        \arraycolsep=2.5pt
        
        \left[\begin{array}
         {cccccccccccc}

\color{red}\mathbf{9} & \color{red}\mathbf{10} & \color{red}\mathbf{11} & 3 & 4 & 5 & 6 & 7 & 8 & \color{blue}\mathbf{0} & \color{blue}\mathbf{1} & \color{blue}\mathbf{2}  \\

\color{red}\mathbf{9} & \color{red}\mathbf{10} & \color{red}\mathbf{11} & 4 & 6 & 0 & 3 & 2 & 7 & \color{blue}\mathbf{1} & \color{blue}\mathbf{5} & \color{blue}\mathbf{8}  \\

\color{red}\mathbf{9} & \color{red}\mathbf{10} & \color{red}\mathbf{11} & 1 & 5 & 7 & 0 & 4 & 3 & \color{blue}\mathbf{2} & \color{blue}\mathbf{8} & \color{blue}\mathbf{6}  \\

\color{red}\mathbf{9} & \color{red}\mathbf{10} & \color{red}\mathbf{11} & 7 & 2 & 6 & 8 & 0 & 5 & \color{blue}\mathbf{3} & \color{blue}\mathbf{4} & \color{blue}\mathbf{1}  \\

\color{red}\mathbf{9} & \color{red}\mathbf{10} & \color{red}\mathbf{11} & 2 & 8 & 3 & 7 & 1 & 0 & \color{blue}\mathbf{4} & \color{blue}\mathbf{6} & \color{blue}\mathbf{5}  \\

\color{red}\mathbf{9} & \color{red}\mathbf{10} & \color{red}\mathbf{11} & 6 & 3 & 1 & 4 & 8 & 2 & \color{blue}\mathbf{5} & \color{blue}\mathbf{0} & \color{blue}\mathbf{7}  \\

\color{red}\mathbf{9} & \color{red}\mathbf{10} & \color{red}\mathbf{11} & 8 & 7 & 4 & 2 & 5 & 1 & \color{blue}\mathbf{6} & \color{blue}\mathbf{3} & \color{blue}\mathbf{0}  \\

\color{red}\mathbf{9} & \color{red}\mathbf{10} & \color{red}\mathbf{11} & 0 & 1 & 8 & 5 & 3 & 6 & \color{blue}\mathbf{7} & \color{blue}\mathbf{2} & \color{blue}\mathbf{4}  \\

\color{red}\mathbf{9} & \color{red}\mathbf{10} & \color{red}\mathbf{11} & 5 & 0 & 2 & 1 & 6 & 4 & \color{blue}\mathbf{8} & \color{blue}\mathbf{7} & \color{blue}\mathbf{3}
               
        \end{array}\right]\\

	\arraycolsep=2.5pt
        \left[\begin{array}
          {cccccccccccc}

\color{red}\mathbf{10} & \color{red}\mathbf{11} & \color{red}\mathbf{9} & 7 & 5 & 8 & 2 & 4 & 0 & \color{blue}\mathbf{1} & \color{blue}\mathbf{3} & \color{blue}\mathbf{6}  \\

\color{red}\mathbf{10} & \color{red}\mathbf{11} & \color{red}\mathbf{9} & 2 & 0 & 7 & 8 & 6 & 1 & \color{blue}\mathbf{5} & \color{blue}\mathbf{4} & \color{blue}\mathbf{3}  \\

\color{red}\mathbf{10} & \color{red}\mathbf{11} & \color{red}\mathbf{9} & 4 & 7 & 3 & 6 & 5 & 2 & \color{blue}\mathbf{8} & \color{blue}\mathbf{1} & \color{blue}\mathbf{0}  \\

\color{red}\mathbf{10} & \color{red}\mathbf{11} & \color{red}\mathbf{9} & 0 & 6 & 5 & 1 & 2 & 3 & \color{blue}\mathbf{4} & \color{blue}\mathbf{7} & \color{blue}\mathbf{8}  \\

\color{red}\mathbf{10} & \color{red}\mathbf{11} & \color{red}\mathbf{9} & 1 & 3 & 0 & 5 & 8 & 4 & \color{blue}\mathbf{6} & \color{blue}\mathbf{2} & \color{blue}\mathbf{7}  \\

\color{red}\mathbf{10} & \color{red}\mathbf{11} & \color{red}\mathbf{9} & 8 & 1 & 2 & 7 & 3 & 5 & \color{blue}\mathbf{0} & \color{blue}\mathbf{6} & \color{blue}\mathbf{4}  \\

\color{red}\mathbf{10} & \color{red}\mathbf{11} & \color{red}\mathbf{9} & 5 & 4 & 1 & 0 & 7 & 6 & \color{blue}\mathbf{3} & \color{blue}\mathbf{8} & \color{blue}\mathbf{2}  \\

\color{red}\mathbf{10} & \color{red}\mathbf{11} & \color{red}\mathbf{9} & 3 & 8 & 6 & 4 & 1 & 7 & \color{blue}\mathbf{2} & \color{blue}\mathbf{0} & \color{blue}\mathbf{5}  \\

\color{red}\mathbf{10} & \color{red}\mathbf{11} & \color{red}\mathbf{9} & 6 & 2 & 4 & 3 & 0 & 8 & \color{blue}\mathbf{7} & \color{blue}\mathbf{5} & \color{blue}\mathbf{1}

        \end{array}\right]\\
        \arraycolsep=2.5pt
        \left[\begin{array}
          {cccccccccccc}
          
\color{red}\mathbf{11} & \color{red}\mathbf{9} & \color{red}\mathbf{10} & 2 & 6 & 0 & 8 & 4 & 1 & \color{blue}\mathbf{3} & \color{blue}\mathbf{5} & \color{blue}\mathbf{7}  \\

\color{red}\mathbf{11} & \color{red}\mathbf{9} & \color{red}\mathbf{10} & 8 & 3 & 1 & 7 & 6 & 5 & \color{blue}\mathbf{4} & \color{blue}\mathbf{0} & \color{blue}\mathbf{2}  \\

\color{red}\mathbf{11} & \color{red}\mathbf{9} & \color{red}\mathbf{10} & 6 & 0 & 2 & 3 & 5 & 8 & \color{blue}\mathbf{1} & \color{blue}\mathbf{7} & \color{blue}\mathbf{4}  \\

\color{red}\mathbf{11} & \color{red}\mathbf{9} & \color{red}\mathbf{10} & 1 & 8 & 3 & 5 & 2 & 4 & \color{blue}\mathbf{7} & \color{blue}\mathbf{6} & \color{blue}\mathbf{0}  \\

\color{red}\mathbf{11} & \color{red}\mathbf{9} & \color{red}\mathbf{10} & 5 & 7 & 4 & 0 & 8 & 6 & \color{blue}\mathbf{2} & \color{blue}\mathbf{3} & \color{blue}\mathbf{1}  \\

\color{red}\mathbf{11} & \color{red}\mathbf{9} & \color{red}\mathbf{10} & 7 & 4 & 5 & 2 & 3 & 0 & \color{blue}\mathbf{6} & \color{blue}\mathbf{1} & \color{blue}\mathbf{8}  \\

\color{red}\mathbf{11} & \color{red}\mathbf{9} & \color{red}\mathbf{10} & 0 & 2 & 6 & 1 & 7 & 3 & \color{blue}\mathbf{8} & \color{blue}\mathbf{4} & \color{blue}\mathbf{5}  \\

\color{red}\mathbf{11} & \color{red}\mathbf{9} & \color{red}\mathbf{10} & 4 & 5 & 7 & 6 & 1 & 2 & \color{blue}\mathbf{0} & \color{blue}\mathbf{8} & \color{blue}\mathbf{3}  \\

\color{red}\mathbf{11} & \color{red}\mathbf{9} & \color{red}\mathbf{10} & 3 & 1 & 8 & 4 & 0 & 7 & \color{blue}\mathbf{5} & \color{blue}\mathbf{2} & \color{blue}\mathbf{6}          
                    
        \end{array}\right]\\
        
        \arraycolsep=2.5pt
        \left[ \begin{array}
          {cccccccccccc}
          
4 & 2 & 7 & 8 & 0 & 1 & 3 & 5 & 6 & \color{red}\mathbf{9} & \color{red}\mathbf{10} & \color{red}\mathbf{11} \\
6 & 8 & 2 & 7 & 1 & 5 & 4 & 0 & 3 & \color{red}\mathbf{9} & \color{red}\mathbf{10} & \color{red}\mathbf{11} \\
5 & 6 & 4 & 3 & 2 & 8 & 1 & 7 & 0 & \color{red}\mathbf{9} & \color{red}\mathbf{10} & \color{red}\mathbf{11} \\
2 & 1 & 0 & 5 & 3 & 4 & 7 & 6 & 8 & \color{red}\mathbf{9} & \color{red}\mathbf{10} & \color{red}\mathbf{11} \\
8 & 5 & 1 & 0 & 4 & 6 & 2 & 3 & 7 & \color{red}\mathbf{9} & \color{red}\mathbf{10} & \color{red}\mathbf{11} \\
3 & 7 & 8 & 2 & 5 & 0 & 6 & 1 & 4 & \color{red}\mathbf{9} & \color{red}\mathbf{10} & \color{red}\mathbf{11} \\
7 & 0 & 5 & 1 & 6 & 3 & 8 & 4 & 2 & \color{red}\mathbf{9} & \color{red}\mathbf{10} & \color{red}\mathbf{11} \\
1 & 4 & 3 & 6 & 7 & 2 & 0 & 8 & 5 & \color{red}\mathbf{9} & \color{red}\mathbf{10} & \color{red}\mathbf{11} \\
0 & 3 & 6 & 4 & 8 & 7 & 5 & 2 & 1 & \color{red}\mathbf{9} & \color{red}\mathbf{10} & \color{red}\mathbf{11}
        \end{array}\right]\\
        \arraycolsep=2.5pt
        \left[ \begin{array}
          {cccccccccccc}
          
3 & 5 & 7 & 8 & 4 & 6 & 0 & 1 & 2 & \color{red}\mathbf{10} & \color{red}\mathbf{11} & \color{red}\mathbf{9} \\
4 & 0 & 2 & 7 & 6 & 3 & 1 & 5 & 8 & \color{red}\mathbf{10} & \color{red}\mathbf{11} & \color{red}\mathbf{9} \\
1 & 7 & 4 & 3 & 5 & 0 & 2 & 8 & 6 & \color{red}\mathbf{10} & \color{red}\mathbf{11} & \color{red}\mathbf{9} \\
7 & 6 & 0 & 5 & 2 & 8 & 3 & 4 & 1 & \color{red}\mathbf{10} & \color{red}\mathbf{11} & \color{red}\mathbf{9} \\
2 & 3 & 1 & 0 & 8 & 7 & 4 & 6 & 5 & \color{red}\mathbf{10} & \color{red}\mathbf{11} & \color{red}\mathbf{9} \\
6 & 1 & 8 & 2 & 3 & 4 & 5 & 0 & 7 & \color{red}\mathbf{10} & \color{red}\mathbf{11} & \color{red}\mathbf{9} \\
8 & 4 & 5 & 1 & 7 & 2 & 6 & 3 & 0 & \color{red}\mathbf{10} & \color{red}\mathbf{11} & \color{red}\mathbf{9} \\
0 & 8 & 3 & 6 & 1 & 5 & 7 & 2 & 4 & \color{red}\mathbf{10} & \color{red}\mathbf{11} & \color{red}\mathbf{9} \\
5 & 2 & 6 & 4 & 0 & 1 & 8 & 7 & 3 & \color{red}\mathbf{10} & \color{red}\mathbf{11} & \color{red}\mathbf{9}

       	\end{array}\right]\\
        \arraycolsep=2.5pt
        \left[ \begin{array}
          {cccccccccccc}
           
0 & 4 & 2 & 5 & 6 & 1 & 7 & 3 & 8 & \color{red}\mathbf{11} & \color{red}\mathbf{9} & \color{red}\mathbf{10} \\
1 & 6 & 8 & 0 & 3 & 5 & 2 & 4 & 7 & \color{red}\mathbf{11} & \color{red}\mathbf{9} & \color{red}\mathbf{10} \\
2 & 5 & 6 & 7 & 0 & 8 & 4 & 1 & 3 & \color{red}\mathbf{11} & \color{red}\mathbf{9} & \color{red}\mathbf{10} \\
3 & 2 & 1 & 6 & 8 & 4 & 0 & 7 & 5 & \color{red}\mathbf{11} & \color{red}\mathbf{9} & \color{red}\mathbf{10} \\
4 & 8 & 5 & 3 & 7 & 6 & 1 & 2 & 0 & \color{red}\mathbf{11} & \color{red}\mathbf{9} & \color{red}\mathbf{10} \\
5 & 3 & 7 & 1 & 4 & 0 & 8 & 6 & 2 & \color{red}\mathbf{11} & \color{red}\mathbf{9} & \color{red}\mathbf{10} \\
6 & 7 & 0 & 4 & 2 & 3 & 5 & 8 & 1 & \color{red}\mathbf{11} & \color{red}\mathbf{9} & \color{red}\mathbf{10} \\
7 & 1 & 4 & 8 & 5 & 2 & 3 & 0 & 6 & \color{red}\mathbf{11} & \color{red}\mathbf{9} & \color{red}\mathbf{10} \\
8 & 0 & 3 & 2 & 1 & 7 & 6 & 5 & 4 & \color{red}\mathbf{11} & \color{red}\mathbf{9} & \color{red}\mathbf{10}
        
        \end{array}\right]
  \end{aligned}$}
\end{tabular}
\caption{An example of rudimentary parallel partition and extension, with $n=9, d=9, r=3$. The column on the left shows a PA $A$ consisting of six blocks of permutations on $Z_9$ with $hd(A) \geq 6$. The column on the right shows the new PA $A'$ on $Z_{12}$ with $hd(A') \geq 6$.}
\label{tb:par_pe_ex}
\end{table}

\clearpage

\subsection{General Parallel Partition with $r$ Symbols}

\label{general_parallel}
As described in Section \ref{rudi}, rudimentary parallel partition and extension with $r=2$  allows extension of at most $2r=4$ blocks. 
We describe a new technique, called {\em general parallel partition and extension with $r$ symbols}, that allows a larger number of blocks to be extended. 

We start with the simplest form of general parallel partition and extension, for $r=2$ symbols.
It expands on the simple partition and extension technique described in Section \ref{s:prev} by introducing an additional pair of partitions of $Z_n$, denoted by $\mathcal{R}$ and $\mathcal{S}$ in the description that follows.

Let $s$ be a positive integer, and let $M_1, M_2, \dots, M_s$ be an ordered list of $s$ pairwise disjoint PAs on $Z_n$. Let 
$\P=(P_1, P_2, \dots, P_s )$, 
$\Q=(Q_1, Q_2, \dots, Q_s )$,
$\R=(R_1, R_2, \dots, R_s )$, and 
$\S=(S_1, S_2, \dots, S_s )$, 
be four partitions of $Z_n$
such that, for all $i$, $P_i\cap R_i=\emptyset$ and $Q_i\cap S_i=\emptyset$.    
The sets $P_i$ and $R_i$ are sets of locations for replacing symbols in the PA $M_i$, and 
the sets $Q_i$ and $S_i$ are sets of symbols to be replaced.   
For each $i$, let $\text{2-}covered(M_i)$ be defined by
\[
\text{2-}covered(M_i)=\{\sigma \in M_i  ~|~ \exists p\in P_i, ~ \exists r \ne p \in R_i ~ (\sigma(p)\in Q_i , ~\sigma(r)\in S_i) \}.
\]

We say that a permutation $\sigma$ is $\text{2-}covered$ if $\sigma \in \text{2-}covered(M_i)$ for some $i$. 
In general, when $\sigma$ is $\text{2-}covered$, there may be multiple pairs $(p,r)\in P_i\times R_i$ 
such that $\sigma(p)\in Q_i$ and $\sigma(r)\in S_i$.
If so, arbitrarily designate one of these pairs 
to cover $\sigma$. We use the notation $(p,r)$ to refer to the designated pair.

The {\em parallel extension of $\sigma$ by the pair $(p,r)$}, denoted by $\ext(\sigma)=\sigma'$, 
is a permutation on $Z_{n+2}$ defined by 
\begin{align}
\ext(\sigma(x))=\sigma'(x)= 
\begin{cases}
   n  			& \text{if } x=p\\
   \sigma(p) 	& \text{if } x=n\\
   n+1  		& \text{if } x=r\\
   \sigma(r) 	& \text{if } x=n+1\\   
   \sigma(j) 	& \forall j,~(0\le j<n ~\land ~j\notin\{p,r\}).
\end{cases} 
\label{eq:2ext-sigma}
\end{align}
We will always extend  $\sigma$ at the designated pair of positions $(p,r)$ and refer to this 
new permutation as $\ext(\sigma)$ or $\sigma'$ interchangeably. Note that in order for a permutation $\sigma'$ to be included in the extended set of permutations on $n+2$ 
symbols, $\sigma$ must be 2-covered. 
In other words, $\sigma$ must have two of the named symbols in two of the named positions. 

For our construction, we include two additional PAs, $M_{s+1},M_{s+2}$, 
 for which there are no corresponding sets of positions or symbols. 
None of the permutations in $M_{s+1}$ or $M_{s+2}$ are in any of the sets $M_{i}~(1\le i\le s)$.  
In a manner similar to rudimentary parallel partition and extension, parallel partition and extension extends $M_{s+1}$ and $M_{s+2}$ by appending the two new symbols $n$ and $n+1$, to the end of each permutation. For 
$M_{s+1}$, the sequence $(n,n+1)$ is appended to the end of each permutation. Similarly, for $M_{s+2}$, the sequence $(n+1,n)$ is appended to the end of each permutation. 
Every permutation in $M_{s+1}$ and $M_{s+2}$ is used in the construction of our new PA. 
We create the list
$\M=(M_1, M_2, \dots, M_{s+1}, M_{s+2})$, which includes the extra sets $M_{s+1}$ and $M_{s+2}$. 

A partition system $\Pi=({\M,\P,\Q,\R,\S})$ is a {\em $(d,2)$-partition system} for $Z_n$ if it 
satisfies the following properties:
\begin{enumerate} [\indent(I)]
\item \label{(d,2) prop1} $\forall M_i\in{\mathcal M}, ~hd(M_i)\ge d$, and
\item \label{(d,2) prop2} $\forall i,j ~(1\le i<j\le s+2),~ hd(M_i,M_j)\ge d-2$.
\end{enumerate}

Parallel partition and extension uses sets $P_i,Q_i,R_i,$ and $S_i$ from the partitions $\P,\Q,\R,$
and $\S$, respectively, to modify the 2-covered permutations in $M_i$, for $1\le i\le s$, 
for the purpose of creating a new PA on $Z_{n+2}$ with Hamming distance $d$.  
Let $\Pi=(\M,\P,\Q,\R,S)$ be a $(d,2)$-partition system, where 
${\mathcal M}=(M_1, M_2, \dots, M_{s+2})$, for some $s$.  
We now show how parallel partition and extension operation creates a new permutation array 
$\ext(\Pi)$ on $Z_{n+2}$.
For all $i$ $(1\le i\le s)$, let $\ext(M_i)$ be the set of permutations defined by 
\[
\ext(M_i)=\{\ext(\sigma)~|~\sigma \in \text{2-}covered (M_i)\}.
\] 
For $M_{s+1}$, let $\ext(M_{s+1}) $ be the set of permutations on $Z_{n+2}$ defined by 
adding the symbols $n$ and $n+1$, in that order, to the end of every permutation of $M_{s+1}$. For $M_{s+2}$, let $\ext(M_{s+2})$ be the set of permutations on $Z_{n+2}$ defined by adding the symbols $n+1$ and $n$, in that order, to the end of every permutation of $M_{s+2}$.

Let $\ext(\Pi)$ be defined by 
\[
\ext(\Pi)=\bigcup_{i=1}^{s+2} ~\ext(M_i). 
\]
Note that 
\[
|\ext(\Pi)|=\sum_{i=1}^{s+2} |\ext(M_i)|.
\]

\begin{thm} \label{thm:parallel_pe_2}
Let $d$ be a positive integer,
let  $\Pi=(\M,\P,\Q,\R,\S)$ be a $(d,2)$-partition system for $Z_n$, 
with $\M=(M_1,M_2,\dots,M_{s+2})$ for some positive integer $s$.
Let $\ext(\Pi)$ be the PA on $Z_{n+2}$ created by parallel partition and extension.
Then, $hd(\ext(\Pi))\ge d$. 
\end{thm}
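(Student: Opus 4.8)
The plan is to fix two distinct permutations $\sigma'\in\ext(M_i)$ and $\tau'\in\ext(M_j)$ of $\ext(\Pi)$ and to bound $hd(\sigma',\tau')$ by tracking how the extension changes the number of agreements. Write $a=|\{x: \sigma(x)=\tau(x)\}|$ for the agreements of the underlying permutations $\sigma,\tau$ on the $n$ old positions, and $a'$ for the agreements of $\sigma',\tau'$ on the $n+2$ positions of $Z_{n+2}$; since $hd(\sigma',\tau')=(n+2)-a'$, it suffices to bound $a'$. The only positions whose contribution can differ from $\sigma,\tau$ are the (at most) four old positions that are overwritten --- two in each permutation, namely the designated pair $(p,r)$ for $\sigma$ and $(p',r')$ for $\tau$ --- together with the two new positions $n$ and $n+1$; every other position is inherited verbatim. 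The argument then splits according to whether $\sigma',\tau'$ come from the same block ($i=j$) or from different blocks ($i\ne j$), because Properties \ref{(d,2) prop1} and \ref{(d,2) prop2} supply different starting distances in the two regimes.

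I would treat the cross-block case $i\ne j$ first, as it mirrors simple partition and extension and is where the four partition hypotheses do the work. Because $\P,\Q,\R,\S$ are partitions of $Z_n$, for $i\ne j$ the designated positions satisfy $p\ne p'$ and $r\ne r'$ (disjointness of $P_i,P_j$ and of $R_i,R_j$), while the displaced symbols satisfy $\sigma(p)\ne\tau(p')$ and $\sigma(r)\ne\tau(r')$ (disjointness of $Q_i,Q_j$ and of $S_i,S_j$). Substituting these inequalities into the definition \eqref{eq:2ext-sigma} shows that none of the overwritten positions nor either new position can be an agreement of $\sigma',\tau'$, whereas the overwritten positions may well destroy old agreements; hence $a'\le a$. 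With Property \ref{(d,2) prop2}, $hd(\sigma,\tau)\ge d-2$, this gives $hd(\sigma',\tau')=(n+2)-a'\ge 2+(n-a)=2+hd(\sigma,\tau)\ge d$. The mixed subcases, where one of $\sigma',\tau'$ instead comes from $M_{s+1}$ or $M_{s+2}$, are handled identically: an appended symbol $n$ or $n+1$ never matches a displaced old symbol, and the opposite append orders $(n,n+1)$ versus $(n+1,n)$ prevent any agreement at positions $n,n+1$.

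For the intra-block case $i=j$ I would use Property \ref{(d,2) prop1}, $hd(\sigma,\tau)\ge d$, so $a\le n-d$, and it then suffices to show the extension creates at most two net new agreements, i.e.\ $a'\le a+2$. When $i\in\{s+1,s+2\}$ this is immediate: the two symbols are appended identically throughout the block, contributing exactly two agreements at positions $n,n+1$ and leaving the old positions untouched, so $a'=a+2$. The remaining situation --- $\sigma,\tau$ both $2$-covered in the same $M_i$ but extended at possibly different pairs $(p,r)$ and $(p',r')$ --- is the main obstacle and needs a short case analysis. Here $P_i\cap R_i=\emptyset$ forces $\{p,p'\}\subseteq P_i$ and $\{r,r'\}\subseteq R_i$ to be disjoint, so the only coincidences possible are $p=p'$ and $r=r'$, giving four cases. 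In each I would tally the agreements gained at the (at most two) positions where both permutations now carry the same new symbol, the agreements at $n,n+1$ governed by whether $\sigma(p)=\tau(p')$ and $\sigma(r)=\tau(r')$, and the old agreements destroyed at overwritten positions; the bookkeeping yields net change $a'-a=2$ exactly when $p=p'$ and $r=r'$, and $a'-a\le 2$ otherwise. Combining $a'\le a+2\le n-d+2$ gives $hd(\sigma',\tau')=(n+2)-a'\ge d$, which closes the last case and completes the argument.
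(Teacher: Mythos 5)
Your argument is correct, but it follows a different route from the paper. You prove the bound by a direct agreement count: fix $\sigma'\in\ext(M_i)$, $\tau'\in\ext(M_j)$, observe that only the designated positions $p,r,p',r'$ and the two new positions $n,n+1$ can change status, and then case on $i=j$ versus $i\ne j$, using the pairwise disjointness of the $P$'s, $Q$'s, $R$'s and $S$'s to kill all potential new agreements in the cross-block case (so $a'\le a$ and Property (II) gives distance $d$) and the condition $P_i\cap R_i=\emptyset$ to cap the intra-block gain at $a'\le a+2$ (so Property (I) gives distance $d$); your bookkeeping in each subcase checks out, including the mixed subcases involving $M_{s+1}$ and $M_{s+2}$. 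The paper instead factors the parallel extension into two sequential applications of Theorem~\ref{th:simple_pe}: it views $(\M',\P,\Q)$ as a distance-$(d-1)$ partition system on $Z_n$, extends to get $ext(\Pi')$ on $Z_{n+1}$ with inter-block distance $d-1$ and intra-block distance still $d$, then views $(\M'',\R,\S)$ as a distance-$d$ partition system on $Z_{n+1}$ and extends again, handling $\ext(M_{s+1})$ and $\ext(M_{s+2})$ and their cross terms by separate short counts. Your version is more self-contained and makes explicit exactly where each of the four disjointness hypotheses is consumed, and it scales transparently to $r>2$ new symbols; the paper's version is shorter on the page because the counting is outsourced to Theorem~\ref{th:simple_pe}, and it explains conceptually why a $(d,2)$-system is the right hypothesis (it is precisely what makes the intermediate object a distance-$(d-1)$ system whose extension is a distance-$d$ system), at the cost of having to treat the two appended blocks outside the main mechanism.
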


\begin{proof}
Our proof has three steps. We first use simple partition and extension to create a PA $ext(\Pi')$, on $Z_{n+1}$, that exhibits $hd(ext(\Pi'))\ge d-1$. Next, using simple partition and extension again, we create a PA $ext(\Pi'')$, on $Z_{n+2}$, that exhibits $hd(ext(\Pi''))\ge d$. Finally, we show that the PA $\ext(\Pi)=  ext(\Pi'') \cup \ext(M_{s+1}) \cup \ext(M_{s+2})$  exhibits $hd(\ext(\Pi))\ge d$.

Consider $\M'=(M_1, M_2, \dots, M_s)$.
First, observe that $\Pi'=(\M',\P,\R)$ can be viewed as a distance-$(d-1)$ partition system for $Z_n$ 
since $hd(M_i)\ge d\ge d-1$ for all $i$, ($1\le i\le s$) and $hd(M_i,M_j)\ge d-2$ for all $i,j$, ($1\le i<j\le s$). Simple partition and extension on $\Pi'$ results in the PA $ext(\Pi')$ on $Z_{n+1}$.
By Theorem \ref{th:simple_pe}, $hd(ext(\Pi'))\ge d-1$. 
In particular, for all $i,j$ ($1\le i,j\le s,~ i\neq j$), $hd(ext(M_i),ext(M_j))\ge d-1$.

Notice that, for all $i$ ($1\le i\le s$), $hd(ext(M_i))\ge d$ since $hd(M_i)\ge d$.
(As shown in \cite{bms17}, this follows from case 1 in the proof of Theorem \ref{th:simple_pe}.
For two permutations $\sigma$ and $\tau$ from the same set $M_i$, 
at most one new agreement appears between $ext(\sigma)$ and $ext(\tau)$. 
Since $ext(\sigma)$ and $ext(\tau)$ are in $Z_{n+1}$, $hd(ext(\sigma),ext(\tau))= hd(\sigma,\tau) \geq d$. 
See \cite{bms17} for the full proof of Theorem \ref{th:simple_pe}.)

Let $\M''=(ext(M_1), ext(M_2), \dots, ext(M_s))$.
Then $\Pi''=(\M'',\R,\S)$ is a distance-$d$ partition system for $Z_{n+1}$.
Simple partition and extension on $\Pi''$ results in the PA $ext(\Pi'')$ on $Z_{n+2}$.
By Theorem \ref{th:simple_pe}, $hd(ext(\Pi''))\ge d$.

By assumption, $\Pi$ is a $(d,2)$-partition system, so, by property \ref{(d,2) prop1} of $(d,2)$ partition systems, $hd(M_{s+1})\ge d$ and $hd(M_{s+2})\ge d$.
By definition, every permutation $\tau'$ in $\ext(M_{s+1})$ is built from a permutation  $\tau$ in $M_{s+1}$ by appending the sequence $(n,n+1)$ to the end. 
This increases the length of each permutation by 2, and number of agreements between every pair of permutations in $\ext(M_{s+1})$ by 2.
So $hd(\ext(M_{s+1}))=n+2-((n-d)+2) \ge d$.
Similar reasoning applies to every permutation in $\ext(M_{s+2})$ using the appended sequence $(n+1,n)$, so $hd(\ext(M_{s+2}))\ge d$. 
Let $\tau' \in \ext(M_{s+1})$ and $\rho' \in \ext(M_{s+2})$ be arbitrary permutations.
The appended sequences $(n,n+1)$ and $(n+1,n)$ create no new agreements between $\tau'$ and $\rho'$.  
By property \ref{(d,2) prop2} of $(d,2)$ partition systems, $\forall i,j ~(1 \le i<j \le s+2),~ hd(M_{i},M_{j})\ge d-2$.
In particular, $hd(M_{s+1},M_{s+2})\ge d-2$.
So it follows that $hd(\ext(M_{s+1}),\ext(M_{s+2}))\ge n+2-(n-(d-2))=d$.

To see that $hd(ext(\Pi''),\ext(M_{s+1}))\ge d$, let $\sigma'' \in ext(\Pi'')$.
Extending the original permutation $\sigma$ to create $\sigma''$ merely replaces designated symbols in designated positions with the symbols $n$ and $n+1$, and moves the displaced symbols to positions $n$ and $n+1$, respectively. 
On the other hand, for any permutation $\tau' \in \ext(M_{s+1})$, the symbols $n$ and $n+1$ are in positions $n$ and $n+1$.
In both cases, no other symbols are moved. So the symbols $n$ and $n+1$ in $\sigma''$ are not in the same locations as they are in $\tau'$ and neither are the displaced symbols. That is, no new agreements are created.
Hence, $hd(ext(\Pi''),\ext(M_{s+1}))\ge n+2-(n-(d-2))=d$.
Similarly, $hd(ext(\Pi''),\ext(M_{s+2}))\ge n+2-(n-(d-2))=d$.

Finally, observe that  $\ext(\Pi)=  ext(\Pi'') ~\cup~ \ext(M_{s+1}) ~\cup~ \ext(M_{s+2})$. 
We showed above that the pairwise Hamming distance between all PAs in $\ext(\Pi)$ is at least $d$, so it follows that $hd(\ext(\Pi)) \ge d$.
\end{proof}

\begin{changemargin}{1cm}{1cm}
\begin{example}
\label{example_parallel_pe}
This example illustrates the use of Theorem \ref{thm:parallel_pe_2} to construct a PA for $n=40$ and $d=34$.
We start with $PGL(2,37)$ is a PA on $Z_{38}$. 
It contains $38\cdot 37\cdot 36=50,616$ permutations with Hamming distance at least 36, giving $M(38,36) \ge 50,616 $. 
Using the coset method \cite{bereg2015constructing}, we found five cosets of $PGL(2,37)$ in $S_{38}$, with Hamming distance 34 from $PGL(2,37)$ (see Table \ref{tb:coset}). 
The cosets are defined by the coset representatives $\alpha,\beta,\gamma,\delta$ and $\theta$: 
{\scriptsize
\arraycolsep=.18em
\[
\begin{matrix} 
\alpha=27 & 12 & 30 & 25 & 15 & 37 & 35 & 22 & 29 & 36 & 10 & 1 & 13 & 33 & 24 & 3 & 28 & 16 & 26 & 8 & 19 & 17 & 23 & 0 & 11 & 34 & 20 & 5 & 31 & 6 & 21 & 14 & 18 & 32 & 7 & 9 & 2 & 4\\
\beta=16 & 22 & 35 & 6 & 4 & 30 & 37 & 26 & 23 & 11 & 0 & 20 & 18 & 24 & 8 & 7 & 15 & 13 & 1 & 29 & 36 & 27 & 17 & 33 & 3 & 9 & 10 & 14 & 32 & 25 & 12 & 19 & 28 & 21 & 2 & 31 & 5 & 34\\
\gamma=12 & 26 & 21 & 32 & 37 & 24 & 2 & 9 & 23 & 27 & 0 & 30 & 18 & 16 & 20 & 11 & 6 & 34 & 33 & 29 & 15 & 22 & 5 & 10 & 17 & 4 & 35 & 13 & 28 & 1 & 14 & 25 & 7 & 36 & 19 & 3 & 31 & 8\\
\delta=17 & 28 & 22 & 37 & 26 & 9 & 8 & 12 & 18 & 4 & 32 & 33 & 31 & 5 & 2 & 1 & 34 & 29 & 0 & 3 & 21 & 6 & 10 & 16 & 23 & 36 & 20 & 15 & 14 & 35 & 11 & 30 & 19 & 24 & 25 & 7 & 13 & 27\\
\theta=9 & 30 & 12 & 6 & 36 & 13 & 31 & 11 & 1 & 17 & 27 & 26 & 5 & 24 & 14 & 35 & 25 & 10 & 23 & 7 & 34 & 18 & 20 & 2 & 16 & 0 & 8 & 19 & 29 & 15 & 37 & 33 & 4 & 21 & 22 & 32 & 28 & 3

\end{matrix}
\]
} 
Let $\M=\{M_1, M_2, M_3, M_4, M_5, M_6 \}$ where 
\begin{align*}
&M_1=PGL(2,37) & M_2=\alpha M_1 & & M_3=\beta M_1 & &M_4=\gamma M_1 & & M_5=\delta M_1 & & M_6=\theta M_1. 
\end{align*} 
Note that for all $i,j,~(1 \le i<j\le 6),~ hd(M_i)=36$ and $hd(M_i,M_j)\ge 34$.

\bigskip
\noindent Let $X=\{X_1,X_2,X_3,X_4\}$ be the partition of $Z_{38}$ given by
\vspace{-.2 cm}
\begin{align*}
  &X_1=\{0, 4, 8, 13, 19, 22, 26, 30, 35\} & &X_3=\{2, 6, 10, 12, 16, 21, 24, 28, 33, 37\} \\
  &X_2=\{1, 5, 9, 15, 18, 23, 27, 31, 34\} & &X_4=\{3, 7, 11, 14, 17, 20, 25, 29, 32, 36\}.
\end{align*}
The two partitions of positions, $\P$ and $\R$, are based on X. 
That is, $\mathcal{P}=\{P_1,P_2,P_3,P_4\}$, where  $P_1=X_1, P_2=X_2, P_3=X_3,$ and $P_4=X_4$
and $\mathcal{R}=\{R_1,R_2,R_3,R_4\}$, where $R_1=X_2,R_2=X_3, R_3=X_4,$ and $R_4=X_1$. 

\bigskip
\noindent Let $Y=\{Y_1,Y_2,Y_3,Y_4\}$ be the partition of $Z_{38}$ given by
\vspace{-.2 cm}
\begin{align*}
  &Y_1=\{0, 1, 2, 3, 4, 5, 6, 7, 8, 9\} & &Y_3=\{20, 21, 22, 23, 24, 25, 26, 27, 28\} \\
  &Y_2=\{10, 11, 12, 13, 14, 15, 16, 17, 18, 19\} & &Y_4=\{29, 30, 31, 32, 33, 34, 35, 36, 37\}.
\end{align*}
The two partitions of symbols, $\Q$ and $\S$, are based on Y. 
That is, $\mathcal{Q}=\{Q_1,Q_2,Q_3,Q_4\}$ where $Q_1=Y_1,Q_2=Y_2,Q_3=Y_3,Q_4=Y_4$ and $\mathcal{S}=\{S_1,S_2,S_3,S_4\}$ where $S_1=Y_2,S_2=Y_3,S_3=Y_4,S_4=Y_1$.

\bigskip
\noindent Let $\Pi=({\M,\P,\Q,\R,\S})$. 
It can be verified that $\Pi$ is a $(d,2)$-partition system for $Z_{38}$ where ${d=34}$. Parallel partition and extension on $\Pi$ results in $\ext(\Pi)$, where ${|\ext(\Pi)|=287,437}$.
Theorem \ref{thm:parallel_pe_2} for $n=38$ and $d=34$ implies $M(40,34)\ge 287,437$ which is a new lower bound. See Table \ref{tb:new_parallel_pe}.
\end{example}
\end{changemargin}
\bigskip

Theorem \ref{thm:parallel_pe_2} applies to general parallel partition and extension using $r=2$ symbols. 
This result can be generalized to arbitrary $r$ provided that a sufficient number of blocks with appropriate Hamming distance properties can be found, along with a corresponding number of partitions of positions and symbols. 
Table \ref{tb:new_parallel_pe} shows new bounds obtained using  parallel partition and extension (Theorems \ref{thm:rudi} and \ref{thm:parallel_pe_2}). 

The general parallel partition and extension technique does not put restrictions on the partitions of positions $\P,\R, ...$, and partitions of symbols $\Q,\S,...$, making the search space for good partitions very large. 
Because of this, we have experimented with several ways of creating partitions. For example, given a partition of positions $\P=\{P_0,P_1,...P_{k-1}\}$, a family of partitions $\{\P_i\}$ can be derived from $\P$ as follows.
For all $i,~(i\le 0 < k)$, define $\P_i$, the $i^{th}$  partition of positions, to be $\P_i=\{ P_{(i+j)\pmod k},~\forall (0\le j<k)\}$.
Using this notation, the partitions $\P$ and $\R$ of Example \ref{example_parallel_pe} are correspond to $\P_0$ and $\P_1$.  
In other words, $\P_1$ is obtained by a cyclic shift of the sets in $\P_0$.
In this way, each partition $\P_i$ comprises a different partition of the set of positions. 
Define a similar family of partitions of symbols $\{\Q_i\}$ using a partition of symbols $\Q=\{Q_0,Q_1,...Q_{k-1}\}$ as a starting point. 
Clearly, each pair of partitions $(\P_i,\Q_i)$ satisfies the conditions of the parallel partition and extension technique. To create the initial partitions $\P$ and $\Q$, we have used several techniques, including a greedy technique and a technique based on Integer Linear Programming. These are described in Sections \ref{ss:greedy_partition_sel} and \ref{ss:ilp_partition_sel}.

Results obtained by parallel partition and extension can be compared with results from the \textit{coset method} \cite{bereg2015constructing} and the \textit{contraction method} \cite{bereg2015constructing}. 
The coset method starts with a group $X$ exhibiting $M(n,d')$, for some $d'>d$ and searches for cosets of $X$ at Hamming distance $d$. 
The PA $A$, formed from $X$ together with its cosets, exhibits Hamming distance $d$. 
If $X$ is a good PA for $M(n,d')$, the PA $A$ could represent a new lower bound for $M(n,d)$.
The operation of contraction on a PA $Y$ on $Z_{n+1}$ with Hamming distance $d+1$ results in new PA $Y'$ on $Z_n$. As with the coset method, if $Y$ is a good PA for $M(n+1,d)$, $Y'$ could exhibit a new lower bound for either $M(n,d-2)$ or $M(n,d-3)$, depending on conditions described in \cite{bereg2015constructing}. 

To be competitive, the groups that serve as the starting point for any of these methods must be large. We have used $AGL(1,q)$ and $PGL(2,r)$ for various powers of primes $q$ and $r$.
The coset method and the contraction method are quite fruitful, but there are instances where parallel partition and extension gives better results for $M(n,d)$.

We have also experimented with several methods for generating blocks of permutations with a desired Hamming distance. 
For example, to search for new PAs that exhibit improved lower bounds for $M(n,d)$, one technique
looks for cosets at Hamming distance d from a group $G$ on $Z_{n-r}$ that exhibits $M(n-r,d')$, where $d'>d$. Let $\M$ consist of $G$ and the cosets. 
Using parallel partition and extension, the permutations in $\M$ are extended by $r$ symbols to create a new PA on $Z_n$ exhibiting $M(n,d)$. Our coset search techniques are discussed in Section \ref{ss:coset_search}.

\begin{table}[htb]

\centering
\vspace*{4mm}
\begin{tabular}{|r r r r c|}
\hline
\bfseries $n$ & \bfseries $d$ & $r$ & \bfseries NEW & \bfseries Origin of Blocks (see Table \ref{tb:coset}) \\
\hline
\hline
30 & 26 & 2 & $58,968_R$ & $P\Gamma L(2,27)$ and 2 cosets \\
40 & 34 & 2 & $287,437_P$ & $PGL(2,37)$ and 2 cosets (see $M(38,32)$) \\
44 & 38 & 2 & $397,198_P$  & $PGL(2,41)$ and 2 cosets (see $M(42,36)$) \\
45 & 39 & 3 & $413,280_R$  & $PGL(2,41)$ and 3 cosets (see $M(42,36)$) \\
46 & 39 & 4 & $551,040_R$  & $PGL(2,41)$ and 4 cosets (see $M(42,35)$) \\
52 & 46 & 2 & $470,397_R$ & $PGL(2,49)$ and 2 cosets (see $M(50,44)$) \\
53 & 47 & 3 & $470,400_R$ & $PGL(2,49)$ and 3 cosets (see $M(50,44)$) \\
56 & 50 & 2 & $446,472_R$ & $PGL(2,53)$ and 2 cosets (see $M(54,48)$) \\
70 & 63 & 2 & $1,503,462_P$ & $PGL(2,67)$ and 2 cosets (see $M(68,61)$) \\
\hline
\end{tabular}
\caption{$M(n,d)$ lower bounds obtained using {\em parallel partition and extension} (Theorem \ref{thm:rudi} and \ref{thm:parallel_pe_2}). The blocks used by these theorems were obtained by the coset method \cite{bereg2015constructing} (see Table \ref{tb:coset}). 
Columns: $r$ denotes the number of new symbols, {\bf NEW} denotes the new new bound. 
New bounds computed using rudimentary parallel partition and extension (Theorem \ref{thm:rudi}) and general parallel partition and extension (Theorem \ref{thm:parallel_pe_2}) are denoted with a subscript $R$  and $P$, respectively.
}
\label{tb:new_parallel_pe}
\end{table}

\bigskip

\section{Partition and Extension of Modified Kronecker Product}
\label{s:kron}

Kronecker product is a well known operation in linear algebra, combinatorics, and other areas of mathematics \cite{henderson1983history,holmquist1985direct}. 
A modification of the Kronecker product operation on PAs can be used to create larger PAs suitable for simple partition and extension. 

Let $X$ and $Y$ be PAs defined by $X=\{\alpha_1,\alpha_2,\dots,\alpha_l\}$ where each $\alpha_i$ is a permutation on $l$ symbols,
and $Y=\{\beta_1,\beta_2,\dots,\beta_m\}$ where each $\beta_i$ is a permutation on $m$ symbols. 
The notation $\alpha_i(j)$ denotes the symbol in permutation $\alpha_i$ at position $j$.
Let $(\alpha_i(j),Y)$ denote a modified copy of the PA $Y$ such that each symbol in each permutation of $Y$ has an offset $m \cdot \alpha_i(j)$ added to it. 
Clearly $|(\alpha_i(j),Y)|=|Y|$. Moreover, like $Y$, $(\alpha_i(j),Y)$ is a PA on $m$ symbols, however, the symbol set of $(\alpha_i(j),Y)$ is offset by the value $m \cdot \alpha_i(j)$. Hence the PAs $Y$ and $(\alpha_i(j),Y)$ have no symbols in common.

Let $(X\otimes Y)_i$ be the PA defined by $(X\otimes Y)_i=[(\alpha_i(0),Y), (\alpha_i(1),Y), \dots, (\alpha_i(l-1),Y)]$. 
That is, if $\beta_r$ is the permutation in $Y$, there is a corresponding permutation $\gamma$ on $lm$ symbols in $(X\otimes Y)_i$ of the form $\gamma=
(m\cdot\alpha_i(0)+\beta_r(0)),\dots, (m\cdot\alpha_i(0)+\beta_r(m-1)),
(m\cdot\alpha_i(1)+\beta_r(0)),\dots, (m\cdot\alpha_i(1)+\beta_r(m-1)), 
\dots,
(m\cdot\alpha_i(l-1)+\beta_r(0)),\dots, 
(m\cdot\alpha_i(l-1)+\beta_r(m-1)).$
In other words, $\gamma$ can be viewed as the concatenation of $l$ copies of $\beta_r$ with an appropriate offset added to the symbols in each copy.
The  offsets ensure that each of the $|Y|$ rows in the sub-array $(X\otimes Y)_i$ is a permutation on the $lm$ symbols $\{0,1,2\dots lm-1\}$. 

Define the modified Kronecker product \cite{bmms-kp-17} of PAs $X$ and $Y$, denoted by $(X\otimes Y)$, to be the PA on $lm$ symbols defined by $(X\otimes Y)=\bigcup \limits_{i=1}^l (X\otimes Y)_i$. This is illustrated in Figure \ref{kronfig}.

\begin{figure}[hbt]
\centerline{\includegraphics{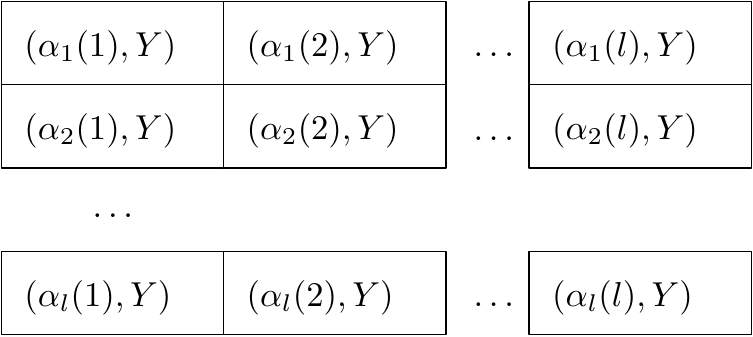}}
\caption{The PA $(X\otimes Y)$, the modified Kronecker product of PA's $X$ and $Y$.} 
\label{kronfig}
\end{figure}

Define the {\em block decomposition} of a PA $A$ on $n$ symbols as a collection of sub-arrays (\textit{i.e., blocks}), say  $A^{(1)},A^{(2)},\dots,A^{(m)}$, 
such that for all $i~(1\le i\le m)$, $hd(A^{(i)})=n$.
A detailed discussion of block decomposition appears in \cite{bmms-kp-17}, along with several examples using $AGL(1,q)$ and $PGL(2,q)$, where $q$ is a prime or a prime power.
We use block decompositions of PAs and the modified Kronecker product to produce new PAs, which in some cases give new lower bounds for $M(n+1,n)$. Corollaries \ref{cor:kron1} and \ref{cor:kron2} below describe our results. 
Our block decompositions have a property that the blocks are {\em full}, \ie $|A^{(i)}|=n$.  
We need two lemmas describing properties of PAs produced by modified Kronecker product to establish Corollaries \ref{cor:kron1} and \ref{cor:kron2}.

\begin{lemma} [\cite{bmms-kp-17}]
\label{l:kron2}
Let $A^{(1)},A^{(2)},\dots,A^{(k)}$ be a block decomposition of a PA $A$ on $l$ symbols with $hd(A)=l-a$ 
Let $B^{(1)},B^{(2)},\dots B^{(k)}$ be a block decomposition of PA $B$ on $m$ symbols with $hd(B)=m-b$. 
Let $M_i=A^{(i)}\otimes B^{(i)}$
Then $$hd(\bigcup \limits_{i=1}^k M_i )=lm-ab.$$
\end{lemma}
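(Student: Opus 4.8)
The plan is to reduce the whole computation to a single product formula for agreements, obtained from the ``base-$m$'' structure of the modified Kronecker product, and then to read off the distance by a two-case analysis.

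First I would fix coordinates. Index the $lm$ positions of any permutation of $M_i=A^{(i)}\otimes B^{(i)}$ by pairs $(j,p)$ with $0\le j<l$ and $0\le p<m$, so that the permutation $\gamma$ built from $\alpha\in A^{(i)}$ and $\beta\in B^{(i)}$ satisfies $\gamma(j,p)=m\,\alpha(j)+\beta(p)$. Given a second such permutation $\gamma'$ built from $\alpha'$ and $\beta'$, the equation $m\,\alpha(j)+\beta(p)=m\,\alpha'(j)+\beta'(p)$ rearranges to $m(\alpha(j)-\alpha'(j))=\beta'(p)-\beta(p)$; the right side lies strictly between $-m$ and $m$ while the left side is a multiple of $m$, so both vanish. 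Hence $\gamma$ and $\gamma'$ agree at $(j,p)$ \emph{iff} $\alpha(j)=\alpha'(j)$ and $\beta(p)=\beta'(p)$. Writing $a_{\alpha\alpha'}=l-hd(\alpha,\alpha')$ and $b_{\beta\beta'}=m-hd(\beta,\beta')$ for the respective agreement counts, the independence of the conditions on $j$ and on $p$ gives the key identity: the number of agreements of $\gamma$ with $\gamma'$ equals $a_{\alpha\alpha'}\cdot b_{\beta\beta'}$, so $hd(\gamma,\gamma')=lm-a_{\alpha\alpha'}\,b_{\beta\beta'}$. This identity is the engine of the proof.

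Next I would split into two cases. In the intra-block case $\gamma,\gamma'\in M_i$ distinct: if $\alpha=\alpha'$ then $\beta\neq\beta'$ and $b_{\beta\beta'}=0$ since $hd(B^{(i)})=m$; if $\alpha\neq\alpha'$ then $a_{\alpha\alpha'}=0$ since $hd(A^{(i)})=l$. Either way the product vanishes and $hd(\gamma,\gamma')=lm$, the maximum possible, so fullness of the blocks means intra-block pairs never govern the minimum. In the inter-block case $\gamma\in M_i$, $\gamma'\in M_{i'}$ with $i\neq i'$: now $\alpha,\alpha'$ lie in distinct blocks of $A$, hence $hd(\alpha,\alpha')\ge hd(A)=l-a$, i.e.\ $a_{\alpha\alpha'}\le a$; likewise $b_{\beta\beta'}\le b$. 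The identity then yields agreements $\le ab$ and $hd(\gamma,\gamma')\ge lm-ab$. Combining the cases gives the lower bound $hd\big(\bigcup_{i=1}^{k}M_i\big)\ge lm-ab$.

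The remaining work is tightness, and this is where I expect the main obstacle. By the identity I must produce an inter-block pair with exactly $ab$ agreements, i.e.\ an index pair $i\neq i'$ together with $\alpha\in A^{(i)},\alpha'\in A^{(i')}$ attaining $a_{\alpha\alpha'}=a$ and $\beta\in B^{(i)},\beta'\in B^{(i')}$ attaining $b_{\beta\beta'}=b$. Once the index pair $(i,i')$ is fixed the two factors can be chosen independently, because $M_i$ contains \emph{every} pairing of an $\alpha\in A^{(i)}$ with a $\beta\in B^{(i)}$; the subtlety is that I need a \emph{single} index pair that simultaneously realizes the minimum inter-block distance for $A$ and for $B$, since in general $\max_{i\neq i'}(\,\cdot\,)$ of a product can fall below the product of the separate maxima. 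I would discharge this using the structure of the decompositions in play: for the group-coset decompositions used here (for instance the cosets of the cyclic subgroup of $AGL(1,q)$) all pairs of distinct blocks have equal inter-block distance, so $a_{\alpha\alpha'}$ and $b_{\beta\beta'}$ attain $a$ and $b$ on every index pair and any $(i,i')$ works. Finally I would note that $\bigcup_i M_i$ is genuinely a PA: from $\gamma$ one recovers $\alpha(j)=\lfloor\gamma(j,0)/m\rfloor$ and $\beta(p)=\gamma(0,p)\bmod m$, so distinct pairs $(\alpha,\beta)$ give distinct permutations and, the blocks of $A$ being disjoint, the $M_i$ are pairwise disjoint.
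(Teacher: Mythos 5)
The paper does not actually prove Lemma \ref{l:kron2}; it is imported from \cite{bmms-kp-17}, so there is no in-paper argument to compare against. Judged on its own, your proof is correct in its essentials and is the natural one: writing positions as pairs $(j,p)$ with $\gamma(j,p)=m\,\alpha(j)+\beta(p)$, the divisibility argument shows agreement at $(j,p)$ occurs exactly when $\alpha(j)=\alpha'(j)$ and $\beta(p)=\beta'(p)$, so agreement counts multiply; the intra-block case then gives distance $lm$ because blocks are full ($hd(A^{(i)})=l$, $hd(B^{(i)})=m$), and the inter-block case gives at least $lm-ab$ because permutations from distinct blocks are distinct elements of $A$ (resp.\ $B$) and hence have at most $a$ (resp.\ $b$) agreements. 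You are also right to flag that the stated \emph{equality} does not follow from the hypotheses as literally written: one needs a single index pair $i\ne i'$ on which both the $A$-minimum and the $B$-minimum are simultaneously realized, and in general the minimum over $i\ne i'$ of a product can exceed the product of the separate minima. Your patch---that the coset and MOLS-derived decompositions actually used here have all inter-block distances equal, so any index pair works and the two factors can be chosen independently inside $M_i$ and $M_{i'}$---is sound, and in any case only the inequality $hd(\bigcup_i M_i)\ge lm-ab$ is consumed downstream (Lemma \ref{l:kron3} and Corollaries \ref{cor:kron1} and \ref{cor:kron2}), so your argument establishes everything the paper relies on. The closing remark recovering $\alpha$ and $\beta$ from $\gamma$ correctly disposes of the disjointness of the $M_i$.
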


\begin{lemma} 
\label{l:kron3}
Let $A^{(1)},A^{(2)},\dots,A^{(k)}$ be a block decomposition of a PA $A$ on $l$ symbols with $hd(A)=l-1$. 
Let $B^{(1)},B^{(2)},\dots B^{(k)}$ be a block decomposition of PA $B$ on $m$ symbols with $hd(B)=m-1$. 
Then $M(n+1,n)\ge kn$, where $n=lm$.
\end{lemma}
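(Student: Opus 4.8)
The plan is to recognize $\bigcup_{i=1}^k M_i$, where $M_i = A^{(i)}\otimes B^{(i)}$, as a distance-$n$ partition system and then push it through simple partition and extension (Theorem~\ref{th:simple_pe}) with $d=n$. First I would apply Lemma~\ref{l:kron2} with $a=b=1$: since $hd(A)=l-1$ and $hd(B)=m-1$, it gives $hd(\bigcup_{i=1}^k M_i)=lm-1=n-1$, so in particular $hd(M_i,M_j)\ge n-1$ for all $i\ne j$. Next, each individual $M_i$ is the modified Kronecker product of two \emph{full} blocks, i.e.\ Latin squares of orders $l$ and $m$ with $hd(A^{(i)})=l$ and $hd(B^{(i)})=m$; applying Lemma~\ref{l:kron2} to a single pair ($k=1$, $a=b=0$) yields $hd(M_i)=lm=n$, and the product structure gives $|M_i|=lm=n$. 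Hence $\M=(M_1,\dots,M_k)$ already satisfies the two defining properties of a distance-$n$ partition system, and each $M_i$ is a Latin square of order $n$.

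The substance of the argument is producing partitions $\P,\Q$ of $Z_n$ under which \emph{every} permutation of every $M_i$ is covered; then $|ext(M_i)|=|M_i|=n$ and the total size is $kn$. Here I would use the product coordinates: write each position as $p=jm+t$ with $0\le j<l$, $0\le t<m$, and each symbol as $cm+s$ with $0\le c<l$, $0\le s<m$. The key observation is that if $P_i$ contains a full ``macro-column'' $\{jm,\dots,jm+m-1\}$, then the entries of any row of $M_i$ in those positions sweep out one entire ``macro-block'' $\{c^{*}m,\dots,c^{*}m+m-1\}$ (because the inner $B^{(i)}$-permutation ranges over all micro-symbols), where $c^{*}$ is the value of the outer $A^{(i)}$-permutation at $j$. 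Consequently, if $Q_i$ contains one symbol from every macro-block --- e.g.\ a ``micro-diagonal'' $\{cm+s:0\le c<l\}$ --- then that symbol is hit, and the row is covered, regardless of which $c^{*}$ occurs.

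I would then assign the blocks distinct macro-columns and distinct micro-diagonals, dumping the leftover positions and symbols into the last part (adding positions or symbols to a part only helps coverage). This is possible precisely because $k\le\min(l,m)$: a full block decomposition of $A$ has $|A|=kl$ permutations, and the bound $M(l,l-1)\le l(l-1)$ --- at most $l-1$ permutations can share a common value in a fixed position, since any such permutations must pairwise agree only there and hence form a sharply transitive set on the remaining $l-1$ points --- forces $k\le l-1$, and symmetrically $k\le m-1$. With $\P,\Q$ so chosen, $\Pi=(\M,\P,\Q)$ is a distance-$n$ partition system in which every permutation is covered; Theorem~\ref{th:simple_pe} then gives a PA $ext(\Pi)$ on $Z_{n+1}$ with $hd(ext(\Pi))\ge n$ and $|ext(\Pi)|=\sum_{i=1}^k|ext(M_i)|=kn$, so $M(n+1,n)\ge kn$. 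To match the definition of simple partition and extension verbatim, one block may be designated as the appended set $M_{s+1}$, which is used in full and needs no covering, leaving only $k-1$ blocks to cover.

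The main obstacle is the covering step: ensuring the position and symbol partitions simultaneously cover all $k$ Latin-square blocks. The macro-column/micro-diagonal construction resolves it cleanly by exploiting the $l\times m$ factorization of positions and symbols inherited from the Kronecker product, and the counting bound $k\le\min(l,m)-1$ guarantees enough distinct macro-columns and micro-diagonals to go around. Everything else --- the distances $hd(M_i)=n$ and $hd(M_i,M_j)\ge n-1$, and the final size count --- reduces immediately to Lemma~\ref{l:kron2} and the fullness of the blocks.
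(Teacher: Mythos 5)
Your proof is correct and follows essentially the same strategy as the paper: exhibit the $k$ Kronecker-product blocks as a distance-$n$ partition system, choose partitions adapted to the $l\times m$ product structure so that every one of the $kn$ permutations is covered, and invoke Theorem~\ref{th:simple_pe}. The only difference is that your partitions are the transpose of the paper's (you give block $i$ a full macro-column of positions and a micro-diagonal of symbols, whereas the paper gives it one fixed inner position in each macro-column and a full macro-block of symbols); your justification that $k\le\min(l,m)-1$ and your handling of leftover positions and symbols are details the paper leaves implicit.
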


\begin{proof}
First, we set $\M=\{M_1,M_2,\dots,M_k\}$ where 
for all $i, ~(i=1,2,\dots,k), M_i=A^{(i)}\otimes B^{(i)}$.
That is, $M_i$ is the modified Kronecker product of the blocks $A^{(i)}$ and $B^{(i)}$. 
The PA $M_i$ can be viewed as an $l\times l$ table of blocks.
In particular, the columns of this table are columns of blocks, and the rows of the table are rows of blocks. We will refer to the rows and columns as {\em block rows} and {\em block columns}, respectively.
Let $C_1,C_2,\dots,C_l$ be the block columns of the table. 
For each block column $C_j$, $(j=1,2,\dots,l)$ we select the $(i-1)^{st}$ position in $C_j$, keeping in mind that positions are numbered starting at 0.
Let $P_i$ be the set of selected positions.
That is, $P_i=\{i-1, (i-1)+l, (i-1)+2l, \dots,(i-1)+kl \}$.
We choose the symbols for $Q_i$ as $0,1,\dots,m-1$ with added offset $(i-1)m$.
That is, $Q_i=\{0+(i-1)m,1+(i-1)m,\dots,(m-1)+(i-1)m  \}$.
Note that each block row of the table contains a block column such that all symbols in it have offset $(i-1)m$.
Therefore all permutations in this block row are covered.
The lemma follows since all $klm$ permutations of the modified Kronecker product are covered.
\end{proof}

\bigskip
\begin{cor} 
\label{cor:kron1}
Let $p$ and $q$ be prime powers.
Let $n=pq$ and $k = \min \{p-1,q-1\}$. Then $M(n+1,n)\geq kn$.
\end{cor}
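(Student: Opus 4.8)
The plan is to exhibit two block decompositions satisfying the hypotheses of Lemma~\ref{l:kron3} and then invoke that lemma directly. Assume without loss of generality that $p\le q$, so that $k=p-1$. For the PA on $l=p$ symbols I would take $A=AGL(1,p)$, and for the PA on $m=q$ symbols I would take $B$ to be the union of $k$ suitably chosen cosets sitting inside $AGL(1,q)$. The whole proof then reduces to checking that these are block decompositions of the kind Lemma~\ref{l:kron3} demands.

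First I would recall the coset structure of the affine group. Writing $AGL(1,p)=\{x\mapsto ax+b : a\in\mathbb{F}_p^*,\ b\in\mathbb{F}_p\}$, the translation subgroup $\{x\mapsto x+b : b\in\mathbb{F}_p\}$ has exactly $p-1$ cosets $A^{(a)}=\{x\mapsto ax+b : b\in\mathbb{F}_p\}$, one for each $a\in\mathbb{F}_p^*$. Two affine maps agree at a position $x$ exactly when $(a_1-a_2)x=b_2-b_1$; hence permutations within a single coset (same $a$) never agree, so each block $A^{(a)}$ is a sharply transitive Latin square with $hd(A^{(a)})=p=l$, while two permutations drawn from different cosets agree in exactly one position, giving $hd(A)=p-1=l-1$. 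This is precisely a block decomposition of $A$ into $p-1$ full blocks with $hd(A)=l-1$, as required.

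Next I would repeat the construction for $q$, obtaining $q-1$ full cosets of $AGL(1,q)$, and retain only $k=p-1$ of them, setting $B$ equal to their union and $B^{(1)},\dots,B^{(k)}$ equal to the chosen cosets. The same agreement count shows each $B^{(i)}$ is full ($hd=q=m$) and, provided at least two cosets are kept, $hd(B)=q-1=m-1$. Thus $A^{(1)},\dots,A^{(k)}$ and $B^{(1)},\dots,B^{(k)}$ are block decompositions of PAs on $l=p$ and $m=q$ symbols with $hd(A)=l-1$, $hd(B)=m-1$, and the same number $k$ of blocks. Lemma~\ref{l:kron3} then yields $M(n+1,n)\ge kn$ with $n=lm=pq$, which is the claim.

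The point requiring the most care is the distance bookkeeping that forces $a=b=1$ in the formula $hd=lm-ab$ of Lemma~\ref{l:kron2}: I must keep \emph{all} $p-1$ cosets on the smaller side but only $k$ on the larger side, and then verify that the truncated union still has Hamming distance exactly $m-1$ rather than $m$ --- this is exactly why at least two cosets must be retained. The only genuinely exceptional case is when one of $p,q$ equals $2$ (forcing $k=1$), since $AGL(1,2)$ is a single full block with $hd=2=l$ rather than $l-1$; there the desired bound $M(n+1,n)\ge n$ follows trivially from the $n+1$ cyclic shifts of the identity on $n+1$ symbols, so the Lemma~\ref{l:kron3} construction is only needed when $p,q\ge 3$.
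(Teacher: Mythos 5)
Your proof is correct and takes essentially the same route as the paper, whose entire proof is to invoke Lemma~\ref{l:kron3} with $A=AGL(1,p)$ and $B=AGL(1,q)$. You additionally supply details the paper leaves implicit --- the coset/agreement count showing each translation-coset is a full block with inter-block distance exactly one agreement, the truncation of the larger group to $k$ blocks so the two decompositions have matching length, and the degenerate case $k=1$ --- all of which are handled correctly.
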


\begin{proof}
It follows from Lemma \ref{l:kron3} if we take the affine general linear groups
$A=AGL(1,p)$ and $B=AGL(1,q)$.
\end{proof}

\bigskip

\begin{cor} 
\label{cor:kron2}
Let $n\ge 2$ and $m\ge 2$ be integers. 
Let $N_n$ be the maximum number of MOLS of order $n$.
Let $k=\min\{N_n,N_m\}$.
Then $M(nm+1,nm)\geq knm$.  
\end{cor}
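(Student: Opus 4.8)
The plan is to reduce Corollary~\ref{cor:kron2} directly to Lemma~\ref{l:kron3} by manufacturing, from MOLS, two PAs carrying exactly the full block decompositions that the lemma demands. The key input is the standard correspondence between MOLS and permutation arrays \cite{colbourn2004permutation}: a set of $k$ MOLS of order $N$ yields a PA on $Z_N$ of $kN$ permutations with pairwise Hamming distance $N-1$. Concretely, from each Latin square $L_t$ one reads off $N$ permutations $\pi_{t,0},\dots,\pi_{t,N-1}$, where $\pi_{t,a}$ sends each row of $L_t$ to the unique column in which the symbol $a$ appears. First I would verify the block structure of this PA. Within a single square, $\pi_{t,a}$ and $\pi_{t,a'}$ with $a\neq a'$ disagree in every position (a cell cannot hold two distinct symbols), so each block $B_t=\{\pi_{t,a}\}_a$ is \emph{full}, i.e.\ $hd(B_t)=N$; and for $t\neq t'$ the squares are orthogonal, so each ordered pair $(a,a')$ occurs in exactly one cell, which forces $hd(\pi_{t,a},\pi_{t',a'})\ge N-1$. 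Hence this PA is a block decomposition into $k$ full blocks whose overall Hamming distance is $N-1$.

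With this in hand, set $k=\min\{N_n,N_m\}$. Since $N_n\ge k$ and $N_m\ge k$, and since any sub-collection of MOLS is again a set of MOLS, I would take $k$ MOLS of order $n$ to build a PA $A$ on $Z_n$ and $k$ MOLS of order $m$ to build a PA $B$ on $Z_m$. By the previous paragraph, $A=A^{(1)}\cup\cdots\cup A^{(k)}$ and $B=B^{(1)}\cup\cdots\cup B^{(k)}$ are block decompositions into $k$ full blocks with $hd(A)=n-1$ and $hd(B)=m-1$. These are precisely the hypotheses of Lemma~\ref{l:kron3} (with its $l$ taken to be $n$ and its $m$ unchanged, so that its $n=lm$ becomes $nm$), and applying that lemma yields $M(nm+1,nm)\ge k\cdot nm$, as claimed.

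The only real content beyond invoking Lemma~\ref{l:kron3} is the verification that the MOLS-derived PA is a block decomposition into \emph{full} blocks with inter-block distance $N-1$; this is exactly where orthogonality enters, through the one-cell-per-ordered-pair property, and it is the step I expect to require the most care to state cleanly. The requirement that the two decompositions use the \emph{same} number of blocks is what dictates $k=\min\{N_n,N_m\}$; once the blocks are matched up, Lemma~\ref{l:kron3} (which itself rests on Lemma~\ref{l:kron2} with $a=b=1$, producing a modified Kronecker product of Hamming distance $nm-1$ that simple partition and extension lifts to distance $nm$ on $nm+1$ symbols) does the remaining work. I would also note the degenerate case $k=1$ (possible when $n=2$ or $m=2$), where $knm=nm$ and the bound $M(nm+1,nm)\ge nm$ is weak and can be checked directly rather than through the lemma.
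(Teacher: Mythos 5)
Your proposal is correct and follows essentially the same route as the paper: convert $k=\min\{N_n,N_m\}$ MOLS of each order into block decompositions of PAs on $Z_n$ and $Z_m$ consisting of $k$ full blocks with inter-block distance $n-1$ and $m-1$ respectively, then invoke Lemma~\ref{l:kron3}. The only differences are cosmetic --- your MOLS-to-permutation map is the inverse of the paper's $(i,j,k)\mapsto(k,j,i)$ transformation, and you spell out the orthogonality verification that the paper delegates to the cited reference --- neither of which affects the argument.
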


\begin{proof}
Colbourn, Kl{\o}ve and Ling \cite{colbourn2004permutation} proved that a set of $k$ MOLS of order $n$ can be transformed into a permutation array $A$ of size $kn$ on $Z_n$.
Each Latin square $C_s$ is transformed into a block $D_s$ of $n$ permutations with pairwise Hamming distance $n$. 
The transformation changes triples $(i,j,k) \in C_s$  to triples $(k,j,i) \in D_s$. 
In other words, for all $i,j,k \in Z_n$ the symbol $k$ in row $i$ and column $j$ in the Latin square $C_s$ becomes the symbol $i$ in row $k$ and column $j$ in the block $D_s$.  

Suppose there are k MOLS of order n. Denote the Latin squares by $A_1,A_2,\dots,A_k$. The transformation creates $k$ blocks, say $B_1,B_2,\dots,B_k$ of permutations on n symbols.
Moreover, the pairwise Hamming distance between blocks $B_i,B_j$ for all $i,j,~(1 \le i,j, \le k, ~i\neq j)$ is $n-1$. 
We repeat this transformation for $k$ MOLS of order $m$ to create the block decomposition $E_1,E_2,\dots, E_k$ of permutations on $Z_m$, with pairwise Hamming distance $m-1$.
By Lemma \ref{l:kron3}, $M(nm+1,nm)\geq knm$.
\end{proof}

\bigskip
\noindent Example \ref{example_kron} shows several new bounds obtained by Corollary \ref{cor:kron1}. 
Additional new results obtained by Corollaries \ref{cor:kron1} and \ref{cor:kron2} are listed in Tables \ref{tb:n_1a} and \ref{tb:n_1b}.

\begin{changemargin}{.5cm}{.5cm}
\begin{example}
\label{example_kron} 
A sample of results from Corollary \ref{cor:kron1} with $A=AGL(1,p)$ and $B=AGL(1,q)$.
\begin{enumerate} [\indent (a)]
\item $M(117,116)\ge 8\cdot 117=936$ by using $p=9$ and $q=13$. So $M(118,117) \ge 936 $.
\item $M(171,170)\ge 8\cdot 171=1368$ by using $p=9$ and $q=19$. So $M(172,171) \ge 1,368 $.
\item $M(187,186)\ge 10\cdot 187=1870$ by using $p=11$ and $q=17$. So $M(188,187) \ge 1,871 $.
\item $M(299,298)\ge 12\cdot 299=3588$ by using $p=13$ and $q=23$. So $M(300,299) \ge 3,588 $.
\item $M(575,574)\ge 22\cdot 575=12650$ by using $p=23$ and $q=25$. So $M(576,575) \ge 12,650 $.
\end{enumerate}
\end{example}
\end{changemargin}

\bigskip

\section{Algorithms for Selecting Partitions}

\label{s:algorithms}

In Sections \ref{s:seq_pe}, \ref{s:par_pe} and \ref{s:kron}, we described three new enhancements of the partition and extension operation which are used for transforming a distance-$d$ partition system $\Pi=(\M,\P,\Q)$ on $Z_n$, for some positive integer $d$, into a new PA on $Z_{n+r}$ for positive integers $r$, such that the Hamming distance of the new PA is at least $d'$ for some $d' \ge d$.  
The size of a PA resulting from the application of any of these techniques to a particular distance-$d$ partition  system, $\Pi=(\M,\P,\Q)$, is of course entirely dependent on the choice of $\M$, $\P$, and $\Q$. 
Exhaustive search for high yield partitions $\P$ and $\Q$ amounts to trying all possible partitions of $Z_n$.
Similarly, selecting a productive set of PAs to include in $\M$ involves selecting sets from partitions of $S_n$, the symmetric group of permutations on $n$ symbols. 
Clearly, any sort of exhaustive search is infeasible.

This leads to a natural question: how to select the sets  $\M$, $\P$, and $\Q$. 
We now describe several techniques we have found useful for selecting  partitions for the set $\P$ (or, equivalently, $\Q$), and finding  PAs for the set $\M$.

In Sections \ref{ss:greedy_partition_sel} and \ref{ss:ilp_partition_sel}, we turn our attention to methods for finding partitions of $Z_n$. 
Such partitions can be fruitful candidates for either for $\P$ or $\Q$. 
We describe two approaches. 
Both approaches start with a given partition of symbols $\Q$ and a given collection of PAs $\M=(M_1, M_2, \dots, M_{k+1})$ on $Z_n$, for some positive integer $k$, that satisfies Property \ref{dist-d prop1} of the definition of a distance-$d$ partition system. 
Section \ref{ss:greedy_partition_sel} describes a greedy algorithm that uses a fixed partition of symbols $\Q$ and greedily creates a partition of positions, $\P$. 
Section \ref{ss:ilp_partition_sel} describes an optimization approach that uses Integer Linear Programming to find a fruitful partition of positions, $\P$. 
To describe the techniques, we focus on creating a partition of positions $\P$, however, the same techniques can be used for creating a partition of symbols $\Q$ instead.
We have experimented with both methods and have obtained new lower bounds for $M(n,d)$ which are included in Section \ref{s:results}. 

Section \ref{ss:coset_search} describes methods we have used for searching for fruitful PAs to include in $\M$. 
New lower bounds obtained by this method are included in Section \ref{s:results}.

\subsection{A Greedy Approach to Partition Selection}
\label{ss:greedy_partition_sel}

We have developed a greedy algorithm for finding a partition of positions $\P$, which approaches an intractable search problem by fixing both the partition of symbols, $\mathcal{Q}$, and the collection of PAs, $\M$, then greedily creating $\P$, a  partition of positions. In this way, the  search space is restricted, at the cost of possibly missing an optimum solution.

Our algorithm creates a partition positions $\P$, of $Z_n$, that maximizes $covered(M_i)$ for all $i$. 
The input for the algorithm is a fixed partition of symbols $\Q$ of $Z_n$, and a collection of PAs on $Z_n$, $\mathcal{M}=(M_1,M_2,\dots,M_k)$, that satisfies properties I and II of a distance-$d$ partition system for some $d<n$.  
We fix $\mathcal{Q}=(Q_1,Q_2,\dots,Q_k)$ for some $k\le\sqrt{n}$ where 
$Q_1=\{0,1,\dots, k-1\},Q_2=\{k,\dots, 2k-1\}, \dots\ Q_k=\{k^2-k,\dots, k^2-1\}$.

The algorithm starts with a set of subsets of positions $\{P_1,P_2,\dots,P_k\}$ where $P_i=\emptyset$ for all $i~ (0\le i\le k-1)$.
The algorithm then iterates to find a partition of positions $\P$ that represents a local maximum for the number of covered permutations.
At each iteration, an unused position, $r$, is selected. 
Let $M'_i=M_i \setminus covered(M_i)$. That is, $M'_i$ is the set of permutations $\{\sigma\}$ in $M_i$ for which there is no position $p \in P_i$ such that $\sigma(p)=q$ for some $q \in Q_i$. 
For each $i~ (1\le i\le k)$, we count the number of covered permutations for $(M'_i,P_i\cup\{r\},Q_i)$. If the number of covered permutations is maximized for some $i=i^*$, then we add $r$ to $P_{i^*}$. 
The algorithm stops when there are no more unused positions.

The resulting partition $\P$, together with $\Q$ and $\M$ form a distance-$d$ partition system for $Z_n$, $\Pi=(\M,\P,\Q)$.
So, by Theorem \ref{th:simple_pe}, $hd(ext(\Pi))\ge d$. 
There are several instances for which our greedy approach results in a partition system $\Pi$ that provides full coverage, that is, for all $i~ (1\le i\le k),~ covered(M_i)=M_i$.
When $\Pi$ is derived from large PAs such as $AGL(1,q)$, for $q$, a power of a prime, improved lower bounds can be achieved for $M(q+1,d)$.
A list of results is included in Tables  
\ref{tb:n_d_all}, \ref{tb:n_1a} and \ref{tb:n_1b}.

\subsection{An Optimization Approach to Partition Selection} \label{ss:ilp_partition_sel}

We describe another approach for finding a partition of positions $\P$, which casts the search for $\P$  as an optimization problem. 
Like the greedy method, our optimization approach starts with a given partition $\mathcal{Q}$ of symbols, and a collection $\M$ of PAs that satisfies properties I and II of a distance-$d$ partition system for some $d<n$. 
We encode the search for $\mathcal{P}$ as an Integer Linear Program (ILP) and use an off-the-shelf \textit{solver} to explore the entire search space of partitions for $\mathcal{P}$.
There are several commercial solvers \cite{cplex2009v12,gurobi} capable of solving large ILP problems efficiently.
We have chosen the Gurobi optimizer \cite{gurobi} for our computations. 

We now describe our ILP encoding. The input is a partition of symbols $\mathcal{Q}$ and a collection $\mathcal{M}$ of blocks (PAs) on $n$ symbols. 
Let $k$ be the number of blocks.
Let $c_{i,j}$ be a binary variable indicating that permutation $j$ of block $i$ is covered. 
Let $u(i)$ be a function that maps the block index $i$ to the number of permutations in it. 
Let $b_{i,p}$ be a binary variable indicating that position $p$ is assigned to block $i$.

\begin{figure}[hbt]

\begin{align}
\vspace*{-\baselineskip}
    \underset{c_{i,j}}{\text{maximize~}}  & \sum_{i=0}^{k-1} \sum_{j=0}^{u(i)-1} {c_{i,j}} \label{eq:obj_pe}\\
   \text{ subject to} \notag \\
   & \sum_{i=0}^{k-1}b_{i,p}=1;~\forall p; \label{eq:c1_pe}\\
   & \sum_{y\in Q_i} \mathds{1}_{\sigma_{p,y}} \cdot b_{i,p} \geq {c_{i,j}}; ~\forall i,j,p; \text { and} \label{eq:c2_pe}\\
   & \sum_{i=0}^{k-1}\sum_{p=0}^{n-1} b_{i,p}=n; \label{eq:c3_pe}\\
	\text{where~~~~} &
\scriptsize
\mathds{1}_{\sigma_{p,y}}= 
\begin{cases}
   1  & \text{if } \sigma[p]=y\\
   0 & \text{otherwise.}
\end{cases} 
\label{eq:cov_per}
\end{align}

\caption{An Integer Linear Program for selecting partitions} 
\label{ILPfig}
\end{figure}

Equation (\ref{eq:obj_pe}) is the objective function to be maximized, that is, the total number of covered permutations in all blocks in $\M$. The optimization is subject to three constraints: 
\begin{itemize}  
\item Constraint (\ref{eq:c1_pe}) assures that the resulting partition $\P$ assigns a position to exactly one block.
\item Constraint (\ref{eq:c2_pe}) establishes that permutation $j$ in block $i$ is covered when at least one of its symbols listed in $Q_i$ appears in position $p$, and $p$ is assigned to this block $i$. 
\item Constraint (\ref{eq:c3_pe}) assures that every position has been assigned to some block. 
\end{itemize}
Constraints (\ref{eq:c1_pe}) and (\ref{eq:c3_pe}) effectively ensure that the solution is a partition. 
Equation (\ref{eq:cov_per}) defines an indicator function that states whether or not a permutation $\sigma$ is covered by checking if symbol $y$ appears at position $p$.

Our Integer Linear Program has provided many new lower bounds for $M(n,d)$, and has has outperformed our greedy approach in several instances. See Tables \ref{tb:n_d_all}, \ref{tb:n_1a} and \ref{tb:n_1b}.

\subsection{Methods for Coset Search} \label{ss:coset_search}
We have used several methods for coset search, including the {\em coset method} \cite{bereg2015constructing} and Integer Linear Programming. 

Given a group $G$ on $Z_n$ for some $n$, the coset method creates a collection of PAs $\M$ to be used for partition and extension by randomly searching for cosets of $G$ at a specified pairwise Hamming distance $d$.  
The group $G=M_1$, with its cosets, $M_2,M_3,...,$ comprise $\M=(M_1,M_2,M_3,...,)$ in a distance-$d$ partition system $\Pi$. 
When the starting group $G$ is large, the coset method often produces a productive collection of PAs for $\M$.


Table \ref{tb:n_d_coset} shows the lower bounds obtained by applying Theorem \ref{th:simple_pe} to new permutation arrays computed using the coset method. 
For example, for our new lower bound for $M(43,37)$, we start with the projective general linear group $G = PGL(2,41)$, which has 68,880 permutations on $Z_{42}$, and looked for cosets of $G$ at Hamming distance 36. 
We were able to find five cosets, $M_2,M_3,M_4,M_5,M_6,$ which together with the group $G=M_1$ gives a collection of 6 blocks with 68,800 permutations each, giving a total of  413,280 permutations at Hamming distance 36. 
This gives $\M=(M_1, M_2, \dots, M_6)$. 
We were also able to find a partition of positions $\P$ and a partition of symbols $\Q$, which, together with $\M$ forms a distance-37 partition system $\Pi=(\M,\P,\Q)$ for $Z_{42}$. 
Using simple partition and extension on $\Pi$, we obtained 369,948 permutations on 43 symbols with Hamming distance 37.
That is, we show that $M(43,37) \ge 369,948$, which is an improvement over the previous lower bound of 176,988.

\begin{table}[ht]

\centering
\vspace*{4mm}
\begin{tabular}{|r r r r |}
\hline
$n$ & $d$ & \bfseries PREV & \bfseries NEW \\
\hline
\hline
43  & 37   & 176,988  & 369,948\\
49  & 43   & 207,552  & 415,062 \\
51  & 44   & 235,200  & 687,903 \\
51  & 45   & 235,200  & 470,347 \\
61  & 54   & 410,640  & 1,181,794\\
69  & 62   & 601,392  & 1,500,426\\
\hline
\end{tabular}
\caption{New $M(n,d)$ lower bounds obtained by applying Theorem \ref{th:simple_pe} to PAs generated by the coset method \cite{bereg2015constructing}. Column \textbf{PREV} shows previously known bounds (obtained from rudimentary parallel partition and extension, by applying Theorem \ref{thm:rudi}).
Column \textbf{NEW} shows new bounds obtained through Theorem \ref{th:simple_pe}. }
\label{tb:n_d_coset}
\end{table}

We have also searched for fruitful PAs by formulating the coset search problem as a constraint satisfaction problem, implemented as an Integer Linear Program. 
Given a group $G$ on $Z_n$, where $hd(G)\ge d$, let $d'$ be the target Hamming distance between a coset representative $\pi \in S_n$ and the group $G$. 
Let $X=Z_n \times Z_n=\{(0,0),(0,1),\dots,(i,j),\dots,(n-1,n-1)\}$. 
The set $X$ represents all possible pairs of positions and symbols assignable to the coset representative $\pi$.

Create a binary variable $x_{i,j}$ for each element in the set $X$ indicating that if the variable $x_{i,j}$ is true, then $\pi(i)=j$. 
The Integer Linear Program is:

\begin{align}
   & \underset{x_{i,j}}{\text{maximize}} ~\sum_{i=0}^{n-1} \sum_{j=0}^{n-1} {x_{i,j}} \label{eq:obj_search}\\
   &\text{subject to} \notag \\
   & \sum_{j=0}^{n-1}x_{i,j}=1;~\forall i\in Z_n, \label{eq:c1_search}\\
   & \sum_{i=0}^{n-1}x_{i,j}=1;~\forall j\in Z_n, \text{ and} \label{eq:c2_search}\\
   & \sum_{i=0}^{n-1}\sum_{j=0}^{n-1} ~\mathds{1}_{\sigma_{i,j}} \cdot x_{i,j} \leq n-d;~\forall \sigma \in G, \label{eq:c3_search}
\end{align}

\begin{align}
\text{where~~~~} &
\mathds{1}_{\sigma_{i,j}}= 
\begin{cases}
   1  & \text{if } \sigma(i)=j\\
   0 & \text{otherwise}
\end{cases}	
\end{align}

The objective function (\ref{eq:obj_search}) is designed to make the ILP solver assign as many binary variables $x_{i,j}$ true as possible. 
This objective function alone would produce a solution that is not a permutation. 
For this reason constraints (\ref{eq:c1_search}) and (\ref{eq:c2_search}) ensure that exactly one symbol $j$ is assigned to every position $i$ and that every symbol $j$ is assigned to exactly one position $i$, respectively, so the solution is indeed a permutation on $Z_n$. 
Constraint (\ref{eq:c3_search}) requires the solution to be at Hamming distance at least $d'$ from every permutation in $G$. 
This is encoded by limiting the number of agreements, $n-d'$, between a candidate solution and each of the permutations in $G$.

Table \ref{tb:coset} gives a detailed view of new lower bounds for $M(n,d)$, resulting from our coset search techniques. 
For each new result, the group, $G$ and the number of cosets is shown. 
The subscript $j$ in the column labeled \textbf{NEW} indicates that the cosets were found by the Integer Linear Program described in section \ref{ss:coset_search} \cite{luis17}.
The subscript $c$ indicates that the cosets were found by the coset method \cite{bereg2015constructing}.

\begin{table} [H]


\centering
\vspace*{4mm}
\begin{tabular}{|c c c c r r |}
\hline
\bfseries $n$ & \bfseries $d$ & {\em Group} & {\em Num Cosets} & \bfseries PREV & \bfseries NEW \\
\hline
\hline
$18$	&	$13$	&	$PGL(2,17)$	&	$6$	&		$24,480$	&	$29,376_j$	\\
$24$	&	$19$	&	$PGL(2,23)$	&	$3$	&		$24,288$	&	$36,432_j$\\	
$26$	&	$20$	&	$PGL(2,25)$	&	$15$	&		$202,800$	&	$234,000_j$	\\
$26$	&	$21$	&	$PGL(2,25)$	&	$3$	&		$31,200$	&	$46,800_j$	\\
$28$	&	$22$	&	$PGL(2,27)$	&	$14$	&		$235,872$	&	$275,184_j$	\\
$30$	&	$24$	&	$PGL(2,29)$	&	$12$	&		$170,520$	&	$292,320_j$	\\
$32$	&	$25$	&	$PGL(2,31)$	&	$44$	&		$372,992$	&	$1,309,440_j$	\\
$33$	&	$27$	&	$P\Gamma L(2,32)$	&	$2$	&		$97,440$	&	$327,360_j$\\	
34 & 27 & $P\Gamma L(2,32)$	&	$15$	&	$2,127,840$	&	$2,455,200_c$\\	
$38$	&	$32$	&	$PGL(2,37)$	&	$6$	&		$202,464$	&	$303,696_j$	\\
$38$	&	$30$	&	$PGL(2,37)$	&	$129$	&	$1,265,400$	&	$6,529,464_c$	\\ 
$42$	&	$34$	&	$PGL(2,41)$	&	$73$	&	$888,729$	&	$5,028,240_c$	\\
$42$	&	$35$	&	$PGL(2,41)$	&	$28$	&		$206,640$	&	$1,928,640_j$	\\
$42$	&	$36$	&	$PGL(2,41)$	&	$6$	&		$206,640$	&	$413,280_j$	\\
$44$	&	$37$	&	$PGL(2,43)$	&	$25$	&		$413,280$	&	$1,986,600_j$	\\
$48$	&	$42$	&	$PGL(2,47)$	&	$4$	&		$207,552$	&	$415,104_j$	\\
$49$	&	$42$	&	$PGL(2,47)$	&	$14$	&	$207,552$	&	$1,452,864_c$	\\
$50$	&	$42$	&	$PGL(2,49)$	&	$43$	&	$207,552$	&	$5,056,800_c$	\\
$50$	&	$43$	&	$PGL(2,49)$	&	$18$	&		$207,552$	&	$2,116,800_j$	\\
$50$	&	$44$	&	$PGL(2,49)$	&	$4$	&		$103,776$	&	$470,400_j$	\\
$54$	&	$47$	&	$PGL(2,53)$	&	16	&		$1,339,416$	&	$2,381,184_j$	\\
$54$	&	$48$	&	$PGL(2,53)$	&	$3$	&		$297,648$	&	$446,472_j$	\\
$55$	&	$48$	&	$PGL(2,53)$	&	$10$	&	$297,648$	&	$1,488,240_c$	\\
$55$	&	$49$	&	$PGL(2,53)$	&	$3$	&		$297,648$	&	$446,472_j$	\\
$62$	&	$54$	&	$PGL(2,61)$	&	$38$	&	$821,280$	&	$8,622,960_c$	\\
$62$	&	$55$	&	$PGL(2,61)$	&	$6$	&	$821,280$	&	$1,361,520_c$	\\
$68$	&	$60$	&	$PGL(2,67)$	&	$29$	&	$821,280$	&	$8,720,184_c$	\\
$68$	&	$61$	&	$PGL(2,67)$	&	$5$	&	$524,160$	&	$1,503,480_c$	\\
$68$	&	$62$	&	$PGL(2,67)$	&	$2$	&	$524,160$	&	$601,392_j$	\\
$72$	&	$64$	&	$PGL(2,71)$	&	$17$	&	$888,729$	&	$6,083,280_c$	\\
$72$	&	$65$	&	$PGL(2,71)$	&	$4$	&	$357,840$	&	$1,431,360_c$	\\

\hline
\end{tabular}
\caption{New lower bounds for $M(n,d)$ using PAs generated by the coset method \cite{bereg2015constructing} and by ILP approximation described in Section \ref{ss:coset_search}. 
Columns: 
{\em Group} denotes starting group, 
{\em Num Cosets} denotes the number of cosets, 
\textbf{PREV} denotes the previously known bound, and 
\textbf{NEW} denotes the new bound.
Subscript legend: $c$-coset method (random coset search) \cite{bereg2015constructing};  $j$- ILP coset search (see Section \ref{ss:coset_search}).
}
\label{tb:coset}
\end{table}

\section{Summary of New Results } 
\label{s:results}

We have computed many new lower bounds for $M(n,d)$ for various $n$ and $d$ using our new techniques for partition and extension, namely:
sequential partition and extension (Corollary \ref{cor:iterative_pe_general} and Theorem \ref{th:sequential_pe}), 
parallel partition and extension (Theorem \ref{thm:rudi}, \ref{thm:parallel_pe_2}), 
and modified Kronecker product (Corollaries \ref{cor:kron1}, and \ref{cor:kron2}).
These techniques are described in Sections \ref{s:seq_pe}, \ref{s:par_pe}, and \ref{s:kron}. 
We have also used our earlier technique of simple partition and extension (see Theorem \ref{th:simple_pe} \cite{bms17}) to generate new lower bounds. 
The use of partition and extension requires, as input, a partition of positions and a separate partition of symbols. 
We have used our greedy and ILP algorithms, (described in Section \ref{ss:greedy_partition_sel} and \ref{ss:ilp_partition_sel}), to obtain fruitful partitions of positions for many $n$. 
We have described methods for generating good collections of PAs for our partition and extension techniques. (See Section \ref{ss:coset_search}).
 
We summarize all of our new lower bounds for $M(n,d)$, for $d<n-1$, in Table \ref{tb:n_d_all} for the sake of easy referencing. 
We also report experimental results and provide new tables of lower bounds for $M(n,n-1)$, for many integers $n<600$. 
Due to the large number of results, we show these separately from our results for $M(n,d)$, for $d<n-1$.  
Tables \ref{tb:n_1a} and \ref{tb:n_1b} show new lower bounds for $M(n,n-1)$ computed by our partition and extension techniques.
Columns \textbf{PREV} and \textbf{NEW} in Tables \ref{tb:n_1a} and \ref{tb:n_1b} denote the previous and the new bound, respectively.
The previous lower bounds are either from an earlier use of simple partition and extension \cite{bms17}, and are denoted with a subscript $P$, or are derived from known numbers of mutually orthogonal squares (MOLS) \cite{colbourn2006handbook}, and are denoted with a subscript $M$. 
It should be noted that there are other known lower bounds for $M(n,n-1)$, for integers $n$ not listed in Tables \ref{tb:n_1a} and \ref{tb:n_1b}. 
They have been previously reported in \cite{bms17, colbourn2006handbook}, and \cite{jani15}.
The subscripts in the \textbf{NEW} column indicate the method for generating either the partition of positions $\P$ or the collection of PAs $\M$.
Subscript $g$ indicates that $\P$ was computed using the greedy partition selection algorithm. (See Section \ref{ss:greedy_partition_sel}).
Subscript $i$ indicates that $\P$ was computed using the Integer Linear Program for partition selection. (See Section \ref{ss:ilp_partition_sel}).
Subscript $a$ indicates new bounds described in \cite{manuscript}. 
Subscript $k$ indicates the collection of PAs $\M$ is obtained by modified Kronecker product. (See Section \ref{s:kron}).

In conclusion, we offer the following conjecture about the relationship between $N(n)$, the known lower bound on the number of MOLS of side $n$ and $M(n,n-1)$:


\begin{equation}
\label{eq:conjecture_equation}
\textbf{Conjecture:   } M(n,n-1)\ge  (n-1)\cdot \min(\lfloor\sqrt{n-1}\rfloor,N(n-1)).
\end{equation}

\noindent This conjecture is based on our computational results. We verified that the conjecture is true for all $n \le 600$, except the four cases listed in Table \ref{tb:conjecture}. Although these may seem to be counterexamples for the conjecture, we believe the computed values can be improved, and therefore, the conjecture validated for all $n \le 600$.

\begin{table}[ht]

\centering
\vspace*{4mm}
\begin{tabular}{|c c c c |}
\hline
$n$ & $d$ & \bfseries Computed & \bfseries Conjectured \\
\hline
\hline
145  & 144   & 1,429  & 1,440 \\
177  & 176   & 2,214  & 2,288 \\
225  & 224   & 2,902  & 2,912 \\
254  & 253   & 3,027  & 3,036 \\
\hline
\end{tabular}
\caption{A comparison of experimentally computed $M(n,n-1)$ lower bounds to  conjectured lower bounds for four cases that (so far) do not agree with the conjecture. 
Column \textbf{Computed} shows known bounds obtained from techniques described in this paper.
Column \textbf{Conjectured} shows conjectured bounds from Equation \ref{eq:conjecture_equation}. }
\label{tb:conjecture}
\end{table}

\def\c{\ref{tb:coset}}


\begin{table}[htb]
\centering
\vspace*{2mm}
\begin{tabular}{| r r r r | r r r r |r r r r |}
\hline
$n$ & $d$ & \bfseries PREV & \bfseries NEW & $n$ & $d$ & \bfseries PREV & \bfseries NEW & $n$ & $d$ & \bfseries PREV & \bfseries NEW\\
\hline\hline
18 & 13 & 24,480 & 29,376$_{\ref{tb:coset}}$ & 53 & 47 & 148,824 & 470,400$_{\ref{tb:new_parallel_pe}}$ & 171 & 169 & 2,354 & 27,330$_{\ref{tb:n_2}}$ \\
24 & 19 & 24,288 & 36,432$_{\ref{tb:coset}}$ & 54 & 46 & 8,036,496 & 8,334,144$_{\ref{tb:coset}}$ & 175 & 173 & 2,354 & 19,792$_{\ref{tb:n_2}}$ \\
26 & 20 & 202,800 & 234,000$_{\ref{tb:coset}}$ & 54 & 47 & 1,339,416 & 2,381,184$_{\ref{tb:coset}}$ & 183 & 181 & 2,533 & 21,994$_{\ref{tb:n_2}}$ \\
26 & 21 & 31,200 & 46,800$_{\ref{tb:coset}}$ & 54 & 48 & 297,648 & 446,472$_{\ref{tb:coset}}$ & 195 & 193 & 2,758 & 25,022$_{\ref{tb:n_2}}$ \\
28 & 22 & 235,872 & 275,184$_{\ref{tb:coset}}$ & 55 & 48 & 297,648 & 1,488,240$_{\ref{tb:coset}}$ & 201 & 199 & 2,867 & 25,427$_{\ref{tb:n_2}}$ \\
30 & 24 & 170,520 & 292,320$_{\ref{tb:coset}}$ & 55 & 49 & 297,648 & 446,472$_{\ref{tb:coset}}$ & 213 & 211 & 3,170 & 30,288$_{\ref{tb:n_2}}$ \\
30 & 26 & 24,360 & 58,968$_{\ref{tb:new_parallel_pe}}$ & 55 & 53 & 423 & 2,461$_{\ref{tb:n_2}}$ & 225 & 223 & 3,421 & 32,728$_{\ref{tb:n_2}}$ \\
32 & 25 & 372,992 & 1,309,440$_{\ref{tb:coset}}$ & 56 & 50 & 205,320 & 446,472$_{\ref{tb:new_parallel_pe}}$ & 231 & 229 & 3,548 & 33,779$_{\ref{tb:n_2}}$ \\
33 & 27 & 97,440 & 327,360$_{\ref{tb:coset}}$ & 61 & 54 & 410,640 & 1,181,794$_{\ref{tb:n_d_coset}}$ & 235 & 233 & 3,625 & 35,001$_{\ref{tb:n_2}}$ \\
34 & 27 & 2,127,840 & 2,455,200$_{\ref{tb:coset}}$ & 62 & 54 & 821,280 & 8,622,960$_{\ref{tb:coset}}$ & 245 & 243 & 3,475 & 43,717$_{\ref{tb:n_2}}$ \\
34 & 32 & 192 & 945$_{\ref{tb:n_2}}$ & 62 & 55 & 821,280 & 1,361,520$_{\ref{tb:coset}}$ & 253 & 251 & 4,075 & 40,094$_{\ref{tb:n_2}}$ \\
38 & 30 & 1,265,400 & 6,529,464$_{\ref{tb:coset}}$ & 63 & 61 & 1,514 & 3,306$_{\ref{tb:n_2}}$ & 259 & 257 & 4,222 & 43,268$_{\ref{tb:n_2}}$ \\
38 & 32 & 202,464 & 303,696$_{\ref{tb:coset}}$ & 66 & 64 & 576 & 4,029$_{\ref{tb:n_2}}$ & 265 & 263 & 4,342 & 44,733$_{\ref{tb:n_2}}$ \\
39 & 37 & 255 & 1,301$_{\ref{tb:n_2}}$ & 68 & 60 & 821,280 & 8,720,184$_{\ref{tb:coset}}$ & 273 & 271 & 4,548 & 46,268$_{\ref{tb:n_2}}$ \\
40 & 34 & 68,880 & 287,437$_{\ref{tb:new_parallel_pe}}$ & 68 & 61 & 524,160 & 1,503,480$_{\ref{tb:coset}}$ & 279 & 277 & 4,701 & 49,243$_{\ref{tb:n_2}}$ \\
42 & 34 & 888,729 & 5,028,240$_{\ref{tb:coset}}$ & 68 & 62 & 524,160 & 601,392$_{\ref{tb:coset}}$ & 285 & 283 & 4,868 & 51,571$_{\ref{tb:n_2}}$ \\
42 & 35 & 206,640 & 1,928,640$_{\ref{tb:coset}}$ & 69 & 62 & 601,392 & 1,500,426$_{\ref{tb:n_d_coset}}$ & 291 & 289 & 5,202 & 80,385$_{\ref{tb:n_2}}$ \\
42 & 36 & 206,640 & 413,280$_{\ref{tb:coset}}$ & 69 & 67 & 594 & 3,965$_{\ref{tb:n_2}}$ & 295 & 293 & 5,088 & 54,572$_{\ref{tb:n_2}}$ \\
43 & 37 & 176,988 & 369,948$_{\ref{tb:n_d_coset}}$ & 70 & 63 & 524,160 & 1,503,462$_{\ref{tb:new_parallel_pe}}$ & 309 & 307 & 5,539 & 60,715$_{\ref{tb:n_2}}$ \\
44 & 37 & 413,280 & 1,986,600$_{\ref{tb:coset}}$ & 72 & 64 & 888,729 & 6,083,280$_{\ref{tb:coset}}$ & 315 & 313 & 5,634 & 60,952$_{\ref{tb:n_2}}$ \\
44 & 38 & 68,880 & 397,198$_{\ref{tb:new_parallel_pe}}$ & 72 & 65 & 357,840 & 1,431,360$_{\ref{tb:coset}}$ & 319 & 317 & 5,793 & 67,379$_{\ref{tb:n_2}}$ \\
45 & 39 & 103,776 & 413,280$_{\ref{tb:new_parallel_pe}}$ & 75 & 73 & 667 & 4,747$_{\ref{tb:n_2}}$ & 333 & 331 & 6,091 & 70,696$_{\ref{tb:n_2}}$ \\
45 & 43 & 270 & 1,726$_{\ref{tb:n_2}}$ & 85 & 83 & 812 & 6,116$_{\ref{tb:n_2}}$ & 339 & 337 & 6,280 & 69,485$_{\ref{tb:n_2}}$ \\
46 & 39 & 103,776 & 551,040$_{\ref{tb:new_parallel_pe}}$ & 91 & 89 & 902 & 6,709$_{\ref{tb:n_2}}$ & 345 & 343 & 5,205 & 89,272$_{\ref{tb:n_2}}$ \\
48 & 42 & 207,552 & 415,104$_{\ref{tb:coset}}$ & 99 & 97 & 1,017 & 8,206$_{\ref{tb:n_2}}$ & 351 & 349 & 6,642 & 76,195$_{\ref{tb:n_2}}$ \\
49 & 42 & 207,552 & 1,452,864$_{\ref{tb:coset}}$ & 105 & 103 & 1,119 & 9,239$_{\ref{tb:n_2}}$ & 355 & 353 & 6,746 & 77,215$_{\ref{tb:n_2}}$ \\
49 & 43 & 207,552 & 415,062$_{\ref{tb:n_d_coset}}$ & 111 & 109 & 1,187 & 9,990$_{\ref{tb:n_2}}$ & 363 & 361 & 7,220 & 125,709$_{\ref{tb:n_2}}$ \\
50 & 42 & 207,552 & 5,056,800$_{\ref{tb:coset}}$ & 115 & 113 & 1,277 & 11,142$_{\ref{tb:n_2}}$ & 369 & 367 & 7,108 & 83,418$_{\ref{tb:n_2}}$ \\
50 & 43 & 207,552 & 2,116,800$_{\ref{tb:coset}}$ & 123 & 121 & 1,452 & 13,996$_{\ref{tb:n_2}}$ & 375 & 373 & 7,298 & 87,434$_{\ref{tb:n_2}}$ \\
50 & 44 & 103,776 & 470,400$_{\ref{tb:coset}}$ & 133 & 131 & 1,554 & 11,604$_{\ref{tb:n_2}}$ & 385 & 383 & 7,428 & 90,213$_{\ref{tb:n_2}}$ \\
51 & 44 & 235,200 & 687,903$_{\ref{tb:n_d_coset}}$ & 141 & 139 & 1,723 & 13,522$_{\ref{tb:n_2}}$ & 391 & 389 & 7,690 & 90,991$_{\ref{tb:n_2}}$ \\
51 & 45 & 235,200 & 470,347$_{\ref{tb:n_d_coset}}$ & 153 & 151 & 1,923 & 16,118$_{\ref{tb:n_2}}$ & 411 & 409 & 8,240 & 104,098$_{\ref{tb:n_2}}$ \\
51 & 49 & 392 & 2,308$_{\ref{tb:n_2}}$ & 159 & 157 & 2,051 & 16,666$_{\ref{tb:n_2}}$ & 514 & 512 & 11,264 & 197,859$_{\ref{tb:n_2}}$ \\
52 & 46 & 148,824 & 470,397$_{\ref{tb:new_parallel_pe}}$ & 165 & 163 & 2,185 & 17,632$_{\ref{tb:n_2}}$ & 531 & 529 & 12,696 & 271,043$_{\ref{tb:n_2}}$ \\

\hline
\end{tabular}
\caption{An aggregated table showing our new lower bounds for $M(n,d)$, for  $n<550$ and $d<n-1$. The subscripts give the tables containing more details about the new results.}
\label{tb:n_d_all}
\end{table}


\newpage
\begin{table}[htb]

\centering
\vspace*{2mm}
\begin{tabular}{|r r r | r r r | r r r|}
\hline
$n$ & \bf Prev & \bf New & 
$n$ & \bf Prev & \bf New & 
$n$ & \bf Prev & \bf New \\
\hline\hline
26 & 133$_P$ & 150$_a$&132 & 1508$_P$ & 1572$_g$&212 & 3026$_P$ & 3172$_i$\\
28 & 140$_M$ & 144$_i$&134 & 804$_M$ & 931$_g$&214 & 1284$_M$ & 1491$_g$\\
30 & 170$_P$ & 173$_g$&138 & 1614$_P$ & 1696$_g$&218 & 1308$_M$ & 1736$_g$\\
33 & 183$_P$ & 192$_a$&140 & 1640$_P$ & 1726$_i$&220 & 1320$_M$ & 2190$_g$\\
34 & 136$_M$ & 165$_g$&142 & 852$_M$ & 987$_g$&222 & 1332$_M$ & 2652$_g$\\
38 & 254$_P$ & 255$_g$&145 & 1015$_M$ & 1429$_i$&224 & 3260$_P$ & 3475$_i$\\
42 & 282$_P$ & 286$_g$&146 & 876$_M$ & 1015$_g$&225 & 1800$_M$ & 2902$_i$\\
44 & 296$_P$ & 307$_g$&148 & 888$_M$ & 1029$_g$&226 & 1356$_M$ & 1800$_k$\\
46 & 184$_M$ & 270$_g$&150 & 1818$_P$ & 1905$_g$&228 & 3380$_P$ & 3482$_i$\\
50 & 300$_M$ & 392$_a$&152 & 1832$_P$ & 1946$_g$&230 & 3512$_P$ & 3567$_g$\\
51 & 255$_M$ & 300$_g$&155 & 1085$_M$ & 1232$_g$&234 & 3602$_P$ & 3673$_i$\\
54 & 408$_P$ & 423$_g$&156 & 936$_M$ & 1085$_g$&236 & 1416$_M$ & 1645$_g$\\
58 & 361$_P$ & 399$_i$&158 & 1922$_P$ & 2052$_g$&238 & 1428$_M$ & 1659$_g$\\
60 & 481$_P$ & 493$_g$&159 & 954$_M$ & 1106$_g$&240 & 3656$_P$ & 3803$_i$\\
62 & 478$_P$ & 519$_g$&161 & 1377$_P$ & 1440$_i$&242 & 3716$_P$ & 3864$_g$\\
65 & 455$_M$ & 576$_a$&162 & 972$_M$ & 1127$_g$&244 & 1464$_M$ & 3483$_a$\\
66 & 380$_P$ & 455$_g$&164 & 2042$_P$ & 2185$_g$&246 & 1476$_M$ & 1715$_g$\\
68 & 568$_P$ & 594$_g$&166 & 1153$_P$ & 1155$_g$&248 & 1736$_M$ & 2964$_g$\\
72 & 588$_P$ & 637$_g$&168 & 2070$_P$ & 2267$_g$&250 & 1500$_M$ & 1743$_g$\\
74 & 620$_P$ & 667$_g$&170 & 1020$_M$ & 2366$_a$&252 & 3932$_P$ & 4075$_g$\\
76 & 456$_M$ & 525$_g$&172 & 1032$_M$ & 1368$_k$&254 & 2286$_M$ & 3027$_i$\\
80 & 720$_M$ & 755$_g$&174 & 2316$_P$ & 2358$_i$&255 & 1785$_M$ & 2286$_g$\\
82 & 656$_M$ & 810$_a$&177 & 1593$_M$ & 2214$_i$&258 & 4066$_M$ & 4222$_g$\\
84 & 776$_P$ & 812$_g$&178 & 1068$_P$ & 1593$_g$&260 & 1560$_M$ & 3108$_g$\\
90 & 866$_P$ & 902$_g$&180 & 2404$_P$ & 2500$_g$&264 & 4228$_P$ & 4351$_i$\\
92 & 552$_M$ & 637$_g$&182 & 1092$_P$ & 2533$_g$&266 & 1862$_M$ & 2120$_g$\\
98 & 956$_P$ & 1017$_g$&186 & 1619$_P$ & 1665$_g$&268 & 1876$_M$ & 2670$_g$\\
102 & 1030$_P$ & 1101$_g$&188 & 1128$_M$ & 1870$_k$&270 & 4318$_M$ & 4521$_i$\\
104 & 1070$_P$ & 1119$_g$&190 & 1140$_M$ & 1512$_g$&272 & 4408$_M$ & 4575$_i$\\
106 & 636$_M$ & 735$_g$&192 & 2638$_P$ & 2767$_i$&274 & 1644$_M$ & 3873$_i$\\
108 & 1090$_P$ & 1175$_g$&194 & 2680$_P$ & 2803$_i$&276 & 2760$_M$ & 3575$_g$\\
110 & 1130$_P$ & 1199$_g$&196 & 1176$_M$ & 1365$_g$&278 & 4574$_M$ & 4767$_i$\\
114 & 1192$_P$ & 1277$_g$&198 & 2786$_P$ & 2870$_g$&280 & 1960$_M$ & 2511$_g$\\
116 & 696$_M$ & 805$_g$&200 & 2842$_P$ & 2867$_g$&282 & 4684$_M$ & 4863$_i$\\
118 & 708$_M$ & 936$_k$&202 & 1212$_M$ & 1407$_i$&284 & 4706$_P$ & 4916$_i$\\
122 & 732$_M$ & 1452$_a$&204 & 1224$_M$ & 1421$_i$&286 & 1716$_M$ & 3420$_g$\\
126 & 756$_M$ & 1221$_a$&206 & 1236$_M$ & 1640$_g$&290 & 1740$_M$ & 5202$_a$\\
129 & 903$_M$ & 1472$_a$&209 & 2299$_M$ & 2912$_g$&294 & 5068$_M$ & 5088$_g$\\
130 & 780$_M$ & 903$_g$&210 & 2100$_M$ & 2299$_g$&&& \\
\hline
\end{tabular}
\caption{New lower bounds for $M(n,n-1), n<300$. 
Subscript legend: 
\textit{\textbf{M}} - previous result from MOLS;
\textit{\textbf{P}} - previous result from simple partition and extension \cite{bms17};
\textit{\textbf{a}} - methods described in \cite{manuscript}; 
\textit{\textbf{g}} - partition of positions $\P$ from greedy partition selection algorithm (See Section \ref{ss:greedy_partition_sel}); 
\textit{\textbf{i}} - partition of positions $\P$ from ILP partition selection algorithm (see Section \ref{ss:ilp_partition_sel});  
\textit{\textbf{k}} - PA $\M$ from modified Kronecker product (see Section \ref{s:kron}).
}
\label{tb:n_1a}
\end{table}

\begin{table}[htb]
 \centering
\vspace*{2mm}
\begin{tabular}{|r r r | r r r | r r r|}
\hline
$n$ & \bf Prev & \bf New & 
$n$ & \bf Prev & \bf New & 
$n$ & \bf Prev & \bf New \\
\hline\hline
300 & 2100$_M$ & 3588$_k$&406 & 2842$_M$ & 3240$_k$&494 & 2964$_M$ & 7888$_k$\\
306 & 1836$_M$ & 4575$_i$&408 & 4070$_M$ & 6105$_i$&498 & 2988$_M$ & 7455$_k$\\
308 & 5360$_M$ & 5524$_i$&410 & 2870$_M$ & 8389$_i$&500 & 3500$_M$ & 11373$_i$\\
312 & 5436$_M$ & 5660$_i$&412 & 3296$_M$ & 5343$_g$&504 & 3527$_M$ & 11416$_i$\\
314 & 2198$_M$ & 5723$_i$&414 & 4140$_M$ & 4956$_g$&506 & 3036$_M$ & 7575$_i$\\
316 & 2212$_M$ & 3150$_g$&415 & 3735$_M$ & 4140$_g$&508 & 3556$_M$ & 7605$_i$\\
318 & 2226$_M$ & 5793$_g$&417 & 6255$_M$ & 7481$_i$&510 & 3060$_M$ & 11661$_i$\\
322 & 1932$_M$ & 4815$_g$&418 & 2926$_M$ & 6255$_i$&513 & 9234$_M$ & 11264$_a$\\
324 & 2592$_M$ & 5168$_k$&420 & 2940$_M$ & 8744$_i$&516 & 4128$_M$ & 7725$_g$\\
326 & 1956$_M$ & 3900$_k$&422 & 2954$_M$ & 8822$_i$&518 & 5170$_M$ & 6204$_g$\\
330 & 1980$_M$ & 2961$_g$&424 & 3384$_M$ & 6345$_i$&520 & 4160$_M$ & 7785$_g$\\
332 & 2324$_M$ & 6105$_i$&426 & 2556$_M$ & 6800$_k$&522 & 5220$_M$ & 11983$_i$\\
334 & 2338$_M$ & 2664$_k$&430 & 2580$_M$ & 3003$_g$&524 & 6288$_M$ & 12029$_i$\\
335 & 2010$_M$ & 2338$_g$&432 & 6480$_M$ & 9051$_i$&526 & 4208$_M$ & 7875$_g$\\
338 & 2028$_M$ & 6349$_i$&434 & 2608$_M$ & 9093$_i$&528 & 7920$_M$ & 8432$_k$\\
340 & 2040$_M$ & 2373$_g$&436 & 2616$_M$ & 6525$_i$&530 & 3710$_M$ & 12696$_a$\\
344 & 2408$_M$ & 6076$_a$&438 & 3066$_M$ & 7866$_k$&532 & 4256$_M$ & 7965$_i$\\
346 & 2076$_M$ & 2415$_g$&440 & 3159$_M$ & 9219$_i$&534 & 3738$_M$ & 6396$_k$\\
348 & 2088$_M$ & 6658$_i$&442 & 3528$_M$ & 6615$_i$&536 & 4288$_M$ & 8025$_i$\\
350 & 2800$_M$ & 6714$_i$&444 & 3108$_M$ & 9069$_g$&538 & 5380$_M$ & 8055$_i$\\
354 & 2124$_M$ & 6746$_g$&446 & 3122$_M$ & 5785$_i$&540 & 6480$_M$ & 8085$_k$\\
356 & 2492$_M$ & 3195$_g$&450 & 3220$_M$ & 9429$_g$&542 & 3794$_M$ & 12443$_i$\\
358 & 2148$_M$ & 3213$_g$&452 & 4510$_M$ & 6765$_i$&545 & 8704$_M$ & 9792$_k$\\
360 & 2520$_M$ & 6965$_i$&456 & 3192$_M$ & 6825$_i$&548 & 3836$_M$ & 12581$_i$\\
362 & 2172$_M$ & 7220$_a$&458 & 3206$_M$ & 9644$_g$&550 & 3850$_M$ & 4392$_k$\\
366 & 2196$_M$ & 2555$_g$&460 & 3220$_M$ & 7334$_k$&552 & 5220$_M$ & 9918$_k$\\
368 & 5520$_M$ & 7108$_g$&462 & 3234$_M$ & 10061$_i$&558 & 3906$_M$ & 13329$_i$\\
370 & 2952$_M$ & 5535$_i$&464 & 6960$_M$ & 10162$_i$&561 & 3927$_M$ & 8400$_i$\\
372 & 2604$_M$ & 5565$_i$&466 & 3262$_M$ & 6975$_i$&564 & 3948$_M$ & 13500$_i$\\
374 & 2618$_M$ & 7381$_i$&468 & 3744$_M$ & 10253$_i$&566 & 3396$_M$ & 3955$_g$\\
376 & 2632$_M$ & 5625$_i$&470 & 3290$_M$ & 3752$_g$&570 & 3420$_M$ & 13654$_i$\\
378 & 4524$_M$ & 4901$_i$&472 & 3304$_M$ & 7065$_i$&572 & 4004$_M$ & 13699$_i$\\
380 & 2660$_M$ & 7556$_i$&474 & 4740$_M$ & 7095$_k$&576 & 4608$_M$ & 12650$_k$\\
382 & 2674$_M$ & 4572$_i$&476 & 3332$_M$ & 8550$_k$&578 & 4046$_M$ & 13848$_i$\\
384 & 5760$_M$ & 7692$_i$&478 & 3816$_M$ & 7155$_i$&582 & 4074$_M$ & 4648$_g$\\
386 & 2702$_M$ & 5775$_i$&480 & 7200$_M$ & 10538$_i$&584 & 4088$_M$ & 5830$_i$\\
388 & 3096$_M$ & 5805$_i$&482 & 5772$_M$ & 7215$_i$&586 & 4102$_M$ & 4680$_g$\\
390 & 2730$_M$ & 7897$_i$&484 & 3872$_M$ & 7245$_i$&588 & 4116$_M$ & 14088$_i$\\
392 & 2744$_M$ & 6256$_k$&485 & 3395$_M$ & 3872$_g$&590 & 10030$_M$ & 10602$_k$\\
398 & 2786$_M$ & 7940$_i$&486 & 2916$_M$ & 3395$_g$&591 & 4137$_M$ & 10030$_i$\\
402 & 2814$_M$ & 8020$_i$&488 & 3416$_M$ & 10714$_i$&594 & 4752$_M$ & 14232$_i$\\
404 & 4836$_M$ & 6045$_k$&490 & 2940$_M$ & 7335$_g$&596 & 4172$_M$ & 8925$_i$\\
405 & 3240$_M$ & 4444$_g$&492 & 2952$_M$ & 10802$_i$&600 & 8400$_M$ & 14828$_i$\\
\hline
\end{tabular}
\caption{New lower bounds for $M(n,n-1), (300\le n\le 600$).
Refer to Table \ref{tb:n_1a} for an explanation of the subscripts.
}
\label{tb:n_1b}
\end{table}

\clearpage

\section{Conclusion} \label{s:conclusion}
We have presented new computational methods for the partition and extension technique that produce several competitive new lower bounds on $M(n,d)$ for various integers $n$ and $d$. 
We described sequential partition and extension, which is very useful for improving lower bounds.
The techniques of rudimentary and general parallel partition and extension introduce several new symbols simultaneously. 
They are different extension strategies that provide many improved lower bounds for $M(n,d)$. 
We have given several new techniques and experimental results that provide new lower bounds for $M(n,n-1)$, for many integers $n<600$.

\section*{Acknowledgement}

We would like to thank Zachary Hancock and Alexander Wong, who separately wrote programs to compute some of our improved lower bounds.

\end{document}